\documentclass[journal,twoside,web]{ieeecolor}

\usepackage{etoolbox}
\makeatletter
\@ifundefined{color@begingroup}%
{\let\color@begingroup\relax
\let\color@endgroup\relax}{}%
\def\fix@ieeecolor@hbox#1{%
\hbox{\color@begingroup#1\color@endgroup}}
\patchcmd\@makecaption{\hbox}{\fix@ieeecolor@hbox}{}{\FAILED}
\patchcmd\@makecaption{\hbox}{\fix@ieeecolor@hbox}{}{\FAILED}

\usepackage{generic}
\usepackage{cite}

\usepackage{hyperref}
\hypersetup{hidelinks=true}
\usepackage{textcomp}
\usepackage{enumerate}

\usepackage{amsmath,amssymb,amsfonts}
\usepackage{bm}
\usepackage{mathbbol,dsfont}
\usepackage{algorithmic,algorithm}
\usepackage{graphicx}
\usepackage{textcomp}
\usepackage{epstopdf}
\usepackage{subfigure}
\allowdisplaybreaks[4]
\makeatletter
\newenvironment{breakablealgorithm}
{
		\begin{center}
			\refstepcounter{algorithm}
			\hrule height.8pt depth0pt \kern2pt
			\renewcommand{\caption}[2][\relax]{
				{\raggedright\textbf{\ALG@name~\thealgorithm} ##2\par}%
				\ifx\relax##1\relax 
				\addcontentsline{loa}{algorithm}{\protect\numberline{\thealgorithm}##2}%
				\else 
				\addcontentsline{loa}{algorithm}{\protect\numberline{\thealgorithm}##1}%
				\fi
				\kern2pt\hrule\kern2pt
			}
		}{
		\kern2pt\hrule\relax
	\end{center}
}
\makeatother

\newtheorem{remark}{Remark}
\newtheorem{assumption}{Assumption}
\newtheorem{example}{Example}
\newtheorem{definition}{Definition}
\newtheorem{lemma}{Lemma}
\newtheorem{theorem}{Theorem}

\def\BibTeX{{\rm B\kern-.05em{\sc i\kern-.025em b}\kern-.08em
    T\kern-.1667em\lower.7ex\hbox{E}\kern-.125emX}}
\markboth{\hskip25pc IEEE TRANSACTIONS ON AUTOMATIC CONTROL}
{An \MakeLowercase{\textit{et al.}}: One-bit consensus of controllable linear multi-agent systems with communication noises}
\begin{document}
\title{One-bit consensus of controllable linear multi-agent systems with communication noises}
\author{Ru An, Ying Wang, \IEEEmembership{Member, IEEE}, Yanlong Zhao, \IEEEmembership{Senior Member, IEEE}, and Ji-Feng Zhang, \IEEEmembership{Fellow, IEEE}
\thanks{This work was supported by National Natural Science Foundation of China under Grants 62025306, 62303452, 62433020, and T2293770, CAS Project for Young Scientists in Basic Research under Grant YSBR-008, China Postdoctoral Science Foundation under Grant 2022M720159. \textit{(Corresponding author: Ying Wang.)}}
\thanks{Ru An and Yanlong Zhao are with the Key Laboratory of Systems and Control,  Academy of Mathematics and Systems Science, Chinese Academy of Sciences, Beijing 100190, P. R. China, and also with the School of Mathematics Sciences, University of Chinese Academy of Sciences, Beijing 100149, P. R. China.
(e-mail: anru@amss.ac.cn; ylzhao@amss.ac.cn).}
\thanks{Ying Wang is with the Key Laboratory of Systems and Control,  Academy of Mathematics and Systems Science, Chinese Academy of Sciences, Beijing 100190, P. R. China, and also with the Division of Decision and Control Systems, Royal Institute of Technology, Stockholm 11428, Sweden.
(e-mail: wangying96@amss.ac.cn).}
\thanks{Ji-Feng Zhang is with the School of Automation and Electrical Engineering, Zhongyuan University of Technology, Zhengzhou 450007, Henan Province, P. R. China, and also with the Key Laboratory of Systems and Control,  Academy of Mathematics and Systems Science, Chinese Academy of Sciences, Beijing 100190, P. R. China.
 (e-mail: jif@iss.ac.cn). 
}}

\maketitle

\begin{abstract}
This paper addresses the one-bit consensus of controllable linear multi-agent systems (MASs) with communication noises.
A consensus algorithm consisting of a communication protocol and a consensus controller is designed.
The communication protocol introduces a linear compression encoding function to achieve a one-bit data rate, thereby saving communication costs.
The consensus controller with a stabilization term and a consensus term is proposed to ensure the consensus of a potentially unstable but controllable MAS.
Specifically, in the consensus term, we adopt an estimation method to overcome the information loss caused by one-bit communications and a decay step to attenuate the effect of communication noise.
Two combined Lyapunov functions are constructed to overcome the difficulty arising from the coupling of the control and estimation.
By establishing similar iterative structures of these two functions, this paper shows that the MAS can achieve consensus in the mean square sense at the rate of the reciprocal of the iteration number under the case with a connected fixed topology.
Moreover, the theoretical results are generalized to the case with jointly connected Markovian switching topologies by establishing a certain equivalence relationship between the Markovian switching topologies and a fixed topology.
Two simulation examples are given to validate the algorithm.
\end{abstract}

\begin{IEEEkeywords}
consensus, one-bit data rate, communication noise,  controllable linear MASs, Markovian switching topologies
\end{IEEEkeywords}

\section{Introduction}
\label{sec:introduction}

\subsection{Background and motivation}
\IEEEPARstart{O}{ver} the past two decades, the consensus control of multi-agent systems (MASs) has been playing an increasingly important role in various fields, including engineering, communication, and biology.
For example, in the engineering field, consensus control plays a crucial role in applications such as attitude alignment of satellites, rendezvous in space, and cooperative control of unmanned aerial vehicles \cite{tanner-flocking-TAC2007,fang-distributedformation-TAC2023,tanner-stable-CDC2003}. 
In the communication field, it has been applied to problems like reputation consensus among mobile nodes \cite{liu-reputation-IEEEWCN2003} and load balancing in internet data centers \cite{loon-loadbalance-IEAC2016}. 
In the biology field, consensus mechanisms are essential in understanding phenomena such as the aggregation behavior of animals \cite{conradt-consensus-TEE2005} and the synchronous firing of biological oscillators \cite{mirollo-biologicaloscillators-SIAM-1990}.

With the increasing application of the consensus control across various fields, theoretical research on this topic has expanded significantly, such as \cite{ma-necessary-TAC2010,you-network-TAC2011,su-stability-TAC2011,gu-consensusability-TAC2011,wang-optimal-SCIS2024,ren-mas-ACC2005,huang-stochastic-ACC2008, li-mean-Auto2009, li-consensus-TAC2010, cheng-mean-TAC2014, wang-seeking-TAC2015, cheng-convergence-TAC2016, wang-consensus-IJSS2015}.
Consensus controllers are typically formulated as the sum of the state differences between an agent and its neighbors, with the addition of a step coefficient.
According to existing literature, the choice of the step coefficient usually depends on the MAS and significantly impacts the consensus rate. 
A constant step coefficient is often used to stabilize unstable systems \cite{ma-necessary-TAC2010,you-network-TAC2011,su-stability-TAC2011,gu-consensusability-TAC2011,wang-optimal-SCIS2024}.
Meanwhile, a decay step coefficient is commonly employed to attenuate the impact of noises, which is a widely adopted method in practice  \cite{ren-mas-ACC2005,huang-stochastic-ACC2008, li-mean-Auto2009, li-consensus-TAC2010, cheng-mean-TAC2014, wang-seeking-TAC2015, cheng-convergence-TAC2016, wang-consensus-IJSS2015}.

Due to the advantages of low communication costs and robustness, digital signals have become mainstream.
It is known that data communication generally consumes significantly more energy and incurs higher costs compared to data processing \cite{pottie-wireless-ACM2000}. 
These two factors make finite-bit data transmission between agents preferable and prevalent.
Motivated by the advantages and challenges associated with finite-bit data, consensus control with finite-bit communications has attracted increasing attention in various fields.
 
%

\subsection{Related literature}
In fact, significant research has been conducted on the consensus control of MASs with finite-bit communications. 
Quantizer plays a crucial role in converting accurate communications into finite-bit communications in practical applications. 
Consequently, numerous studies have investigated consensus control using different quantizers, such as integer quantizers, logarithmic quantizers, uniform quantizers, binary-valued quantizers, and others.
For example, Kashyap et al. in \cite{kashyap-quantized-Auto2007} and Chamie et al. in \cite{chamie-design-TAC2016} considered the consensus control with integer quantized communications.
Carli et al. in \cite{carli-communication-Auto2008} introduced a logarithmic quantizer in the consensus control to improve the control performance. 
Li et al. in \cite{li-distributed-TAC2012} proved that the MAS can achieve consensus with finite bits under a uniform quantizer in the noise-free case.
Meng et al. in \cite{meng-finite-SIAM2017} extended \cite{li-distributed-TAC2012} into a high-order system.
Moreover, due to the significant reduction in communication costs offered by the binary-valued quantizer, the consensus control based on binary-valued communications has gained considerable research attention \cite{zhao-consensus-TAC2019,wang-consensus-TAC2020,wang-consensus-IJRNC2020,an-consensus-IEEETCNS2024}.
To be specific, Zhao et al. in \cite{zhao-consensus-TAC2019} constructed a two-time-scale consensus algorithm and proved that the MAS can achieve mean square consensus with communication noises. 
Wang et al. in \cite{wang-consensus-TAC2020} proposed a consensus algorithm based on a recursive projection identification algorithm and obtained a mean square consensus rate, faster than that given by \cite{zhao-consensus-TAC2019}.
Wang et al. in \cite{wang-consensus-IJRNC2020} and An et al. in \cite{an-consensus-IEEETCNS2024} extended the system of \cite{wang-consensus-TAC2020} to the high-order MAS under fixed and switching topologies, respectively.
It is worth noting that the number of bits required for communication in the above consensus control depends not only on the choice of quantizer but also on the dimension of the agent's state. 
This implies that a one-bit data rate can be achieved only in first-order systems with binary-valued communications, 
as shown in \cite{zhao-consensus-TAC2019}--\!\!\cite{wang-consensus-TAC2020}, but the one-bit data rate cannot be achieved in \cite{kashyap-quantized-Auto2007, chamie-design-TAC2016, carli-communication-Auto2008, li-distributed-TAC2012, meng-finite-SIAM2017, wang-consensus-IJRNC2020, an-consensus-IEEETCNS2024}.

In a communication network, the connectivity between agents significantly impacts the system's cooperation effectiveness. Most existing consensus research focuses on the fixed topology (such as \cite{kashyap-quantized-Auto2007,chamie-design-TAC2016,carli-communication-Auto2008,li-distributed-TAC2012,zhao-consensus-TAC2019,wang-consensus-TAC2020}), while only a small portion addresses simpler cases with the switching topologies.
Even for the switching topologies,  certain restrictions remain.
For example, the communication network is assumed to be periodically connected in \cite{li-quantized-SCL2014} and \cite{meng-output-IJRNC2016}, and modeled as an i.i.d. process in \cite{an-consensus-IEEETCNS2024} and \cite{hu-consensus-SCIS2022}.
However, in practical applications, factors such as packet dropouts, environmental dynamics, link failures, and high-level scheduling commands lead to network topologies that switch with inherent correlations. 
Consequently, it becomes essential to model topology switching as a Markov process, which effectively captures the inherent correlations. 
Therefore, there is a need to study consensus control with finite-bit communications under both fixed topology and Markovian switching topologies.


\subsection{Main contribution}

In this paper, we consider the one-bit consensus of controllable linear MASs with communication noises for both fixed topology and Markovian switching topology cases.
The main contributions of this paper are as follows:

\begin{itemize}

\item The system model studied in this paper is the most general linear system model of consensus control with finite-bit communications, requiring only controllability.
Compared with previous studies of consensus with finite-bit communications \cite{wang-consensus-IJRNC2020}--\!\!\cite{an-consensus-IEEETCNS2024}, this paper removes the strict constraints of orthogonality and full row rank of the coefficient matrices, greatly expanding the applicability of the system model. 
Besides, this paper realizes the control of a high-order system with a first-order input, which simplifies the control process.
To the best of the author's knowledge, it is the first consensus study on high-order systems under one-bit communications.
\item 
A consensus algorithm consisting of a communication protocol and a consensus controller is proposed to achieve consensus with one-bit communications.
In the communication protocol, a linear compression encoding function is introduced to compress state vectors into scalars, achieving a one-bit data rate and reducing communication costs compared to \cite{meng-finite-SIAM2017, wang-consensus-IJRNC2020,an-consensus-IEEETCNS2024}.
The consensus controller includes a stabilization term to ensure the stability of MASs and a consensus term with a decay step to attenuate the effect of stochastic communication noises.
To overcome the information loss caused by the one-bit data rate, an estimation method is used in the consensus term to infer the neighbors' states from one-bit communications.
\item
The consensus properties of the algorithm are established under the connected fixed topology case.
Two combined Lyapunov functions are constructed to overcome the difficulty arising from the coupling of the control and estimation.
By establishing similar iterative structures of these two functions, this paper shows that the compressed states of MAS achieve consensus at a rate of $O(\frac{1}{t})$. 
Through establishing the consensus equivalence between the original and compressed states, it is shown that the MAS can achieve consensus in the mean square sense at a rate of $O(\frac{1}{t})$.

\item
The theoretical results are generalized to the case with jointly connected Markovian switching topologies by establishing a certain equivalence relationship between the Markovian switching topologies and a fixed topology.
To be specific, the MAS also can achieve consensus in the mean square sense at a rate of $O(\frac{1}{t})$ under jointly connected Markovian switching topologies with appropriate step coefficients of estimation and controller.
It is worth noting that the step coefficients depend on the switching probability of the Markovian switching topologies.


%

\end{itemize}

The remainder of this paper is organized as follows:
Section  \ref{sec2} gives the preliminaries of basic concepts and graph theory and describes the consensus problem. Section \ref{sec3}  introduces the consensus algorithm. The main results of this paper are presented in Section \ref{fixed}, which includes the main convergence and consensus results. 
Section \ref{switching} generalizes the theoretical results in Section \ref{fixed} to Markovian switching topologies.
Section \ref{sec5} gives two simulation examples for the fixed and switching topology case. Section \ref{sec-conclusion} is the summary and prospect of this paper.

\section{Preliminaries and problem formulation}\label{sec2}
In this section, we first give some basic concepts in matrix and graph theory, and subsequently formulate the system model and the consensus problems investigated in this paper.

\subsection{Basic concept}
Let $\mathds{R}$ denote the set of real numbers, and $\mathds{N}$ denote the set of natural numbers.
We use $x\in\mathds{R}^{n}$ and $A\in\mathds{R}^{n\times m}$ to denote $n$-dimensional column vector and $n\times m$-dimensional real matrix, respectively.
Denote $\vec{0}_{m}=[0,\ldots,0]^{T}\in \mathds{R}^{m}$ and $\vec{1}_{m}=[1,\ldots,1]^{T}\in \mathds{R}^{m}$, where the notation $T$ denotes the transpose operator.
Denote $|a|$ as the absolute value of a scalar.
Moreover, we denote $\|x\|=\|x\|_{2}$ and $\|A\|=\sqrt{\lambda_{\max}(A^{T}A)}$ as the Euclidean norm of vector and matrix, respectively,  where $\lambda_{\max}(\cdot) $ denotes the largest eigenvalue of the matrix.
Correspondingly, $\lambda_{\min}(\cdot)$ denotes the smallest eigenvalue of the matrix.
For symmetric matrices $A\in\mathds{R}^{m\times m}$ and $B\in \mathds{R}^{m\times m}$, $A\ge B$ represents that $A-B$ is a positive semi-definite matrix.
diag$\{\cdot\}$ denotes the block-diagonal matrix. 
And, for arbitrary matrices $A=[a_{ij}]\in \mathds{R}^{m\times n}$ and $B\in \mathds{R}^{p\times q}$, the Kronecker product of $A$ and $B$ is defined as
\begin{equation}\nonumber
A\otimes B \triangleq
\begin{bmatrix}
a_{11}B & a_{12}B & \cdots & a_{1n}B\\
a_{21}B & a_{22}B & \cdots & a_{2n}B\\
\vdots      & \vdots     &            &\vdots     \\
a_{m1}B & a_{m2}B & \cdots & a_{mn}B  
\end{bmatrix}
\in \mathds{R}^{mp\times nq}.
\end{equation}

In addition, the mathematical expectation is denoted as $E[\cdot]$. $\mathds{D}$ denotes the one-step forward operator, i.e., let $x(k)$ be a sequence of variables, then $\mathds{D}x(k)=x(k+1)$.

\subsection{Graph theory}
In order to describe the relation between agents, we introduce a topology $G=(N_{0}, E)$, where $N_{0}= \{1, \ldots, N\}$ is the set of agents, and  $E\subseteq N_{0}\times N_{0}$ is the ordered edges set of the topology $G$. 
Denote $N_{i}$  as the neighbor set of the agent $i$ in the topology $G$.  
Denote the adjacency matrix  of the $N$ agents as $A_{G}$, where each element of the matrix $A_{G}$ satisfies $a_{ij}=1$ if $(i,j)\in E$, else  $a_{ij}=0$. 
Denote the degree matrix of the $N$ agents as $D$, where $D=\mathrm{diag}\{d_{1}, d_{2}, \ldots, d_{N}\}$ and $d_{i}$ is the degree of agent $i$. Then, the Laplace matrix of $G$ is defined as $L = D - A_{G}$.
Denote $d_{\max}=\max_{1\le i\le N}{d_{i}}$ and $d=\sum_{i=1}^{N}d_{i}$.


\subsection{Problem formulation}

Consider the following MAS with $N$ agents at time $t$:
\begin{equation}\label{MAS}
x_{i}(t+1)=Ax_{i}(t)+Bu_{i}(t),\quad i=1, \ldots, N,
\end{equation}
where $A \in \mathds{R}^{n\times n}$ and $B\in \mathds{R}^{n} $ are constant matrices, $x_{i}(t)=[x_{i1}(t),\ldots,x_{in}(t)]^{T}\in \mathds{R}^{n}$ is the state of the agent $i$ at time $t$, and $u_{i}(t)\in \mathds{R}$ is the control input of the agent $i$ at time $t$.

Agent $i$ receives one-bit information affected by communication noise from its neighbor $j$:\begin{equation}\label{binary}
\begin{cases}
y_{ij}(t)=g(x_{j}(t))+d_{ij}(t),
\\
s_{ij}(t)=\mathbb{1}_{\{y_{ij}(t)\le c_{ij}\}},
\end{cases}
\end{equation}
where the agent $j$ is the neighbor of the agent $i$ at time $t$, 
$g(\cdot): \mathds{R}^{n}\to\mathds{R}$ is a compression encoding function to be designed,
$d_{ij}(t) \in \mathds{R}$ is the communicating noise, $y_{ij}(t) \in \mathds{R}$ is the virtual output, $c_{ij}\in \mathds{R}$ is the threshold value,  $s_{ij}(t)$ is the one-bit information that the agent $i$ collects from its neighbor $j$, $\mathds{1}_{\{ a \le c \}}$ is the indicator function  defined as:
\begin{equation}\nonumber
\mathbb{1}_{\{a\le c\}}=
\begin{cases}
1, \quad  a\le c,\\
0, \quad a>c.
\end{cases}
\end{equation}

\begin{remark}\label{compression-function}
The compression encoding function $g(\cdot)$ is a common tool to save communication costs in the communication field \cite{lotfi-compressed-IEEETSP2020, hayakawa-asymptotic-IEEETSP2022, kafle-noise-IEEETSP2024}, as it provides savings of scarce network resources such as communication bandwidth, transmit/processing power, and storage.
In contrast with \cite{meng-finite-SIAM2017, wang-consensus-IJRNC2020, an-consensus-IEEETCNS2024}, the compression encoding function $g(\cdot)$ maps $n$-dimensional vectors to scalars to realize one-bit data rate communication, but agents in \cite{meng-finite-SIAM2017, wang-consensus-IJRNC2020, an-consensus-IEEETCNS2024} need to transmit finite-bit data depending on the dimension of states.

\end{remark}

To proceed with our analysis, we introduce two assumptions about the system model and the communication noises.

\begin{assumption}\label{assm-AB}
The linear system $(A, B)$ is controllable.
\end{assumption}

\begin{assumption}\label{assm-d}
The noises $\{d_{ij}(t),i,j\in N_{0},t\in\mathds{N}\}$ are independent and identically distributed as $N(0,\delta^2)$ for indices $i,j$ and time $t$, with known distribution function $F(\cdot)$ and density function $f(\cdot)$.
\end{assumption}

\begin{remark}
Compared with the existing consensus works based on finite-bit communications, the condition of the system model in this paper is the weakest, requiring only controllability, as stated in Assumption \ref{assm-AB}.
To be specific, in \cite{meng-finite-SIAM2017}, besides the requirement of controllability for $(A, B)$, there were restrictions on the eigenvalues of the coefficient matrix $A$. 
 In \cite{wang-consensus-IJRNC2020}, orthogonality constraints on the coefficient matrices were required.
 In \cite{an-consensus-IEEETCNS2024}, the system model is assumed to be neutrally stable.
\end{remark}



\begin{definition}\label{def}
(\!\cite{cheng-convergence-TAC2016} \textbf{Consensus}).
Denote  $x_{i}(t)$ as the state of the agent $i$ at time $t$, where $i=1,\ldots,N$. 
For all agents, if $x_{i}(t)$, $i=1,\ldots,N,$ satisfy:

(1) $E[\|x_{i}(t)\|^{2}]<\infty$, $i=1,\ldots,N$;

(2) $\lim_{t\to \infty}E[\|x_{i}(t)-\bar{x}(t)\|^{2}]=0$,  $i\in\{1,\ldots, N\}$, where $\bar{x}(t)=\frac{1}{N}\sum_{i=1}^{N}x_{i}(t)$.

Then, the agents are said to achieve consensus. 
\end{definition}

\textit{Problem:} The goal of this paper is to design a controller $u_{i}(t)$ and the communication mechanism $g(\cdot)$ based on one-bit communications $s_{ij}(t)$ to achieve consensus of the controllable MAS \eqref{MAS}-\eqref{binary}.

\section{Algorithm design}\label{sec3}

This section focuses on designing a consensus algorithm that enables linear systems to reach a consensus with communication noises and a one-bit data rate constraint. 

In order to simplify the design of the algorithm, this paper considers the MAS \eqref{MAS}-\eqref{binary} in  the Brunovsky canonical form, where 
\begin{equation}\label{canonical}
A=\tilde{A}\triangleq
\begin{bmatrix}
0 &1 &\cdots &0\\
\vdots & \vdots &\ddots &\vdots\\
0 & 0 & \cdots & 1\\
a_{1} & a_{2}&\cdots &a_{n}
\end{bmatrix}\in \mathds{R}^{n\times n},
B=\tilde{B}\triangleq
\begin{bmatrix}
0\\
\vdots\\
0\\
1
\end{bmatrix}\in \mathds{R}^{n}.
\end{equation}

\begin{remark}\label{rem-can}
It can be seen that for any controllable system \eqref{MAS}, there exists a nonsingular matrix $P$ that can transform \eqref{MAS} into this Brunovsky canonical form \cite{brunovsky-classification-K1970}, i.e.,
 $PAP^{-1}=\tilde{A}$ and $PB=\tilde{B}$. Let $\tilde{x}_{i}(t)=Px_{i}(t)$. Then, \eqref{MAS} is transformed into $\tilde{x}_{i}(t+1)=\tilde{A}\tilde{x}_{i}(t)+\tilde{B}u_{i}(t)$, which is equivalent to \eqref{MAS}-\eqref{canonical}.
\end{remark}

Next, we introduce the design idea of the consensus algorithm.
The algorithm consists of a communication protocol and a consensus controller.
First, a communication protocol including a simple and effective linear compression encoding function is developed, greatly saving communication costs. 
Then,  inspired by \cite{cheng-mean-TAC2014, wang-seeking-TAC2015, cheng-convergence-TAC2016, wang-consensus-IJSS2015}, a controller is designed with a stabilization term ensuring the stability of MASs and a consensus term reducing state differences between agents.
Additionally, in the consensus term, an estimation method is used to infer the neighbors' states from one-bit communications,  and a decay step is adopted to attenuate the effect of noise. 

Based on the above idea, we propose a consensus algorithm involving both communication protocol and consensus controller in Algorithm 1.
%

\begin{breakablealgorithm}
\begin{algorithmic}
\caption{}
\label{algorithm1}

\textbf{Initiation:} Denote the integer $t_{0}(>0)$ as the initial time. $x_{i}(t_{0}+1)=x_{i}^{0}$ is the initial state of the agent $i$, $\hat{z}_{ij}(t_{0})=\hat{z}_{ij}^{0}$ is the initial estimate of the agent $j$  estimated by the agent $i$. 
 Then, for $t\ge t_{0}+1,$ the algorithm is as follows.

\textbf{Step 1: Communication protocol:}

Denote the compression encoding function $g(x_{j}(t))=K_{2}x_{j}(t)$, where $K_{2}=[b_{1}, b_{2}, \ldots,b_{n-1},1]\in \mathds{R}^{1\times n}$, where $b_{1}$, $b_{2}$, $\ldots$, $b_{n-1}$ are the compression coefficients to be designed.

\textbf{Step 2: Consensus controller:}

 \textbf{Step 2.1 Estimation:} each agent $i$ estimates the compressed state $K_{2}x_{j}(t)$ of its neighbor agent $j$ at time $t$ by
\begin{equation}\label{estimate}
\hat{z}_{ij}(t)=\Pi_{M}\Big\{ \hat{z}_{ij}(t-1)+\frac{\beta}{t}\Big(F\big(c_{ij}-\hat{z}_{ij}(t-1)\big)-s_{ij}(t)\Big) \Big\},
\end{equation}
where $j\in N_{i}$, $\beta$ is the step coefficient for estimation updating, $\Pi_{M}(\cdot)$ is a projection mapping defined as
\begin{equation}\label{piM}
\Pi_{M}(\zeta)=\mathop{\arg\min}\limits_{|\xi|\le M}| \zeta - \xi |,\forall \zeta \in \mathds{R},
\end{equation}
where $M$ is the upper bound of $|K_{2}x_{i}^{0}|$ and $|\hat{z}_{ij}^{0}|$.

 \textbf{Step 2.2 Controller:} based on these estimates, each agent $i$ designs its control by
\begin{equation}\label{control}
u_i(t)=K_{1}x_{i}(t)+\frac{\gamma}{t+1}\sum_{j \in N_{i}}\big(\hat{z}_{ij}(t)-K_{2}x_{i}(t)\big),
\end{equation}
where $K_{1}=[-a_{1}+b_{1}, -a_{2}+b_{2}-b_{1}, \ldots, -a_{n-1}+b_{n-1}-b_{n-2}, -a_{n}-b_{n-1}+1]\in \mathds{R}^{1\times n}$ and
$\gamma$ is the step coefficient of the controller that needs to be designed.

\end{algorithmic}
\end{breakablealgorithm}

\begin{remark}\label{g-difficulty}
It is worth noting that although the compression encoding function saves communication costs, it complicates the recovery of the original states $x_{j}(t)$ from the compressed states $K_{2}x_{j}(t)$.
Since the compression encoding function $g(x_{j}(t))=K_{2}x_{j}(t)$ is linear, the compressed states $K_{2}x_{j}(t)$ are irreversible. 
The key issue to be explored is how to design the linear compression coefficient $K_{2}$ in the communication protocol to ensure that the consensus of original and compressed states is equivalent, thereby achieving consensus of the original states through the consensus of the compressed states.
\end{remark}

\begin{remark}
In the first step of the consensus controller in Algorithm \ref{algorithm1}, the upper bound is a kind of common global information determined by the compressed states' initial values and estimates.
For example, denote $M_{1}$ as the upper bound for the norm of these initial values, i.e., $M_{1}\ge\|x_{i}^{0}\| $, $M_{1}\ge |z_{ij}^{0}|$. Then, denote $M=(1+\sum_{i=1}^{N}|b_{i}|)M_{1}$, it is obviously that
$|K_{2}x_{i}^{0}|\le M$ and $ |z_{ij}^{0}|\le M$.
\end{remark}

\begin{remark}\label{projection}
The estimation procedure adopted in Algorithm \ref{algorithm1} is widely used in the identification problems with one-bit data, which is a recursive projection identification algorithm, such as \cite{wang-consensus-TAC2020, wang-consensus-IJRNC2020, an-consensus-IEEETCNS2024}, and \cite{guo-recursive-Auto2013}. 
The projection operator $\Pi_{M}$ is used to ensure the boundedness of the estimates and compressed states to obtain the scaling factor for convergence analysis.
Besides, as \cite[Proposition 6]{guo-recursive-Auto2013} points out, the projection mapping given by \eqref{piM} has the following property:
$$| \Pi_{M}(x_{1})-\Pi_{M}(x_{2})| \le | x_{1}-x_{2}|,\forall x_{1},x_{2}\in \mathds{R}.$$
\end{remark}

\begin{remark}\label{K1K2}
Algorithm  \ref{algorithm1} is applicable not only to Brunovsky canonical form but also to general controllable systems, which requires only a transformation in the control gain $K_{1}$ and compression coefficient $K_{2}$.
Specifically, $K_{1}P$ and $K_{2}P$ are used as the control gain and the compression coefficient for general controllable systems, where $P$ is the transformation matrix in Remark \ref{rem-can}.
%

Besides, control gains $K_{1}$ and $K_{2}$ designed in Algorithm \ref{algorithm1} has the property that $K_{2}(A+BK_{1})=K_{2}$ and $K_{2}B=1.$
\end{remark}

By  \eqref{MAS} and \eqref{control}, the state $x_{i}(t)$ is updated as 
\begin{equation}\label{update}
x_{i}(t+1)=(A+BK_{1})x_{i}(t)+\frac{\gamma B}{t+1}\sum_{j \in N_{i}}(\hat{z}_{ij}(t)-K_{2}x_{i}(t)).
\end{equation}

To better illustrate the convergence analysis of Algorithm \ref{algorithm1}, we first consider the case with fixed topology.
\begin{assumption}\label{assm-G}
The topology graph $G$  is connected. 
\end{assumption}

For the convenience of the subsequent analysis, we rewrite the estimation and update formulas in vector form based on the fixed topology $G$. 

Firstly, define $x(t)=[x_{1}^{T}(t),x_{2}^{T}(t),\ldots, x_{N}^{T}(t)]^{T}\in \mathds{R}^{nN}$.
Then, denote 
\begin{align*}
\hat{z}(t)=&[\hat{z}_{1r_{1}}(t),\hat{z}_{1r_{2}}(t),\ldots,\hat{z}_{1r_{d_{1}}}(t),\ldots,\hat{z}_{ir_{d_{1}+\cdots+d_{i-1}+1}}(t),
\\
&\ldots,\hat{z}_{ir_{d_{1}+\cdots+d_{i}}}(t),\ldots,\hat{z}_{Nr_{d_{1}+\cdots+d_{N}}}(t)]^{T}\in \mathds{R}^{d},
\end{align*}
where $r_{d_{1}+d_{2}+\cdots+d_{i-1}+1}, \ldots, r_{d_{1}+\cdots+d_{i}}\in N_{i}$ for $i=1,2,$
$\ldots,N$.
Similarly, denote 
\begin{align*}
s(t)=&[s_{1r_{1}}(t),s_{1r_{2}}(t),\ldots,s_{1r_{d_{1}}}(t),\ldots,s_{ir_{d_{1}+\cdots+d_{i-1}+1}}(t),
\\
&\ldots,s_{ir_{d_{1}+\cdots+d_{i}}}(t),\ldots, s_{Nr_{d_{1}+\cdots+d_{N}}}(t)]^{T}\in \mathds{R}^{d},
\end{align*}
and
\begin{align*}
C=&[c_{1r_{1}}, c_{1r_{2}},\ldots, c_{1r_{d_{1}}}, \ldots, c_{ir_{d_{1}+\cdots+d_{i-1}+1}},\ldots, 
\\
&c_{ir_{d_{1}+\cdots+d_{i}}},\ldots,c_{Nr_{d_{1}+\cdots+d_{N}}}]^{T}\in \mathds{R}^{d}.
\end{align*}

Without loss of generality, 
assume  the subscript $r_{s}$ in vector $\hat{z}(t)$ represents the neighbor $j$ of the agent $i$, i.e., $\hat{z}_{ir_{s}}(t)=\hat{z}_{ij}(t)$,
where $r_{s}\in N_{i}$, $s\in \{d_{1}+d_{2}+\cdots+d_{i-1}+1,\ldots, d_{1}+\cdots+d_{i}\}.$
Based on the above notation, we construct two matrices to establish the relation between agents' compressed states and their estimates.


$Q$ is designed to select the true compressed state of the agent that correlates with its estimate. 
Define $Q=[Q_{1r_{1}},\ldots,Q_{1r_{d_{1}}},$
$\ldots,Q_{Nr_{d_{1}+\cdots+d_{N-1}+1}},\ldots,Q_{Nr_{d_{1}+\cdots+d_{N}}}]^{T}\in \mathds{R}^{d \times N}$,
where
$Q_{ir_{s}}=Q_{ij}=[\vec{0}_{j-1}^{T},1,\vec{0}_{N-j}^{T}]^{T}\in \mathds{R}^{ N}$ for $(i,r_{s})\in E$, else $Q_{ir_{s}}=\vec{0}_{N}$.

$W$ is designed to select the neighbor set of each agent. 
Define $W=[W_{1},\ldots, W_{N}]^{T}\in \mathds{R}^{N \times d},$ where
$W_{i}=[\vec{0}^{T}_{d_{1}+\cdots+d_{i-1}}, \vec{1}^{T}_{d_{i}},\vec{0}^{T}_{d_{i+1}+\cdots+d_{N}}]^{T}\in \mathds{R}^{d}$ for  $i\in\{1,\ldots,N \}$,

Based on the above matrices, the vector forms of estimation and update are given as follows:

$1.$Estimation:
\begin{align}
\hat{z}(t)=&\bm{\Pi_{M}}\Big\{\hat{z}(t-1)+\frac{\beta}{t}\big(\mathcal{F}(C-\hat{z}(t-1))-s(t)\big)\Big\},
\label{Estimate}
\end{align}
where $\bm{\Pi_{M}}(z)=[\Pi_{M}(z_{1}),\ldots, \Pi_{M}(z_{d})]^{T}$, $\mathcal{F}(z)=[F(z_{1}),$
$\ldots,F(z_{d})]^{T}$, for any $z=[z_{1},z_{2},\ldots, z_{d}]^{T}\in \mathds{R}^{d}$.

$2.$Update:
\begin{align}\label{Update}
x(t)={}&\Big(I_{N}\otimes (A+BK_{1})-\frac{\gamma}{t}L\otimes BK_{2}\Big)x(t-1)
\notag
\\
&+\frac{\gamma}{t}(W\otimes B)\hat{\varepsilon}(t),
\end{align}
where  $\hat{\varepsilon}(t)=\hat{z}(t)-(Q\otimes K_{2})x(t)$ is the estimation error of the compressed state $(I_{N}\otimes K_{2})x(t)$.

\section{Main resluts}\label{fixed}
In this section, we demonstrate that all agents can reach consensus and provide the corresponding consensus rate. 

At first, we analyze the consensus property of the compressed states and the convergence property of their estimates. 
Then, the equivalence between the consensus of the compressed and original states is established. 
Finally, the consensus of original states and the corresponding consensus rate are obtained.


\subsection{Properties of compressed states and their estimates}\label{fix-b}

In order to analyze the consensus of the compressed states and the convergence of their estimates,
we give the following lemmas first.

\begin{lemma}\label{L}
(\!\!\cite{hu-consensus-SCIS2022}).  If Assumption \ref{assm-G} holds, then matrix $L$ has these properties:
\begin{enumerate}[i)]
\item $L$  is a nonnegative definite matrix with rank $n-1$ and eigenvalues $0=\lambda_{1}<\lambda_{2}\le \lambda_{3}\le \ldots \le \lambda_{N}$;
\item There exists an orthogonal matrix $T_{G}$ such that $T_{G}^{-1}LT_{G}=\text{diag}\{\lambda_{1},\ldots,\lambda_{N}\}$.
\end{enumerate}
\end{lemma}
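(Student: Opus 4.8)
The plan is to treat $L$ as a real symmetric matrix and to exploit its graph-Laplacian structure through the associated quadratic form. First I would observe that, since the topology is undirected, both the degree matrix $D$ and the adjacency matrix $A_{G}$ are symmetric, so $L=D-A_{G}$ is symmetric. For any $x=[x_{1},\ldots,x_{N}]^{T}\in\mathds{R}^{N}$ I would establish the identity
\[
x^{T}Lx=\frac{1}{2}\sum_{(i,j)\in E}(x_{i}-x_{j})^{2}\ge 0,
\]
by expanding the right-hand side and using that each diagonal entry $d_{i}$ counts the neighbors of agent $i$, so that the cross terms reproduce $-x^{T}A_{G}x$ and the square terms reproduce $x^{T}Dx$. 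This immediately shows $L$ is nonnegative definite, hence all its eigenvalues are real and nonnegative, which gives the ordering $0\le\lambda_{1}\le\lambda_{2}\le\cdots\le\lambda_{N}$.

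Next I would pin down the smallest eigenvalue and the rank. Testing $x=\vec{1}_{N}$, the $i$-th row of $L$ sums to $d_{i}-d_{i}=0$, so $L\vec{1}_{N}=\vec{0}_{N}$ and $\lambda_{1}=0$ with eigenvector $\vec{1}_{N}$. To show this eigenvalue is simple---equivalently $\mathrm{rank}(L)=N-1$---I would characterize the null space: if $Lx=\vec{0}_{N}$ then $x^{T}Lx=0$, and by the quadratic-form identity every edge term must vanish, forcing $x_{i}=x_{j}$ whenever $(i,j)\in E$. Invoking the connectivity hypothesis (Assumption \ref{assm-G}), this equality propagates along paths to all vertices, so $x$ is a scalar multiple of $\vec{1}_{N}$. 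Hence the kernel is one-dimensional, $\mathrm{rank}(L)=N-1$, and the second smallest eigenvalue is strictly positive, yielding $0=\lambda_{1}<\lambda_{2}\le\cdots\le\lambda_{N}$.

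Finally, part (ii) follows directly from the spectral theorem for real symmetric matrices: since $L=L^{T}$, there exists an orthogonal matrix $T_{G}$, whose columns form an orthonormal basis of eigenvectors of $L$, such that $T_{G}^{-1}LT_{G}=T_{G}^{T}LT_{G}=\mathrm{diag}\{\lambda_{1},\ldots,\lambda_{N}\}$. I expect the least routine step to be the connectivity argument establishing simplicity of the zero eigenvalue; the remaining parts are standard consequences of symmetry together with the quadratic-form representation, and the whole statement is classical, consistent with the cited reference \cite{hu-consensus-SCIS2022}.
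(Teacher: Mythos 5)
Your proof is correct and complete: the quadratic-form identity, the connectivity argument pinning down a one-dimensional kernel, and the spectral theorem together give exactly the claimed properties. The paper itself offers no proof of this lemma --- it is quoted from \cite{hu-consensus-SCIS2022} --- so there is no in-paper argument to compare against; yours is the standard classical one, and note in passing that your rank $N-1$ is the intended claim (the lemma's ``rank $n-1$'' is evidently a typo, since $L\in\mathds{R}^{N\times N}$).
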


\begin{lemma}\label{bounded}
Under Assumptions \ref{assm-AB}-\ref{assm-G} and Algorithm \ref{algorithm1}, the compressed states $K_{2}x_{i}(t)$ and their estimates $\hat{z}_{ij}(t)$ are all bounded, i.e., $|K_{2}x_{i}(t)|\le M$ and $|\hat{z}_{ij}(t)|\le M$, where $M$ is defined as \eqref{estimate},   $i=1,2,\ldots, N$, $j\in N_{i}$, $t\ge t_{0}+1$.
\end{lemma}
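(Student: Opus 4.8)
The plan is to handle the two boundedness claims separately, since the estimates are controlled \emph{directly} by the projection, whereas the compressed states require exploiting the special algebraic structure of the gains $K_1,K_2$.

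The estimates are the easy part. By \eqref{estimate}, for every $t\ge t_0+1$ the quantity $\hat z_{ij}(t)$ is the image of the projection $\Pi_M$, and by \eqref{piM} any output of $\Pi_M$ lies in $[-M,M]$; combined with the initial bound $|\hat z_{ij}^0|\le M$, this immediately gives $|\hat z_{ij}(t)|\le M$ for all $t\ge t_0$ and all $j\in N_i$. No induction is needed here.

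For the compressed states, the plan is to left-multiply the closed-loop update \eqref{update} by $K_2$ and reduce it to a scalar recursion. Setting $w_i(t)=K_2x_i(t)$ and invoking the two identities from Remark \ref{K1K2}, namely $K_2(A+BK_1)=K_2$ and $K_2B=1$ (which I would first verify by a direct computation in the Brunovsky form \eqref{canonical} from the explicit definitions of $K_1$ and $K_2$), the stabilization term collapses and one obtains
\begin{align}
w_i(t+1)=\Big(1-\frac{\gamma d_i}{t+1}\Big)w_i(t)+\frac{\gamma}{t+1}\sum_{j\in N_i}\hat z_{ij}(t). \notag
\end{align}
The crucial structural observation is that the coefficient $1-\gamma d_i/(t+1)$ and the $d_i$ coefficients $\gamma/(t+1)$ sum to one, so that $w_i(t+1)$ is a convex combination of $w_i(t)$ and the neighbor estimates $\hat z_{ij}(t)$, provided the leading weight is nonnegative.

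Given this, the argument closes by induction on $t$. The base case $|w_i(t_0+1)|=|K_2x_i^0|\le M$ holds by the definition of $M$. For the inductive step, assuming $|w_i(t)|\le M$ and that $1-\gamma d_i/(t+1)\ge0$, the recursion gives
\begin{align}
|w_i(t+1)|\le\Big(1-\frac{\gamma d_i}{t+1}\Big)|w_i(t)|+\frac{\gamma}{t+1}\sum_{j\in N_i}|\hat z_{ij}(t)|\le M, \notag
\end{align}
using the already-established estimate bound $|\hat z_{ij}(t)|\le M$ together with the induction hypothesis. I expect the only genuine obstacle to be the nonnegativity of the weight $1-\gamma d_i/(t+1)$, which is exactly what makes the recursion norm-nonincreasing relative to $M$. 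Since the smallest denominator over $t\ge t_0+1$ is $t_0+2$ and $d_i\le d_{\max}$, this reduces to the step/initial-time condition $\gamma d_{\max}\le t_0+2$, and I would check that this is ensured by the standing choice of $t_0$ (taken large enough relative to $\gamma$ and $d_{\max}$) so the induction is valid from the first step onward.
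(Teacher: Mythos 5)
Your proposal is correct and follows essentially the same route as the paper's own proof: boundedness of the estimates directly from the projection $\Pi_M$, then left-multiplication of the closed-loop update by $K_2$ using the identities $K_2(A+BK_1)=K_2$ and $K_2B=1$ to get the scalar convex-combination recursion, closed by induction under the same initial-time condition (the paper takes $t_0>\gamma d_i-1$, matching your requirement that $1-\gamma d_i/(t+1)\ge 0$). Nothing is missing.
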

\begin{proof}
First, due to the definition of $M$, we can get $|K_{2}x_{i}^{0}|\le M, |z_{ij}^{0}|\le M$.  By \eqref{estimate} and \eqref{piM}, we have $|\hat{z}_{ij}(t)|\le M$ for $t\ge t_{0}+1$.

Assume that $|K_{2}x_{i}(k)|\le M$ for $k=t_{0}+1,t_{0}+2,\ldots, t$.
By \eqref{MAS}, \eqref{update}, and Remark \ref{K1K2}, we have
%
\begin{align*}
&| K_{2}x_{i}(t+1)|
\\
=&\Big|K_{2}(A+BK_{1})x_{i}(t)+\frac{\gamma K_{2}B}{t+1}\sum_{j\in N_{i}}\big(\hat{z}_{ij}(t)-K_{2}x_{i}(t)\big)\Big|
\\
=&\Big|\left(1-\frac{d_{i}\gamma}{t+1}\right)K_{2}x_{i}(t)+\frac{\gamma }{t+1}\sum_{j\in N_{i}}\hat{z}_{ij}(t)\Big|.
\\
\le & \Big|1-\frac{\gamma d_{i}}{t+1}\Big|M+\frac{\gamma d_{i}}{t+1}M
\end{align*}

 For any given $\gamma$, we can choose an initial time $t_{0}$ that satisfies $t_{0}>\gamma d_{i}-1$. Since $1-\frac{\gamma d_{i}}{t+1}+\frac{\gamma d_{i}}{t+1}=1$,
we can get
$$
| K_{2}x_{i}(t+1)|\le\Big|1-\frac{\gamma d_{i}}{t+1}\Big|M+\frac{\gamma d_{i}}{t+1}M=M.
$$

Thus, by mathematical induction, we have $|K_{2}x_{i}(t)|\le M$ for all $t\ge t_{0}+1$. The lemma is proved.
\end{proof}

Next, due to the coupled relation between the control and estimation process, we introduce two Lyapunov functions, $V(t)$ and $R(t)$, to jointly analyze the consensus of the compressed states and the convergence of their estimates. 
These functions are defined as follows:
\begin{align}
\label{defineV}
V(t)&=E[\|(T_{G}^{-1}\otimes K_{2})\delta(t)\|^{2}],\\
\label{defineR}
R(t)&=E[\|\hat{\varepsilon}(t)\|^{2}],
\end{align}
where $\delta(t)=(J_{N}\otimes I_{n})x(t)$ and $J_{N}=I_{N}-\frac{1}{N}\vec{1}_{N}\vec{1}_{N}^{T}$.
Then, the following two lemmas show the coupled expressions of the two Lyapunov functions.

\begin{lemma}\label{V}
Under Assumptions \ref{assm-AB}-\ref{assm-G}, $V(t)$ satisfies
\begin{equation}
V(t)\le \Big(1-\frac{\gamma\lambda_{2}}{t}\Big)V(t-1)+\frac{\gamma\lambda_{G}}{t\lambda_{2}}R(t-1)+O\Big(\frac{1}{t^{2}}\Big),
\end{equation}
where $\lambda_{G}=\|J_{N}W\|^{2}$. 
\end{lemma}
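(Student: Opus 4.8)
The plan is to push the vector recursion \eqref{Update} through the averaging operator and the spectral coordinates of $L$, and then to control a scalar-type quadratic recursion by expanding a square and absorbing the cross term with a carefully tuned Young's inequality. First I would derive a closed recursion for the disagreement vector $\delta(t)=(J_N\otimes I_n)x(t)$. Since $L\vec{1}_N=\vec{0}_N$ and $\vec{1}_N^{T}L=\vec{0}_N^{T}$, we have $J_NL=LJ_N=L$, so left-multiplying \eqref{Update} by $J_N\otimes I_n$ and using $J_N^2=J_N$ gives
\[
\delta(t)=\Big(I_N\otimes(A+BK_1)-\tfrac{\gamma}{t}L\otimes BK_2\Big)\delta(t-1)+\tfrac{\gamma}{t}(J_NW\otimes B)\hat{\varepsilon}(t).
\]
Applying $T_G^{-1}\otimes K_2$ and writing $\eta(t)=(T_G^{-1}\otimes K_2)\delta(t)\in\mathds{R}^{N}$, the identities $K_2(A+BK_1)=K_2$ and $K_2B=1$ from Remark \ref{K1K2}, together with $T_G^{-1}L=\Lambda T_G^{-1}$ where $\Lambda=\mathrm{diag}\{\lambda_1,\ldots,\lambda_N\}$ (Lemma \ref{L}), collapse the drift into a diagonal contraction, yielding
\[
\eta(t)=\Big(I_N-\tfrac{\gamma}{t}\Lambda\Big)\eta(t-1)+\tfrac{\gamma}{t}\,T_G^{-1}J_NW\,\hat{\varepsilon}(t).
\]
A key observation is that the eigenvector of $L$ for $\lambda_1=0$ is $\vec{1}_N/\sqrt{N}$, so the first row of $T_G^{-1}J_N$ equals $\vec{1}_N^{T}J_N/\sqrt{N}=\vec{0}_N^{T}$; hence the first coordinate of $\eta(t)$ vanishes identically and the contraction is governed by $\lambda_2$ rather than $\lambda_1$.

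With $V(t)=E[\|\eta(t)\|^{2}]$, I would then expand $\|\eta(t)\|^{2}$ into the quadratic term, the cross term, and the forcing term. Using the vanishing first coordinate, the quadratic part obeys $\|(I_N-\tfrac{\gamma}{t}\Lambda)\eta(t-1)\|^{2}\le(1-\tfrac{\gamma\lambda_2}{t})^{2}\|\eta(t-1)\|^{2}$ for $t$ large enough that every factor $1-\tfrac{\gamma\lambda_i}{t}$ lies in $(0,1)$. For the cross term I would invoke Young's inequality with parameter exactly $\lambda_2$, bounding it by $\tfrac{\gamma\lambda_2}{t}\|\eta(t-1)\|^{2}+\tfrac{\gamma}{t\lambda_2}\|T_G^{-1}J_NW\hat{\varepsilon}(t)\|^{2}$. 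Then $\|T_G^{-1}\|=1$ (orthogonality) and $\|J_NW\|^{2}=\lambda_G$ turn both the last summand and the pure forcing term $\|\tfrac{\gamma}{t}T_G^{-1}J_NW\hat{\varepsilon}(t)\|^{2}$ into $\tfrac{\gamma\lambda_G}{t\lambda_2}\|\hat{\varepsilon}(t)\|^{2}$ plus an $O(1/t^{2})$ remainder.

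Collecting the $V(t-1)$ coefficients gives $(1-\tfrac{\gamma\lambda_2}{t})^{2}+\tfrac{\gamma\lambda_2}{t}=1-\tfrac{\gamma\lambda_2}{t}+\tfrac{\gamma^{2}\lambda_2^{2}}{t^{2}}$, so the Young parameter $\lambda_2$ is precisely what converts the squared factor into the advertised first-order factor $1-\tfrac{\gamma\lambda_2}{t}$. Because Lemma \ref{bounded} makes $V$ and $R$ uniformly bounded, the leftover $\tfrac{\gamma^{2}\lambda_2^{2}}{t^{2}}V(t-1)$ and the quadratic forcing remainder are genuinely $O(1/t^{2})$. The only remaining gap is that the forcing naturally produces $R(t)=E[\|\hat{\varepsilon}(t)\|^{2}]$ rather than $R(t-1)$; I would close it by showing $\|\hat{\varepsilon}(t)-\hat{\varepsilon}(t-1)\|=O(1/t)$ — the estimate increment is $O(1/t)$ by the $\beta/t$ step and nonexpansiveness of $\Pi_M$ (Remark \ref{projection}), while the compressed-state increment $(Q\otimes K_2)(x(t)-x(t-1))$ is $O(1/t)$ because $K_2(A+BK_1)=K_2$ annihilates the leading term — whence $R(t)=R(t-1)+O(1/t)$ and $\tfrac{\gamma\lambda_G}{t\lambda_2}R(t)=\tfrac{\gamma\lambda_G}{t\lambda_2}R(t-1)+O(1/t^{2})$.

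I expect the principal obstacle to be the treatment of the cross term rather than the algebra. Since $\hat{\varepsilon}(t)$ is statistically correlated with $\eta(t-1)$ through the shared sample history, one cannot discard it by a martingale or independence argument and must instead absorb it via Young's inequality with the sharply tuned constant $\lambda_2$. Getting that constant right is exactly what makes the two $V(t-1)$ contributions telescope into the factor $1-\gamma\lambda_2/t$; any other choice would leave a residual $O(1/t)$ term that breaks the intended iterative structure. The supporting facts — vanishing of the first spectral coordinate, $\|T_G^{-1}\|=1$, $\|J_NW\|^{2}=\lambda_G$, uniform boundedness of $V$ and $R$, and the $O(1/t)$ increment of $R$ — are each routine, so the heart of the argument is this single tuned splitting.
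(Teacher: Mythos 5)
Your proposal is correct and follows essentially the same route as the paper's proof in Appendix A: project onto the disagreement vector $\delta(t)$, diagonalize through the orthogonal $T_{G}$ with the first spectral coordinate vanishing so that $\lambda_{2}$ governs the contraction, bound the quadratic term by $\big(1-\frac{\gamma\lambda_{2}}{t}\big)^{2}V(t-1)$, and absorb the correlated cross term via Cauchy--Schwarz/Young with the tuned constant $\lambda_{2}$, exactly as in the paper's (A3)--(A4). Your closing $R(t)\to R(t-1)$ patch is valid but only compensates for a typo in the vector-form update display: the scalar dynamics \eqref{update} actually yield $\hat{\varepsilon}(t-1)$ in the recursion for $x(t)$ (as the paper's (A1) uses), so the forcing term already carries index $t-1$ and no increment argument is needed.
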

\begin{proof}
See Appendix \ref{app-V}.
\end{proof}

\begin{lemma}\label{R}
Under Assumptions \ref{assm-AB}-\ref{assm-G}, $R(t)$ satisfies
\begin{align}
R(t)\le  \Big(1-\frac{2\beta f_{M}-\gamma\alpha}{t}\Big)R(t-1)+\frac{\gamma\lambda_{G}}{t\lambda_{2}}V(t-1)+O\Big(\frac{1}{t^{2}}\Big),
\end{align}
where $\alpha=2\sqrt{\lambda_{QW}}+\lambda_{QL}\lambda_{2}/\lambda_{G}$, $f_{M}=\min_{i,j\in N_{0}}\{f(|c_{ij}|+M)\}$, $\lambda_{QW}=\|QW\|^{2}$, and $\lambda_{QL}=\|QLT_{G}\|^{2}$. 
\end{lemma}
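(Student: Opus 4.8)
The plan is to derive for $R(t)=E[\|\hat{\varepsilon}(t)\|^{2}]$ a recursion of exactly the same shape as the one obtained for $V(t)$ in Lemma \ref{V}, so that the two inequalities form a symmetric coupled pair with common off-diagonal coefficient $\gamma\lambda_{G}/(t\lambda_{2})$. Write $z(t)=(Q\otimes K_{2})x(t)$ for the vector of true compressed states, so $\hat{\varepsilon}(t)=\hat{z}(t)-z(t)$. By Lemma \ref{bounded} every component of $z(t)$ lies in $[-M,M]$, hence $z(t)=\bm{\Pi_{M}}(z(t))$, and applying the non-expansiveness of $\bm{\Pi_{M}}$ (Remark \ref{projection}) to the estimation update \eqref{Estimate} gives
\[
\|\hat{\varepsilon}(t)\|^{2}\le\Big\|\hat{\varepsilon}(t-1)+\tfrac{\beta}{t}\big(\mathcal{F}(C-\hat{z}(t-1))-s(t)\big)-\big(z(t)-z(t-1)\big)\Big\|^{2}.
\]
I would expand this into the diagonal term $\|\hat{\varepsilon}(t-1)\|^{2}$, an $O(1/t^{2})$ remainder collecting all products of the two $O(1/t)$ increments, and two cross terms: an \emph{innovation} term with $\mathcal{F}(C-\hat{z}(t-1))-s(t)$ and an \emph{increment} term with $z(t)-z(t-1)$.

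For the innovation cross term I would condition on $\mathcal{F}_{t-1}$. The key probabilistic fact is that $x(t)$, and hence $z(t)$, is $\mathcal{F}_{t-1}$-measurable (being generated by the states and estimates up to time $t-1$), while the noise entering $s(t)$ is fresh, so $E[s(t)\mid\mathcal{F}_{t-1}]=\mathcal{F}(C-z(t))$. Applying the mean value theorem to each $F(c_{ij}-\hat{z}_{ij}(t-1))-F(c_{ij}-z_{ij}(t))$ yields a factor $-f(\xi_{ij})$ with $|\xi_{ij}|\le|c_{ij}|+M$ by Lemma \ref{bounded}; since the Gaussian density decreases in $|\cdot|$, $f(\xi_{ij})\ge f_{M}$. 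Writing $\hat{z}_{ij}(t-1)-z_{ij}(t)=\hat{\varepsilon}_{ij}(t-1)-(z_{ij}(t)-z_{ij}(t-1))$ and absorbing the $O(1/t)$ increment into the remainder, this term contributes $-\tfrac{2\beta f_{M}}{t}\|\hat{\varepsilon}(t-1)\|^{2}$, i.e.\ the decay rate $2\beta f_{M}$ in the statement.

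The crucial step is the increment cross term $-2\langle\hat{\varepsilon}(t-1),\,z(t)-z(t-1)\rangle$. Here I would use the design identities of Remark \ref{K1K2}, $K_{2}(A+BK_{1})=K_{2}$ and $K_{2}B=1$: applying $(Q\otimes K_{2})$ to the update \eqref{Update} makes the $(A+BK_{1})$ block act as the identity and collapses the Kronecker products, giving the exact increment
\[
z(t)-z(t-1)=-\tfrac{\gamma}{t}(QL\otimes K_{2})x(t-1)+\tfrac{\gamma}{t}QW\,\hat{\varepsilon}(t-1),
\]
so the \emph{non-decaying} part cancels and only genuinely $O(1/t)$ terms survive; this cancellation is what makes the estimate usable at all. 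Using $LJ_{N}=L$ I would replace $x(t-1)$ by $\delta(t-1)$ in the first piece, and via the spectral factorization $L=T_{G}\Lambda T_{G}^{-1}$ of Lemma \ref{L} (with $\Lambda=\mathrm{diag}\{\lambda_{1},\ldots,\lambda_{N}\}$) identify $(QL\otimes K_{2})\delta(t-1)=QT_{G}\Lambda\,\eta(t-1)$, where $\eta(t-1)=(T_{G}^{-1}\otimes K_{2})\delta(t-1)$ is precisely the vector with $E[\|\eta(t-1)\|^{2}]=V(t-1)$. A Young inequality $2\langle a,b\rangle\le\varepsilon\|a\|^{2}+\varepsilon^{-1}\|b\|^{2}$ with $\|QT_{G}\Lambda\|^{2}=\|QLT_{G}\|^{2}=\lambda_{QL}$ and the tuned choice $\varepsilon=\lambda_{QL}\lambda_{2}/\lambda_{G}$ then splits the $QL$ piece into $\tfrac{\gamma\lambda_{QL}\lambda_{2}}{\lambda_{G}t}R(t-1)+\tfrac{\gamma\lambda_{G}}{t\lambda_{2}}V(t-1)$, producing the required $V(t-1)$ coupling and the term $\lambda_{QL}\lambda_{2}/\lambda_{G}$ of $\alpha$; the $QW$ piece is bounded by $\tfrac{2\gamma}{t}\|QW\|\,\|\hat{\varepsilon}(t-1)\|^{2}=\tfrac{2\gamma\sqrt{\lambda_{QW}}}{t}R(t-1)$, giving the remaining $2\sqrt{\lambda_{QW}}$.

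Collecting the diagonal term, the $-2\beta f_{M}/t$ contraction, the two $R(t-1)$ contributions summing to $\gamma\alpha/t$, the $V(t-1)$ coupling, and the $O(1/t^{2})$ remainders, and then taking expectations, yields the stated inequality. I expect the main obstacle to be the bookkeeping of the increment cross term, where one must simultaneously (i) verify that the $K_{2}$ identities collapse the Kronecker products so the increment is truly $O(1/t)$, (ii) route the $QL$ part through the eigenbasis of $L$ to recognize $V(t-1)$, and (iii) tune the single free parameter $\varepsilon$ so that the off-diagonal coefficient is forced to equal $\gamma\lambda_{G}/(t\lambda_{2})$, matching Lemma \ref{V} and making the later coupled-recursion argument symmetric.
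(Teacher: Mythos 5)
Your proposal is correct and follows essentially the same route as the paper's proof in Appendix \ref{app-R}: projection non-expansiveness justified by Lemma \ref{bounded}, substitution of the update using $K_{2}(A+BK_{1})=K_{2}$ and $K_{2}B=1$ so that the compressed-state increment is genuinely $O(1/t)$, conditional expectation plus the mean value theorem with $f(\zeta_{ij})\ge f_{M}$ giving the $-2\beta f_{M}/t$ contraction, and a weighted Cauchy--Schwarz/Young split with parameter $\lambda_{QL}\lambda_{2}/\lambda_{G}$ producing exactly the $\gamma\alpha/t$ and $\gamma\lambda_{G}/(t\lambda_{2})$ coefficients. The only cosmetic differences are that you isolate the increment $z(t)-z(t-1)$ explicitly and apply Young's inequality pointwise, whereas the paper folds the $QW$ part into the squared factor $\big(I_{d}-\frac{\gamma}{t}QW\big)$ and applies Cauchy--Schwarz at the level of expectations.
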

\begin{proof}
See Appendix \ref{app-R}.
\end{proof}

To establish the convergence properties of these two coupled functions $V(t)$ and $R(t)$, denote a new function $Z(t)=(V(t), R(t))^{T}$. Then, under Assumptions \ref{assm-AB}--\ref{assm-G}, we have
$$\|Z(t)\|\le\|(I-\frac{1}{t}U)Z(t-1)\|+O\Big(\frac{1}{t^2}\Big),$$
where 
$U=\begin{bmatrix}
u_{1}&u_{2}\\
u_{2}&u_{4}
\end{bmatrix}$,
$u_{1}=\gamma\lambda_{2}$, $u_{2}=\gamma\lambda_{G}/\lambda_{2}$, 
$u_{4}=2\beta f_{M}-\gamma\alpha$, $\alpha$ is the same as in Lemma \ref{R}.

\begin{lemma}\label{Z} (\!\!\cite{hu-consensus-SCIS2022}).  If Assumptions \ref{assm-AB}-\ref{assm-G} hold, 
then
\begin{equation}\nonumber
\|Z(t)\|=
\begin{cases}
O\left(\frac{1}{t^{\lambda_{\min}(U)}}\right),\quad &\lambda_{\min}(U)<1;
\\
O\left(\frac{\ln t}{t}\right),&\lambda_{\min}(U)=1;
\\
O\left(\frac{1}{t}\right),&\lambda_{\min}(U)>1.
\end{cases}
\end{equation}
\end{lemma}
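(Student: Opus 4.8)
The plan is to reduce the coupled two-dimensional recursion to a single scalar recursion and then analyze the latter by unrolling. Since $U$ is symmetric, there is an orthogonal matrix $P$ with $P^{T}UP=\mathrm{diag}(\lambda_{\min}(U),\lambda_{\max}(U))$, and because the Euclidean norm is invariant under orthogonal transformations, $\|I-\frac{1}{t}U\|=\max\{|1-\lambda_{\min}(U)/t|,|1-\lambda_{\max}(U)/t|\}$. For $t$ large enough that both $\lambda_{i}/t<1$, each factor lies in $(0,1)$ and the larger one corresponds to the smaller eigenvalue, so $\|I-\frac{1}{t}U\|=1-\lambda_{\min}(U)/t$. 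Writing $a_{t}=\|Z(t)\|$ and $\mu=\lambda_{\min}(U)$, submultiplicativity of the norm then yields, for all sufficiently large $t$,
\begin{equation}\nonumber
a_{t}\le\Big(1-\frac{\mu}{t}\Big)a_{t-1}+\frac{C}{t^{2}}
\end{equation}
for some constant $C>0$ absorbing the $O(1/t^{2})$ term.

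Next I would solve this scalar recursion explicitly. Unrolling from a fixed index $t_{1}$ gives
\begin{equation}\nonumber
a_{t}\le\Big(\prod_{k=t_{1}}^{t}\Big(1-\frac{\mu}{k}\Big)\Big)a_{t_{1}-1}+\sum_{j=t_{1}}^{t}\Big(\prod_{k=j+1}^{t}\Big(1-\frac{\mu}{k}\Big)\Big)\frac{C}{j^{2}}.
\end{equation}
The essential estimate is the asymptotics of the contraction product: taking logarithms and using $\sum_{k=j+1}^{t}1/k=\ln(t/j)+O(1/j)$ together with $\ln(1-\mu/k)=-\mu/k+O(1/k^{2})$ gives $\prod_{k=j+1}^{t}(1-\mu/k)=\Theta\big((j/t)^{\mu}\big)$, and in particular $\prod_{k=t_{1}}^{t}(1-\mu/k)=\Theta(t^{-\mu})$. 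Hence the homogeneous term is $O(t^{-\mu})$, while each summand in the forced term is $\Theta\big(t^{-\mu}j^{\mu-2}\big)$, so the forced term has order $t^{-\mu}\sum_{j=t_{1}}^{t}j^{\mu-2}$.

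The three claimed regimes then arise purely from the behaviour of $\sum_{j}j^{\mu-2}$, which I would treat by cases. When $\mu<1$ the exponent $\mu-2<-1$, so the sum converges and the forced term is $O(t^{-\mu})$, matching the homogeneous term; hence $a_{t}=O(t^{-\mu})$. When $\mu=1$ the sum is the harmonic sum $\Theta(\ln t)$, giving a forced term $O(t^{-1}\ln t)$, which dominates the homogeneous $O(t^{-1})$; hence $a_{t}=O(\ln t/t)$. When $\mu>1$ the sum is $\Theta(t^{\mu-1})$, so the forced term is $O(t^{-1})$, which dominates the homogeneous $O(t^{-\mu})$ since $t^{-\mu}$ decays faster; hence $a_{t}=O(1/t)$. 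Combining the homogeneous and forced contributions in each case reproduces exactly the stated rates.

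I expect the main obstacle to be the uniform control of the contraction product $\prod_{k}(1-\mu/k)$ with explicit and sign-definite constants across all three regimes, rather than the subsequent case split on $\sum_{j}j^{\mu-2}$, which is routine once the product asymptotics are in hand. A companion point that must be secured is the positivity $\mu=\lambda_{\min}(U)>0$ needed for the product to decay to zero; this is not automatic from the form of $U$ and has to be guaranteed by the design parameters $\beta$, $\gamma$ (equivalently, by $\mathrm{tr}(U)>0$ and $\det(U)>0$), so I would state it as the standing condition under which the lemma applies.
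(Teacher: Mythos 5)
Your argument is sound, but there is nothing in the paper to compare it with: Lemma \ref{Z} is quoted directly from \cite{hu-consensus-SCIS2022} and the paper supplies no proof, so your proposal fills in what the authors outsource to a reference. What you wrote is the standard Chung-type rate-lemma argument, and each step checks out: symmetry of $U$ gives $\|I-\frac{1}{t}U\|=\max\{|1-\lambda_{\min}(U)/t|,\,|1-\lambda_{\max}(U)/t|\}=1-\lambda_{\min}(U)/t$ for all $t>\lambda_{\max}(U)$; submultiplicativity yields the scalar recursion $a_{t}\le(1-\mu/t)a_{t-1}+Ct^{-2}$ with $\mu=\lambda_{\min}(U)$; the unrolling is legitimate because the factors $1-\mu/k$ are positive for $k$ beyond a fixed $t_{1}$; and the product estimate $\prod_{k=j+1}^{t}(1-\mu/k)=\Theta\big((j/t)^{\mu}\big)$ holds uniformly in $t_{1}\le j\le t$ since $\ln(1-\mu/k)+\mu/k=O(k^{-2})$ has tail sum $O(t_{1}^{-1})$. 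The split on $\sum_{j}j^{\mu-2}$ (convergent for $\mu<1$, harmonic for $\mu=1$, $\Theta(t^{\mu-1})$ for $\mu>1$) then reproduces the three stated rates exactly.

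One correction to your closing paragraph: $\lambda_{\min}(U)>0$ is not needed as a standing hypothesis of the lemma. Your identity $\|I-\frac{1}{t}U\|=1-\lambda_{\min}(U)/t$ holds for $t>\lambda_{\max}(U)$ regardless of the sign of $\lambda_{\min}(U)$, because $\lambda\mapsto 1-\lambda/t$ is decreasing and positive on the relevant range; and your case $\mu<1$ covers $\mu\le 0$ as well, where $O(t^{-\mu})$ is simply a non-decaying (for $\mu<0$, growing) and hence uninformative bound --- the lemma remains true as a pure statement about the recursion. Positivity is enforced downstream, in the proof of Theorem \ref{compressed}, where $\beta$ and $\gamma$ are chosen so that $u_{1}>0$ and $u_{1}u_{4}>u_{2}^{2}$ (equivalently $\mathrm{tr}(U)>0$ and $\det(U)>0$); it belongs to the application, not to the lemma.
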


\begin{remark}\label{vrz}
The analysis of $V(t)$ and $R(t)$ can be transformed into analyzing the convergence of $Z(t)$, it is because $0\le V(t)\le\|Z(t)\|$ and $0\le R(t)\le\|Z(t)\|$.
\end{remark}

\begin{theorem}\label{compressed}
Under Assumptions \ref{assm-AB}-\ref{assm-G}, the compressed states $K_{2}x_{j}(t)$ and their estimates $\hat{z}_{ij}(t)$ satisfy:
\begin{enumerate}[i)]
\item If $\beta>\frac{1}{2f_{M}}(\frac{\gamma \lambda_{G}^{2}}{\lambda_{2}^{3}}+\gamma \alpha)$, the compressed states reach consensus and their estimates converge to the real compressed states, i.e., for $i=1,\ldots,N$, $j\in N_{i}$,
$$
\lim_{t\to\infty}E[\|K_{2}x_{i}(t)-K_{2}\bar{x}(t)\|^{2}]=0,
$$
$$
\lim_{t\to\infty}E[\|\hat{z}_{ij}(t)-K_{2}x_{j}(t)\|^{2}]=0;
$$
\item If $\beta>\frac{1}{2f_{M}}(\frac{\gamma^{2}\lambda_{G}^{2}}{\lambda_{2}^{2}(\gamma \lambda_{2}-1)}+\gamma\alpha+1)$ and $\gamma>\frac{1}{\lambda_{2}}$, the compressed states reach consensus at the rate of $O(\frac{1}{t})$, and their estimates converge to the real compressed states at the rate of $O(\frac{1}{t})$, i.e., for $i=1,\ldots,N$, $j\in N_{i}$,
$$
E[\|K_{2}x_{i}(t)-K_{2}\bar{x}(t)\|^{2}]=O\Big(\frac{1}{t}\Big),
$$
$$
E[\|\hat{z}_{ij}(t)-K_{2}x_{j}(t)\|^{2}]=O\Big(\frac{1}{t}\Big),
$$
\end{enumerate}
where $\bar{x}(t)=\frac{1}{N}\sum_{i=1}^{N}x_{i}(t)$.
\end{theorem}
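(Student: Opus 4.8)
The plan is to treat Theorem~\ref{compressed} as a direct consequence of the scalar convergence dichotomy in Lemma~\ref{Z}, once the hypotheses on $\beta$ and $\gamma$ are translated into lower bounds on $\lambda_{\min}(U)$. The starting point is the combined recursion stated just before Lemma~\ref{Z}, namely $\|Z(t)\|\le\|(I-\frac1tU)Z(t-1)\|+O(\frac1{t^2})$, which is assembled from Lemmas~\ref{V} and~\ref{R} by stacking $V$ and $R$ into $Z(t)=(V(t),R(t))^{T}$. Since Lemma~\ref{Z} makes the decay rate of $\|Z(t)\|$ depend only on whether $\lambda_{\min}(U)$ is less than, equal to, or greater than $1$ (and, for mere convergence, on whether it is positive), the whole argument reduces to a $2\times2$ positive-definiteness check on $U$ and on $U-I$.

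For part~(i) I would verify that the hypothesis $\beta>\frac{1}{2f_{M}}\big(\frac{\gamma\lambda_{G}^{2}}{\lambda_{2}^{3}}+\gamma\alpha\big)$ is precisely the statement $\det U>0$. Computing $\det U=u_{1}u_{4}-u_{2}^{2}=\gamma\lambda_{2}(2\beta f_{M}-\gamma\alpha)-\gamma^{2}\lambda_{G}^{2}/\lambda_{2}^{2}$, requiring it positive, and dividing by $\gamma\lambda_{2}>0$ reproduces the bound on $\beta$ verbatim. Together with $u_{1}=\gamma\lambda_{2}>0$ (here $\lambda_{2}>0$ by Lemma~\ref{L} and the step coefficient $\gamma>0$), Sylvester's criterion yields $U\succ0$, hence $\lambda_{\min}(U)>0$. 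In each regime of Lemma~\ref{Z} the rate then tends to zero, so $\|Z(t)\|\to0$, and $0\le V(t),R(t)\le\|Z(t)\|$ from Remark~\ref{vrz} gives $V(t)\to0$ and $R(t)\to0$.

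For part~(ii) the sharper target is $\lambda_{\min}(U)>1$, equivalently $U-I\succ0$. Sylvester's criterion for $U-I$ asks for $u_{1}-1=\gamma\lambda_{2}-1>0$, which is the assumption $\gamma>1/\lambda_{2}$, and for $\det(U-I)=(u_{1}-1)(u_{4}-1)-u_{2}^{2}>0$; dividing the latter by $\gamma\lambda_{2}-1>0$ collapses it to exactly $\beta>\frac{1}{2f_{M}}\big(\frac{\gamma^{2}\lambda_{G}^{2}}{\lambda_{2}^{2}(\gamma\lambda_{2}-1)}+\gamma\alpha+1\big)$. Hence $\lambda_{\min}(U)>1$, Lemma~\ref{Z} gives $\|Z(t)\|=O(\frac1t)$, and Remark~\ref{vrz} upgrades this to $V(t)=O(\frac1t)$ and $R(t)=O(\frac1t)$.

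Finally I would unpack the two Lyapunov functions into the quantities in the statement. Because $T_{G}$ is orthogonal (Lemma~\ref{L}) the factor $T_{G}^{-1}\otimes K_{2}$ is norm-preserving, so $V(t)=E[\|(I_{N}\otimes K_{2})\delta(t)\|^{2}]$; since the $i$-th block of $\delta(t)=(J_{N}\otimes I_{n})x(t)$ is $x_{i}(t)-\bar{x}(t)$, this equals $\sum_{i=1}^{N}E[\|K_{2}x_{i}(t)-K_{2}\bar{x}(t)\|^{2}]$, and the selection structure of $Q$ similarly gives $R(t)=\sum_{j\in N_{i}}E[\|\hat{z}_{ij}(t)-K_{2}x_{j}(t)\|^{2}]$. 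As every summand is nonnegative, each term inherits the decay of $V(t)$ and $R(t)$, delivering the two limits of part~(i) and the two $O(\frac1t)$ bounds of part~(ii). I expect the only genuine care to be needed in the two Sylvester computations, ensuring the algebra collapses to the exact constants quoted and, in part~(i), checking $u_{1}>0$ so that $\det U>0$ certifies positive rather than negative definiteness; the analytically heavy work of deriving the coupled iterations and the rate trichotomy has already been discharged in Lemmas~\ref{V}, \ref{R}, and~\ref{Z}.
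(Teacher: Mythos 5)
Your proposal is correct and follows essentially the same route as the paper: both reduce the theorem to the trichotomy of Lemma~\ref{Z} by showing the hypotheses on $\beta,\gamma$ force $\lambda_{\min}(U)>0$ (part i) and $\lambda_{\min}(U)>1$ (part ii), and then unpack $V(t)$ and $R(t)$ via the orthogonality of $T_{G}$ and the block structure of $\delta(t)$ and $\hat{\varepsilon}(t)$. The only cosmetic difference is that you certify $\lambda_{\min}(U)>1$ by applying Sylvester's criterion to $U-I$, whereas the paper manipulates the explicit closed-form expression $\lambda_{\min}(U)=\frac{1}{2}\bigl(u_{1}+u_{4}-\sqrt{(u_{1}+u_{4})^{2}-4(u_{1}u_{4}-u_{2}^{2})}\bigr)$; the two verifications are algebraically equivalent.
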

\begin{proof}
i) Let $|\lambda I_{2}-U|=\lambda^{2}-(u_{1}-u_{4})^{2}+4u_{2}^{2}=0$. Then, we have 
$$\lambda_{\min}(U)=\frac{1}{2}\Big(u_{1}+u_{4}-\sqrt{(u_{1}+u_{4})^{2}-4(u_{1}u_{4}-u_{2}^{2})} \Big).$$

If $\beta>\frac{1}{2f_{M}}(\frac{\gamma \lambda_{G}^{2}}{\lambda_{2}^{3}}+\gamma \alpha)$, then
$u_{1}u_{4}>u_{2}^{2}$. Since $u_{1}>0$ and $u_{1}u_{4}>u_{2}^{2}$, we have $\lambda_{\min}(U)>0$.

By Lemma \ref{Z}, if $\lambda_{\min}(U)>0$, then $\lim_{t\to\infty}\|Z(t)\|=0$. Hence,  due to the relation among $V(t)$, $R(t)$ and $Z(t)$ in Remark \ref{vrz}, it is clearly that
\begin{equation}\label{limVR}
\lim_{t\to \infty}V(t)=0,\quad \lim_{t\to \infty}R(t)=0,
\notag
\end{equation}
which is equivalent to $\lim_{t\to\infty}E[\|(T_{G}^{-1}\otimes K_{2})\delta(t)\|^{2}]=0$ and $\lim_{t\to\infty}E[\|\hat{\varepsilon}(t)\|^{2}]=0$.

Subsequently, there is $\lim_{t\to\infty}E[\|I_{N}\otimes K_{2})\delta(t)\|^{2}]=0$, which implies that $\lim_{t\to\infty}E[\|K_{2}\delta_{i}(t)\|^{2}]=0$ for $i=1,\ldots,N$.
Therefore, we have that for $i=1,\ldots,N$, $j\in  N_{i}$,
$$
\lim_{t\to\infty}E[\|K_{2}x_{i}(t)-K_{2}\bar{x}(t)\|^{2}]=0,
$$ $$
\lim_{t\to\infty}E[\|\hat{z}_{ij}(t)-K_{2}x_{j}(t)\|^{2}]=0.
$$

ii) Similarly, if $\beta>\frac{1}{2f_{M}}(\frac{\gamma^{2}\lambda_{G}^{2}}{\lambda_{2}^{2}(\gamma \lambda_{2}-1)}+\gamma\alpha+1)$, then 
$$
\frac{u_{2}^{2}}{u_{1}-1}+1<u_{4}.
$$ 
If $\gamma>\frac{1}{\lambda_{2}}$, we have $u_{1}=\gamma\lambda_{2}>1$, then
$$
u_{2}^{2}+u_{1}-1<u_{4}(u_{1}-1).
$$ 

Subsequently, since $(u_{1}+u_{4})^{2}-4(u_{1}u_{4}-u_{2}^{2})-(u_{1}+u_{4}-2)^{2}=4(u_{2}^{2}+u_{1}-1-u_{4}(u_{1}-1))<0$, we have 
$$
u_{1}+u_{4}-\sqrt{(u_{1}+u_{4})^{2}-4(u_{1}u_{4}-u_{2}^{2})}>2,
$$
then $\lambda_{\min}(U)>1$.

By Lemma \ref{Z}, we have $\|Z(t)\|=O(\frac{1}{t})$, i.e., 
$$
V(t)=O\left(\frac{1}{t}\right),\quad R(t)=O\left(\frac{1}{t}\right).
$$
Similarly as the proof of Part i), we have 
$$
E[\|K_{2}x_{i}(t)-K_{2}\bar{x}(t)\|^{2}]=O\Big(\frac{1}{t}\Big),
$$
$$
E[\|\hat{z}_{ij}(t)-K_{2}x_{j}(t)\|^{2}]=O\Big(\frac{1}{t}\Big).
$$
where $i=1,\ldots,N$, $j\in  N_{i}$.
\end{proof}

\subsection{Consensus of the original states}
In order to establish the equivalence of the consensus of the agent states before and after compression, we need the compression coefficients to satisfy the following condition.
\begin{assumption}\label{assm-K}
The compression coefficients $b_{1}, b_{2},\ldots, $
$b_{n-1}$ satisfy that: All the roots $r_{1}, r_{2}, \ldots, r_{n-1}$ of $s^{n-1}+b_{n-1}s^{n-2}+\cdots+b_{2}s+b_{1}=0$ are inside the unit circle.
\end{assumption}

\begin{remark}\label{assm4}
The controllable system $(A, B)$ can be transformed into a neutrally stable system $(A+BK_{1}, B)$ if the stabilization term $K_{1}x_{i}(t)$ of the consensus controller satisfies Assumption \ref{assm-K}.
Subsequently, the design of the consensus term follows a similar approach to the controller design in a neutrally stable system \cite{an-consensus-IEEETCNS2024}.
Specifically, based on $\text{det}(sI_{n}-(A+BK_{1}))=(s-1)(s^{n-1}+b_{n-1}s^{n-2}+\cdots+b_{2}s+b_{1})$, the eigenvalues of  $A+BK_{1}$ are $1, r_{1},\ldots, r_{n-1}$.
Under Assumption \ref{assm-K},  $(A+BK_{1}, B)$ is neutrally stable, i.e.,  all the eigenvalues of $A+BK_{1}$ are not outside the unit circle.
%
Thus, the system with coefficient $(A+BK_{1}, B)$ can be controlled by a consensus term that has a decay step, which is a common tool to attenuate the effect of the noises in first-order systems \cite{ren-mas-ACC2005, huang-stochastic-ACC2008, li-mean-Auto2009, li-consensus-TAC2010,zhao-consensus-TAC2019, wang-consensus-TAC2020}. This approach also applies to orthogonal systems \cite{wang-consensus-IJRNC2020} and neutrally stable systems but with the limitation that $B$ is of full row rank \cite{an-consensus-IEEETCNS2024}.
\end{remark}

Then, two lemmas are given to deal with the relation between the compressed states $K_{2}x_{i}(t)$ and the original states $x_{i}(t)$. 
\begin{lemma}\label{kx-x}
The linear systems \eqref{MAS} with Brunovsky canonical form \eqref{canonical} satisfies
$$
\mathds{D}^{n-1}x_{in}(t)+b_{n-1}\mathds{D}^{n-2}x_{in}(t)+\cdots+b_{1}x_{in}(t)=\mathds{D}^{n-1}K_{2}x_{i}(t),
$$
where $x_{i}(t)=[x_{i1}(t),\ldots,x_{in}(t)]^{T}\in \mathds{R}^{n}$ and $i$$=$$1$, $2$, $\ldots$, $N$.
\end{lemma}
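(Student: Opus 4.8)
The plan is to reduce the claimed identity to the commutativity of scalar forward-shift operators, after extracting the cascade (shift-chain) structure inherent in the Brunovsky form. First I would use only the first $n-1$ rows of $\tilde{A}$ in \eqref{canonical}: the dynamics \eqref{MAS} with $A=\tilde A$, $B=\tilde B$ give $x_{i,k+1}(t)=x_{i,k}(t+1)$ for $k=1,\ldots,n-1$, a relation that does not involve $u_i(t)$ or the last row of $\tilde A$. In the operator notation of the paper this reads $x_{i,k+1}(t)=\mathds{D}\,x_{i,k}(t)$, and iterating downward to the first coordinate yields $x_{i,k}(t)=\mathds{D}^{k-1}x_{i1}(t)$ for every $k=1,\ldots,n$; in particular $x_{in}(t)=\mathds{D}^{n-1}x_{i1}(t)$.

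Next I would substitute these relations into the compression output. Introducing the polynomial shift operator $p(\mathds{D})=\mathds{D}^{n-1}+b_{n-1}\mathds{D}^{n-2}+\cdots+b_2\mathds{D}+b_1$, the coefficient list $K_2=[b_1,b_2,\ldots,b_{n-1},1]$ from Step~1 of Algorithm~\ref{algorithm1} gives
\begin{align*}
K_2 x_i(t) &= \sum_{k=1}^{n-1} b_k\, x_{ik}(t) + x_{in}(t) \\
&= \Big(b_1 + b_2\mathds{D} + \cdots + b_{n-1}\mathds{D}^{n-2} + \mathds{D}^{n-1}\Big)x_{i1}(t) \\
&= p(\mathds{D})\,x_{i1}(t).
\end{align*}
Hence the right-hand side of the claim becomes $\mathds{D}^{n-1}K_2 x_i(t)=\mathds{D}^{n-1}p(\mathds{D})\,x_{i1}(t)$, while the left-hand side is by definition $p(\mathds{D})\,x_{in}(t)$; inserting $x_{in}(t)=\mathds{D}^{n-1}x_{i1}(t)$ turns it into $p(\mathds{D})\mathds{D}^{n-1}x_{i1}(t)$. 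Since powers of the single scalar operator $\mathds{D}$ commute, $p(\mathds{D})\mathds{D}^{n-1}=\mathds{D}^{n-1}p(\mathds{D})$, so the two expressions coincide and the identity follows.

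I do not expect any genuine analytic obstacle: the entire content is the observation that the \emph{same} polynomial $p(\mathds{D})$ governs both the compression map $K_2$ applied to the state and the difference operator applied to $x_{in}$, after which the result is pure commutativity. The only point that requires care is the index bookkeeping—checking that, once the chain relation $x_{ik}(t)=\mathds{D}^{k-1}x_{i1}(t)$ is substituted, the coefficients $[b_1,\ldots,b_{n-1},1]$ of $K_2$ reproduce exactly the coefficients of $p(\mathds{D})$ that multiply $x_{in}$ on the left-hand side—so that the commutativity step genuinely closes the argument. Since the derivation is uniform in the agent index, it suffices to carry it out for a fixed $i$.
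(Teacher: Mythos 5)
Your proposal is correct and follows essentially the same route as the paper: both extract the shift-chain relation $x_{i(j+1)}(t)=\mathds{D}x_{ij}(t)$, $j=1,\ldots,n-1$, from the companion structure (the paper reads it off the closed-loop matrix $A+BK_{1}$, whose first $n-1$ rows coincide with those of $\tilde{A}$ since $\tilde{B}$ only feeds the last coordinate, so the two derivations are interchangeable) and then apply $\mathds{D}^{n-1}$ to $K_{2}x_{i}(t)$. Your repackaging via the polynomial operator $p(\mathds{D})$ acting on $x_{i1}(t)$ and commutativity of powers of $\mathds{D}$ is just a cleaner bookkeeping of the same substitution.
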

\begin{proof}
See Appendix \ref{app-kx}.
\end{proof}

\begin{lemma}\label{difference-equation} 
Consider the stochastic process $\xi(t)\in \mathds{R}$ that satisfies the following stochastic difference equation:
$$
\mathds{D}^{n-1}\xi(t)+b_{n-1}\mathds{D}^{n-2}\xi(t)+\cdots+b_{2}\mathds{D}\xi(t)+b_{1}\xi(t)=\eta(t),
$$
where $\eta(t)\in\mathds{R}$ is a stochastic process which converges to a finite-second-moment random variable $\eta^{*}$ in the mean square.
Under Assumption \ref{assm-K},
\begin{enumerate}[i)]
\item $\lim_{k\to\infty}E[|\xi(t)-\xi^{*}|^{2}]=0$;
\item If $E[|\eta(t)-\eta^{*}|^{2}]=O(\frac{1}{t})$, then  $E[|\xi(t)-\xi^{*}|^{2}]=O(\frac{1}{t})$, 
\end{enumerate}
where $\xi^{*}=\frac{1}{\prod_{j=1}^{n-1}(1-r_{j})}\eta^{*}$ and $r_{1}, \ldots, r_{n-1}$ are defined as Assumption \ref{assm-K}.
\end{lemma}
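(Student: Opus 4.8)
The plan is to reduce the claim to a standard stability property of a Schur-stable linear recursion driven by a mean-square-vanishing (resp. $O(1/t)$) forcing term, and then to analyze that recursion through its explicit variation-of-constants solution.

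First I would pass to the error process $e(t)=\xi(t)-\xi^{*}$. Writing $p(s)=s^{n-1}+b_{n-1}s^{n-2}+\cdots+b_{2}s+b_{1}=\prod_{j=1}^{n-1}(s-r_{j})$ for the characteristic polynomial of Assumption \ref{assm-K}, the sum of the recursion's coefficients equals $1+b_{n-1}+\cdots+b_{1}=p(1)=\prod_{j=1}^{n-1}(1-r_{j})$, which is nonzero because each $|r_{j}|<1$; this is exactly the normalizing factor in $\xi^{*}=\eta^{*}/\prod_{j=1}^{n-1}(1-r_{j})$, so $\xi^{*}p(1)=\eta^{*}$. Substituting $\xi(t+k)=e(t+k)+\xi^{*}$ into the difference equation and using $\xi^{*}p(1)=\eta^{*}$ shows that $e(t)$ obeys the same recursion with forcing $\tilde{\eta}(t):=\eta(t)-\eta^{*}$. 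By hypothesis $\tilde{\eta}(t)\to0$ in mean square for part i), and $E[|\tilde{\eta}(t)|^{2}]=O(1/t)$ for part ii).

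Next I would realize the recursion in state-space form. With $E(t)=[e(t),\ldots,e(t+n-2)]^{T}$ one gets $E(t+1)=\Phi E(t)+G\tilde{\eta}(t)$, where $\Phi$ is the $(n-1)\times(n-1)$ companion matrix of $p(s)$ and $G=[0,\ldots,0,1]^{T}$. The eigenvalues of $\Phi$ are precisely $r_{1},\ldots,r_{n-1}$, so Assumption \ref{assm-K} makes its spectral radius strictly less than one, giving constants $c>0$ and $\rho\in(0,1)$ with $\|\Phi^{k}\|\le c\rho^{k}$. Unrolling from the initial time yields $E(t)=\Phi^{t-t_{0}}E(t_{0})+\sum_{s=t_{0}}^{t-1}\Phi^{t-1-s}G\tilde{\eta}(s)$. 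I would then pass to the $L^{2}$-norm $\|X\|_{L^{2}}:=(E[\|X\|^{2}])^{1/2}$ and invoke Minkowski's inequality, which sidesteps any difficulty with the unknown correlation between $E(t)$ and $\tilde{\eta}(t)$, to obtain $\|E(t)\|_{L^{2}}\le c\rho^{t-t_{0}}\|E(t_{0})\|_{L^{2}}+\sum_{s=t_{0}}^{t-1}c\rho^{t-1-s}\|G\|\,\|\tilde{\eta}(s)\|_{L^{2}}$.

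It remains to estimate the convolution term, and this is where I expect the main obstacle. For part i), the transient decays geometrically, and the convolution of the summable geometric kernel with the vanishing sequence $\|\tilde{\eta}(s)\|_{L^{2}}$ tends to zero by the usual split-the-sum argument (the recent block is bounded by the kernel tail times $\sup_{s}\|\tilde{\eta}(s)\|_{L^{2}}$, while the old block is a finite, vanishing sum), so $E[|e(t)|^{2}]\to0$. For part ii), I would substitute $\|\tilde{\eta}(s)\|_{L^{2}}=O(s^{-1/2})$ and split the convolution at $s\approx t/2$: on the recent half $\tilde{\eta}$ is $O(t^{-1/2})$ while the kernel has bounded total mass, contributing $O(t^{-1/2})$; on the old half the geometric factor $\rho^{t/2}$ dominates and the contribution is $o(t^{-1/2})$. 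Together these give $\|E(t)\|_{L^{2}}=O(t^{-1/2})$, i.e.\ $E[|e(t)|^{2}]=O(1/t)$, and reading off the first coordinate of $E(t)$ delivers both assertions for $\xi(t)$. The delicate point is precisely this rate estimate: one must choose the splitting point and handle the terms where $t-1-s$ is small carefully so that a unit-mass kernel acting on an $O(s^{-1/2})$ sequence still yields an $O(t^{-1/2})$ output; by comparison the stochastic bookkeeping is routine once Minkowski's inequality is in place, and the finitely many initial indices (the initial time $t_{0}$ and the threshold beyond which the $O(1/s)$ bound holds) are absorbed into the constants.
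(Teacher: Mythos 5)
Your proposal is correct, and it reaches the conclusion by a genuinely different route from the paper's. The paper factorizes the operator polynomial as $\prod_{j=1}^{n-1}(\mathds{D}-r_{j})$ and peels off one first-order scalar recursion at a time: it unrolls $\xi_{1}(t+1)=r_{1}\xi_{1}(t)+\eta(t)$ explicitly, bounds the convolution term by Cauchy--Schwarz (which produces exactly the squared form of your Minkowski bound), evaluates the resulting weighted sums with the Stolz--Ces\`aro theorem, and iterates $n-1$ times, so that the limit $\xi^{*}=\eta^{*}/\prod_{j=1}^{n-1}(1-r_{j})$ emerges constructively from the cascade of intermediate limits. You instead verify the constant up front via $p(1)=\prod_{j=1}^{n-1}(1-r_{j})\neq 0$, subtract the limit to get an error recursion, and treat all roots at once through a single Schur-stable companion-form realization estimated by elementary kernel splitting. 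Your approach buys two things: it dispenses with the $(n-1)$-fold induction and with Stolz--Ces\`aro in favor of the more elementary split-the-sum estimates (both of which do go through as you anticipate: $\sum_{s\ge t/2}\rho^{t-1-s}s^{-1/2}=O(t^{-1/2})$ by bounded kernel mass, and $\rho^{t/2}\sum_{s<t/2}s^{-1/2}=O(\rho^{t/2}\sqrt{t})=o(t^{-1/2})$); and it handles negative or complex roots transparently through the spectral-radius bound $\|\Phi^{k}\|\le c\rho^{k}$, whereas the paper's scalar peeling assumes ``without loss of generality $r_{1}\ge 0$,'' which for negative or complex $r_{j}$ requires repair by replacing $r_{1}^{i}$ with $|r_{1}|^{i}$ throughout. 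What the paper's route buys in exchange is that the normalizing constant and the limit $\xi^{*}$ are derived rather than guessed-and-verified, since each first-order stage maps a mean-square-convergent forcing to a mean-square-convergent output with limit scaled by $(1-r_{j})^{-1}$, and part ii) reuses the same stagewise statement with the $O(1/t)$ rate. The only hypotheses your argument silently uses beyond the paper's --- finiteness of $\|E(t_{0})\|_{L^{2}}$ and boundedness of $\sup_{s}\|\tilde{\eta}(s)\|_{L^{2}}$ --- hold automatically (the latter because a mean-square-convergent sequence is $L^{2}$-bounded), so there is no gap.
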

\begin{proof}
See Appendix \ref{app-de}.
\end{proof}

\begin{remark}\label{equivalence}
Lemmas \ref{kx-x}-\ref{difference-equation} demonstrate that the consensus of original states is equivalent to that of the compressed states.
Specifically, Lemma \ref{kx-x} shows that $x_{in}(t)$ and $K_{2}x_{i}(t)$ satisfy the difference equation in Lemma \ref{difference-equation} for each agent $i$. 
Since $x_{i}(t)-\bar{x}(t)$ is a linear combination of $x_{1}(t),\ldots,x_{N}(t)$, it can be seen that $x_{in}(t)-\bar{x}_{n}(t)$ and $K_{2}(x_{i}(t)-\bar{x}(t))$ still satify the difference equation in Lemma \ref{difference-equation}, where $\bar{x}(t)=[\bar{x}_{1}(t),\ldots,\bar{x}_{n}(t)]^{T}\in\mathds{R}^{n}$.
Then, the equivalence between the convergence of the original states $x_{in}(t)-\bar{x}_{n}(t)$ and compressed states $K_{2}(x_{i}(t)-\bar{x}(t))$ is given in Lemma \ref{difference-equation}.
\end{remark}

Using the convergence equivalence in Remark \ref{equivalence} and the results of the compressed states in Theorem \ref{compressed}, the consensus of MAS is obtained as follows.

\begin{theorem}\label{consensus}
Under Assumptions \ref{assm-AB}-\ref{assm-K}, the following results are obtained:
\begin{enumerate}[i)]
\item The MAS \eqref{MAS}-\eqref{canonical} reaches consensus, i.e.,
$$
\lim_{t\to\infty}E[\|x_{i}(t)-\bar{x}(t)\|^{2}]=0,
$$
providing with $\beta>\frac{1}{2f_{M}}(\frac{\gamma \lambda_{G}^{2}}{\lambda_{2}^{3}}+\gamma \alpha)$;

\item The MAS \eqref{MAS}-\eqref{canonical} reaches consensus at the rate of $O(\frac{1}{t})$, i.e.,
$$
E[\|x_{i}(t)-\bar{x}(t)\|^{2}]=O\left(\frac{1}{t}\right),
$$
if $\beta>\frac{1}{2f_{M}}(\frac{\gamma^{2}\lambda_{G}^{2}}{\lambda_{2}^{2}(\gamma \lambda_{2}-1)}+\gamma\alpha+1)$ and $\gamma>\frac{1}{\lambda_{2}}$.
\end{enumerate}
\end{theorem}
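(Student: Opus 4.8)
The plan is to transfer the consensus properties already proved for the \emph{compressed} states in Theorem \ref{compressed} to the \emph{original} states, using the scalar difference equation of Lemma \ref{kx-x} together with the convergence result of Lemma \ref{difference-equation}. Write $\delta_{i}(t)=x_{i}(t)-\bar{x}(t)$ and focus first on its last component $\delta_{in}(t)=x_{in}(t)-\bar{x}_{n}(t)$; in the Brunovsky form every other component turns out to be a time shift of this one, so controlling $\delta_{in}(t)$ will control the whole vector. First I would set up the governing difference equation for $\delta_{in}(t)$. By Lemma \ref{kx-x}, each $x_{in}(t)$ obeys $\mathds{D}^{n-1}x_{in}(t)+b_{n-1}\mathds{D}^{n-2}x_{in}(t)+\cdots+b_{1}x_{in}(t)=\mathds{D}^{n-1}K_{2}x_{i}(t)$. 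Since $\bar{x}(t)=\frac{1}{N}\sum_{i=1}^{N}x_{i}(t)$ is a linear combination, $\bar{x}_{n}(t)$ satisfies the same recursion with right-hand side $\mathds{D}^{n-1}K_{2}\bar{x}(t)$; subtracting, $\delta_{in}(t)$ satisfies the equation of Lemma \ref{difference-equation} with input $\eta(t)=\mathds{D}^{n-1}K_{2}\delta_{i}(t)=K_{2}\delta_{i}(t+n-1)$. Theorem \ref{compressed} identifies the behaviour of this input: in case (i), $E[|\eta(t)|^{2}]\to 0$, so $\eta(t)$ converges in mean square to $\eta^{*}=0$; in case (ii), $E[|\eta(t)|^{2}]=O(\frac{1}{t})$ with the same limit $\eta^{*}=0$. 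As Assumption \ref{assm-K} places the roots $r_{1},\ldots,r_{n-1}$ inside the unit circle, Lemma \ref{difference-equation} applies, and because $\xi^{*}=\frac{1}{\prod_{j=1}^{n-1}(1-r_{j})}\eta^{*}=0$, it yields $E[|\delta_{in}(t)|^{2}]\to 0$ in case (i) and $E[|\delta_{in}(t)|^{2}]=O(\frac{1}{t})$ in case (ii).

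Next I would propagate this to the full vector. The chain-of-integrators structure of the canonical form \eqref{canonical} gives $x_{ik}(t)=x_{in}(t-(n-k))$ for $k=1,\ldots,n$, hence $\bar{x}_{k}(t)=\bar{x}_{n}(t-(n-k))$ and therefore $\delta_{ik}(t)=\delta_{in}(t-(n-k))$. A mean-square limit and the $O(\frac{1}{t})$ rate are both preserved under the fixed time shift $n-k$, so every component inherits the conclusion obtained for $\delta_{in}(t)$. Summing over components, $E[\|x_{i}(t)-\bar{x}(t)\|^{2}]=\sum_{k=1}^{n}E[|\delta_{ik}(t)|^{2}]$ tends to $0$ in case (i) and is $O(\frac{1}{t})$ in case (ii), which are exactly statements (i) and (ii). The step-coefficient conditions carry over verbatim, since they are precisely the hypotheses of the corresponding two parts of Theorem \ref{compressed}.

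The only points requiring care, and hence the main obstacle, are verifying that the compressed error $\eta(t)$ satisfies the hypotheses of Lemma \ref{difference-equation}, namely convergence in mean square to a random variable of finite second moment (here the degenerate $\eta^{*}=0$, whose finiteness is immediate from Lemma \ref{bounded}), and confirming that the $O(\frac{1}{t})$ rate survives both the passage through the stable difference equation and the fixed time shifts relating the components. Both are guaranteed by Assumption \ref{assm-K} (stability of the characteristic polynomial $s^{n-1}+b_{n-1}s^{n-2}+\cdots+b_{1}$) and the uniform boundedness already provided by Lemma \ref{bounded}; beyond these, the argument is a direct assembly of Theorem \ref{compressed} and Lemmas \ref{kx-x}--\ref{difference-equation}.
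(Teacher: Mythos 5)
Your proposal is correct and follows essentially the same route as the paper: it forms the difference equation \eqref{x_in} for $x_{in}(t)-\bar{x}_{n}(t)$ via Lemma \ref{kx-x} and linearity (the paper's Remark \ref{equivalence}), feeds in the compressed-state consensus and rate from Theorem \ref{compressed} as the input $\eta(t)$ with $\eta^{*}=0$, applies Lemma \ref{difference-equation}, and then propagates to the remaining components through the time-shift identity $\mathds{D}x_{ij}(t)=x_{i(j+1)}(t)$ before summing. The only difference is presentational — you make explicit the verification that $\eta^{*}=0$ has finite second moment and that shifts preserve the $O(\frac{1}{t})$ rate, points the paper leaves implicit.
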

\begin{proof}
i) 
By Lemmas \ref{kx-x}--\ref{difference-equation} and Remark \ref{equivalence}, 
we have 
\begin{align}\label{x_in}
&\mathds{D}^{n-1}\big(x_{in}(t)-\bar{x}_{n}(t)\big)+b_{n-1}\mathds{D}^{n-2}\big(x_{in}(t)-\bar{x}_{n}(t)\big)\notag
\\
&+\cdots+b_{1}\big(x_{in}(t)-\bar{x}_{n}(t)\big)\notag
\\
=&\mathds{D}^{n-1}K_{2}\big(x_{i}(t)-\bar{x}(t)\big).
\end{align}

By Theorem \ref{compressed}, when $\beta>\frac{1}{2f_{M}}(\frac{\gamma \lambda_{G}^{2}}{\lambda_{2}^{3}}+\gamma \alpha)$, there is $\lim_{t\to\infty}E[\|K_{2}x_{i}(t)-K_{2}\bar{x}(t)\|^{2}]=0$ for $i=1,\ldots,N$.

Then, using Lemma \ref{difference-equation} and \eqref{x_in}, we have  
$$\lim_{t\to\infty}E[\big(x_{in}(t)-\bar{x}_{n}(t)\big)^{2}]=0.$$

The proof of Lemma \ref{kx-x} shows that $\mathds{D}x_{ij}(t)=x_{ij}(t+1)=x_{i(j+1)}(t)$ for $i=1,\ldots,N$ and $j=1,\ldots,n-1$.
Then, we have
\begin{align*}
&\lim_{t\to\infty}E[\big(x_{ij}(t)-\bar{x}_{j}(t)\big)^{2}]
\\
=&\lim_{t\to\infty}E[\big(x_{in}(t-n+j)-\bar{x}_{n}(t-n+j)\big)^{2}]
\\
=&0,
\end{align*} 
 where $i=1,\ldots,N$, $j=1,\ldots,n$.
Thus, $$\lim_{t\to\infty}E[\|x_{i}(t)-\bar{x}(t)\|^{2}]=0.$$

%

ii) Similarly, when $\beta>\frac{1}{2f_{M}}(\frac{\gamma^{2}\lambda_{G}^{2}}{\lambda_{2}^{2}(\gamma \lambda_{2}-1)}+\gamma\alpha+1)$ and $\gamma>\frac{1}{\lambda_{2}}$,  by Theorem \ref{compressed}, we have $$E[\|K_{2}x_{i}(t)-K_{2}\bar{x}(t)\|^{2}]=O(\frac{1}{t}).$$

Same as the proof of Part i), by Lemmas \ref{kx-x}-\ref{difference-equation}, there is $E[\big(x_{ij}(t)-\bar{x}_{j}(t)\big)^{2}]=O(\frac{1}{t})$, $i=1,\ldots, N$, $j=1,\ldots,n$. As a result, we know that 
$$E[\|x_{i}(t)-\bar{x}(t)\|^{2}]=O\left(\frac{1}{t}\right).
$$
This completes the proof.
\end{proof}

\section{Consensus of the MAS under Markovian switching communication networks}\label{switching}
In the previous sections, the effectiveness of Algorithm \ref{algorithm1} has been established for the fixed communication networks. 
Now, we will extend these results to the switching communication networks.

Model the communication links between agents as time-varying topology $G_{m(t)}=(N_0, E_{m(t)})$, 
whose dynamic is described by a homogeneous Markovian chain $\{m(t):t\in \mathds{N}\}$ with a state space $\{1,2,\ldots,h\}$, transition probability $p_{uv}=\mathbb{P}\{m(t)=v|m(t-1)=u\}$, and stationary distribution $\pi_{u}=\lim_{t\to\infty}\mathbb{P}\{m(t)=u\}$, for all $u,v \in \{1,2,\ldots, h\}$. 
$N_0= \{1, \ldots, N\}$ is the set of agents, and  $E_{m(t)}\subseteq N_{0}\times N_{0}$ is the ordered edges set of the topology $G_{m(t)}$. 
Moreover, assume $G_{m(t)}\in \{ G_{1},G_{2}, \ldots, G_{h} \}$. 
Denote $N_{i}^{m(t)}$  as the neighbor set of the agent $i$ in the topology $G_{m(t)}$.  
Denote the adjacency matrix and the degree matrix at time $t$ as $A_{m(t)}$ and $D_{m(t)}$, respectively.  
Then, the Laplace matrix of $G_{m(t)}$ is defined as $L_{m(t)} = D_{m(t)} - A_{m(t)}$.

To ensure the effectiveness of the algorithm, we give the following joint connectivity assumption.
\begin{assumption}\label{assm-Gi}
$\{G_{1},G_{2},\ldots, G_{h}\}$  are jointly connected.
\end{assumption}

Based on Assumption \ref{assm-Gi}, denote the jointly connected topology formed by $G_{1}$,$G_{2}$,$\ldots$,$ G_{h}$ as $G'=(N_0,E'),$ where  $E'=E_{1}\cup\cdots\cup E_{h}$ is the set of all the edges. 
Next, same as the fixed topology $G$ in Section \ref{fixed}, we define new notations for the new topology $G'$, which is considered in the following. 
Without loss of generality, we continue to use the same notations $d_{i}$, $N_{i}$, $d$, $\hat{z}(t)$, $s(t)$ and $C$ for $G'$ to simplify this paper.

Similarly, three matrices are constructed for the switching topology case.

$P_{m(t)}$ is designed to select each neighbor of each agent at time $t$.
Define $P_{m(t)}=\text{diag}\{p_{m(t)}^{11},p_{m(t)}^{22},\ldots, p_{m(t)}^{dd}\}\in \mathds{R}^{d \times d}$, 
where $p_{m(t)}^{ss}=1 $ when $(i,r_{s})\in E_{m(t)}$, else $p_{m(t)}^{ss}=0 $.

$W_{m(t)}$ is designed to select the neighbor set of each agent at time $t$.
Define $W_{m(t)}=[W_{m(t)}^{1},\ldots, W_{m(t)}^{N}]^{T}\in \mathds{R}^{N \times d},$ where
$W_{m(t)}^{i}=[\vec{0}_{d_{1}+\cdots+d_{i-1}},b_{m(t)}^{1},\ldots,b_{m(t)}^{d_{i}},\vec{0}_{d_{i+1}+\cdots+d_{N}}]^{T}\in \mathds{R}^{d}$ for  $i\in\{1,\ldots,N \}$ and
$k_{i}\in\{1,\ldots, d_{i}\}$,  $b_{m(t)}^{k_{i}}=1$ when $(i,r_{k_{i}+d_{1}+\cdots+d_{i-1}})\in E_{m(t)}$,  
else $b_{m(t)}^{k_{i}}=0$. 

$Q$ is defined as Section \ref{fixed} and is based on the topology $G'$.

Based on the above matrices, the vector forms of estimation and update are given as follows:

$1.$Estimation:
\begin{align}
\hat{z}(t)=&\bm{\Pi_{M}}\Big\{\hat{z}(t-1)+\frac{\beta}{t}P_{m(t)}\big(\mathcal{F}(C-\hat{z}(t-1))-s(t)\big)\Big\},
\label{Estimate}
\end{align}
where $\bm{\Pi_{M}}(z)=[\Pi_{M}(z_{1}),\ldots, \Pi_{M}(z_{d})]^{T}$, $\mathcal{F}(z)=[F(z_{1}),$
$\ldots,F(z_{d})]^{T}$, for any $z=[z_{1},z_{2},\ldots, z_{d}]^{T}\in \mathds{R}^{d}$.

$2.$Update:
\begin{align}\label{Update}
x(t)={}&\Big(I_{N}\otimes (A+BK_{1})-\frac{\gamma}{t}L_{m(t-1)}\otimes BK_{2}\Big)x(t-1)
\notag
\\
&+\frac{\gamma}{t}(W_{m(t-1)}\otimes B)\hat{\varepsilon}(t),
\end{align}
where  $\hat{\varepsilon}(t)=\hat{z}(t)-(Q\otimes K_{2})x(t)$ is the estimation error of the compressed state $(I_{N}\otimes K_{2})x(t)$.


The main difficulty in the analysis of the switching topology case is that the fixed matrices $I_{d}, L,$ and $W$ used to describe the relationship between agents are respectively changed into switching matrices $P_{m(t)}$, $L_{m(t)}$ and $W_{m(t)}$, in contrast to the fixed topology case.
To overcome this challenge, the following lemmas are given to establish a certain equivalence relationship between the Markovian switching topologies and a fixed topology.

\begin{lemma}\label{expectation}
For the switching matrices $L_{m(t)}$ and $P_{m(t)}$, we have the following:
$$E[L_{m(t)}]=\sum_{i=1}^{h}\pi_{i}L_{i}+O(\lambda_{L}^{t}),$$
$$E[P_{m(t)}]=\sum_{i=1}^{h}\pi_{i}P_{i}+O(\lambda_{P}^{t}),$$ 
where $0<\lambda_{L},\lambda_{P}<1$.
\end{lemma}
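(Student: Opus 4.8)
The plan is to exploit that $L_{m(t)}$ and $P_{m(t)}$ are deterministic functions of the chain state $m(t)$, so their expectations are simply weighted averages of the finitely many matrices $\{L_i\}_{i=1}^{h}$ and $\{P_i\}_{i=1}^{h}$ against the time-$t$ marginal law of the chain; the whole lemma then reduces to the geometric convergence of that marginal to the stationary distribution. First I would write $\mu(t)=[\mathbb{P}\{m(t)=1\},\ldots,\mathbb{P}\{m(t)=h\}]$ for the marginal (as a row vector) and $\pi=[\pi_{1},\ldots,\pi_{h}]$ for the stationary distribution. Since $L_{m(t)}=L_{i}$ and $P_{m(t)}=P_{i}$ on the event $\{m(t)=i\}$, conditioning on $m(t)$ gives
\begin{equation}\nonumber
E[L_{m(t)}]=\sum_{i=1}^{h}\mathbb{P}\{m(t)=i\}L_{i},\qquad E[P_{m(t)}]=\sum_{i=1}^{h}\mathbb{P}\{m(t)=i\}P_{i}.
\end{equation}
Subtracting the stationary sums yields
\begin{equation}\nonumber
E[L_{m(t)}]-\sum_{i=1}^{h}\pi_{i}L_{i}=\sum_{i=1}^{h}\big(\mathbb{P}\{m(t)=i\}-\pi_{i}\big)L_{i},
\end{equation}
and likewise for $P$. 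Since the $L_{i}$ and $P_{i}$ are fixed matrices of bounded norm, it therefore suffices to show $\mathbb{P}\{m(t)=i\}-\pi_{i}=O(\lambda^{t})$ uniformly in $i$ for some $\lambda\in(0,1)$.

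The core step is the geometric ergodicity of the finite homogeneous chain. Writing $\Theta=[p_{uv}]\in\mathds{R}^{h\times h}$ for the transition matrix, the marginals obey $\mu(t)=\mu(t-1)\Theta$; since $\pi\Theta=\pi$, the deviation satisfies $\mu(t)-\pi=(\mu(t-1)-\pi)\Theta$, so it is governed by powers of $\Theta$ acting on zero-sum vectors. The assumed existence of the limit $\pi_{u}=\lim_{t\to\infty}\mathbb{P}\{m(t)=u\}$ forces the chain to be ergodic, so by the Perron--Frobenius theorem $\Theta$ has a simple eigenvalue $1$ (with left eigenvector $\pi$ and right eigenvector $\vec{1}_{h}$) and all remaining eigenvalues of modulus strictly below $1$. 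Letting $\rho<1$ be the largest such modulus, the Jordan decomposition of $\Theta$ gives $\Theta^{t}=\vec{1}_{h}\pi+O(t^{h}\rho^{t})$; choosing any $\lambda$ with $\rho<\lambda<1$ absorbs the polynomial prefactor, so that $\mathbb{P}\{m(t)=i\}-\pi_{i}=O(\lambda^{t})$ uniformly in $i$. Substituting this into the displayed differences and setting $\lambda_{L}=\lambda_{P}=\lambda$ completes the argument.

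The main obstacle is making the rate genuinely geometric: one must verify that every eigenvalue of $\Theta$ other than $1$ has modulus strictly less than $1$, and control the possible non-diagonalizability of $\Theta$, since Jordan blocks contribute polynomial-in-$t$ factors of the form $t^{k-1}\rho^{t}$. The strict sub-unit modulus is precisely what the assumed convergence $\mathbb{P}\{m(t)=u\}\to\pi_{u}$ encodes, as any unimodular eigenvalue other than the Perron root would prevent the marginals from converging; the polynomial factor is then handled by the standard device of slightly enlarging $\rho$ to $\lambda<1$ so that $t^{h}\rho^{t}=O(\lambda^{t})$. The remaining details, namely the uniformity of the bound over the finitely many states $i$ and the boundedness of $\|L_{i}\|$ and $\|P_{i}\|$, are routine.
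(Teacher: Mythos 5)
Your overall route is the same as the paper's: condition on $m(t)$ to write $E[L_{m(t)}]=\sum_{i}\mathbb{P}\{m(t)=i\}L_{i}$, propagate the marginal law by $\mu(t)=\mu(t-1)\Theta$, and deduce geometric convergence of $\mu(t)$ to $\pi$ from Perron--Frobenius theory of the transition matrix (the paper obtains $\Theta^{t}=\vec{1}_{h}\pi+O(\lambda^{t})$ by citing Seneta; you obtain it from the Jordan form and absorb the polynomial prefactor). However, there is a genuine gap in the step you yourself flag as the core one: the claim that the assumed existence of $\pi_{u}=\lim_{t\to\infty}\mathbb{P}\{m(t)=u\}$ forces the chain to be ergodic, hence that $1$ is a simple eigenvalue of $\Theta$ and all other eigenvalues have modulus strictly less than $1$. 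This implication is false. Take $\Theta=I_{h}$: every marginal is constant, so the limit exists, yet $1$ is not simple. Or take the two-state periodic chain $\Theta=\left[\begin{smallmatrix}0&1\\1&0\end{smallmatrix}\right]$ started from the uniform law: the marginal is constant (so the limit exists), but $-1$ is a unimodular eigenvalue and $\Theta^{t}$ does not converge, so your identity $\Theta^{t}=\vec{1}_{h}\pi+O(t^{h}\rho^{t})$ fails outright. In these cases the conclusion of the lemma still holds (trivially, since $\mu(t)-\pi\equiv 0$), but your proof of it breaks: convergence of the marginal for the \emph{given} initial law does not rule out unimodular eigenvalues; it only says that $\mu(0)-\pi$ has no component along them.

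That last observation is also the repair. You never need $\Theta^{t}$ to converge as a matrix; you only need its action on the single row vector $v=\mu(0)-\pi$. Since $\Theta$ is stochastic, $\sup_{t}\|\Theta^{t}\|_{\infty}\le 1$, so every eigenvalue of modulus $1$ is semisimple. Decompose $v=v_{1}+\sum_{j}v_{j}+v_{s}$ into its components along the left eigenspace of $1$, the eigenspaces of the other unimodular eigenvalues $e^{i\theta_{j}}$, and the generalized eigenspaces of eigenvalues of modulus less than $1$. Then $v\Theta^{t}=v_{1}+\sum_{j}e^{i\theta_{j}t}v_{j}+O(t^{h}\rho^{t})$, and the assumed convergence $v\Theta^{t}\to 0$ forces $v_{1}=0$ and every $v_{j}=0$ (average the sequence against $e^{-i\theta_{k}t}$ and use that the cross terms average to zero), leaving $\mu(t)-\pi=v_{s}\Theta^{t}=O(\lambda^{t})$ for any $\lambda$ with $\rho<\lambda<1$, which is exactly what the lemma asserts. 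Alternatively, read the paper's standing hypothesis as ergodicity of the chain, in which case your Perron--Frobenius argument goes through verbatim; it is worth noting that the paper's own proof makes a comparable leap, asserting that the transition matrix ``is symmetrical'' before invoking Seneta's theorems --- a claim that is false in general, and indeed false for the transition matrix used in the paper's own Example 2.
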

\begin{proof}
Denote $p_{u,t}=\mathbb{P}\{G_{m(t)}=G_{u}\}$, the transition probability matrix as $P=\{p_{uv}\}_{u,v}$ and stationary distribution vector as $\pi=[\pi_{1},\ldots,\pi_{h}]$.
By the property of transition probability $p_{uv}$ and stationary distribution $\pi_{u}$, we have $\pi P=\pi$,
which implies that $\pi$ is a positive left eigenvector corresponding to the eigenvalue $1$.

It can be seen that $P\vec{1}_{h}=\vec{1}_{h}$, thus $\vec{1}_{h}$ is a positive right eigenvector corresponding to the eigenvalue $1$. Besides, we can know that $\pi \vec{1}_{h}=\sum_{i=1}^{h}\pi_{i}=1$.

Since $P$ is symmetrical, by \cite[Corollary 1]{seneta-markov-2006}, we know that eigenvalue $1$ satisfies \cite[Theorem 1.1]{seneta-markov-2006}. Then, by \cite[Theorem 1.2]{seneta-markov-2006}, there exists a $\lambda_{p}$ that satisfies $0<\lambda_{p}<1$ such that 
$$P^{t}=\vec{1}_{h}\pi+O(\lambda_{p}^{t}).$$

Moreover, by the definition of $p_{u,t}$, it can be seen that $[p_{1,t+1},\ldots,p_{h,t+1}]=[p_{1,t},\ldots,p_{h,t}]P$.
Then, we have
\begin{align*}
[p_{1,t},\ldots,p_{h,t}]=&[p_{1,1},\ldots,p_{h,1}]P^{t-1}
\\
=&[p_{1,1},\ldots,p_{h,1}]\vec{1}_{h}\pi+O(\lambda_{p}^{t}),
\end{align*}
which implies that $p_{u,t}=\sum_{i=1}^{h}\pi_{u}p_{i,1}+O(\lambda_{p}^{t})=\pi_{u}+O(\lambda_{p}^{t})$, where $0<\lambda_{p}<1$.

Therefore, by the definition of expectation, we have
$E[L_{m(t)}]=\sum_{i=1}^{h}p_{i,t}L_{i}=\sum_{i=1}^{h}\pi_{i}L_{i}+O(\lambda_{L}^{t})$ and $E[P_{m(t)}]=\sum_{i=1}^{h}\pi_{i}P_{i}+O(\lambda_{P}^{t})$, where $0<\lambda_{L},\lambda_{P}<1$.
This completes the proof.
\end{proof}

Then, denote $\check{L}=\sum_{i=1}^{h}\pi_{i}L_{i}$, the following lemma shows its properties, similarly to Lemma \ref{L}. To simplify the notation, denote the eigenvalues of $\check{L}$ as $\lambda_{1}\le\lambda_{2}\le \lambda_{3}\le \ldots \le \lambda_{N}$, which is the same as Lemma \ref{L}.
 
\begin{lemma}\label{checkL}
(\!\!\cite{hu-consensus-SCIS2022}).
If Assumption \ref{assm-Gi} holds, then matrix $\check{L}$ has these properties:
\begin{enumerate}[i)]
\item $\check{L}$  is a nonnegative definite matrix with rank $n-1$ and eigenvalues $0=\lambda_{1}<\lambda_{2}\le \lambda_{3}\le \ldots \le \lambda_{N}$;
\item There exists an orthogonal matrix $T_{G}$ such that $T_{G}^{-1}\check{L}T_{G}=\text{diag}\{\lambda_{1},\ldots,\lambda_{N}\}$.
\end{enumerate}
\end{lemma}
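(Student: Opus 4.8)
The plan is to exploit the fact that $\check{L}=\sum_{i=1}^{h}\pi_{i}L_{i}$ is a nonnegative combination of graph Laplacians, together with the strict positivity of the stationary weights $\pi_{i}$ already established in the proof of Lemma \ref{expectation}, where $\pi$ is shown to be a positive left eigenvector of $P$ associated with the eigenvalue $1$. First I would record the elementary structural facts: each $L_{i}=D_{i}-A_{i}$ is the Laplacian of the undirected graph $G_{i}$, hence symmetric and positive semidefinite with $L_{i}\vec{1}_{N}=\vec{0}_{N}$. Since $\pi_{i}>0$ and $\sum_{i}\pi_{i}=1$, the matrix $\check{L}$ is a convex combination of symmetric positive semidefinite matrices, so it is itself symmetric and positive semidefinite. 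This immediately yields nonnegative definiteness and, by the spectral theorem for real symmetric matrices, the existence of an orthogonal $T_{G}$ with $T_{G}^{-1}\check{L}T_{G}=\mathrm{diag}\{\lambda_{1},\ldots,\lambda_{N}\}$, which is part ii). It also gives $\lambda_{1}\ge 0$, and since $\check{L}\vec{1}_{N}=\sum_{i}\pi_{i}L_{i}\vec{1}_{N}=\vec{0}_{N}$, that $0$ is an eigenvalue, so $\lambda_{1}=0$.

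The substantive step is to show that the rank of $\check{L}$ is exactly $N-1$, equivalently that the zero eigenvalue is simple so that $0=\lambda_{1}<\lambda_{2}$. For this I would use the quadratic-form characterization of the Laplacian: for any $v=[v_{1},\ldots,v_{N}]^{T}\in\mathds{R}^{N}$,
\begin{equation}\nonumber
v^{T}\check{L}v=\sum_{i=1}^{h}\pi_{i}\,v^{T}L_{i}v=\sum_{i=1}^{h}\pi_{i}\sum_{(j,k)\in E_{i}}(v_{j}-v_{k})^{2}.
\end{equation}
Suppose $v\in\ker\check{L}$; since $\check{L}$ is positive semidefinite this is equivalent to $v^{T}\check{L}v=0$. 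Every summand on the right is nonnegative and every weight $\pi_{i}$ is strictly positive, so the sum vanishes only if $v_{j}=v_{k}$ for every edge $(j,k)\in E_{i}$ and every $i$, i.e. $v$ is constant across all edges of the union graph $G'=(N_{0},E')$ with $E'=E_{1}\cup\cdots\cup E_{h}$. An equivalent and perhaps cleaner framing is to observe that $\check{L}$ is precisely the Laplacian of the weighted graph on $N_{0}$ whose edge $(j,k)$ carries weight $w_{jk}=\sum_{i:(j,k)\in E_{i}}\pi_{i}$, which is positive exactly for $(j,k)\in E'$.

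Finally I would invoke Assumption \ref{assm-Gi}: joint connectivity of $\{G_{1},\ldots,G_{h}\}$ means precisely that $G'$ is connected, so any two vertices are linked by a path in $E'$, and propagating the equalities $v_{j}=v_{k}$ along such a path forces all entries of $v$ to coincide. Hence $\ker\check{L}=\mathrm{span}\{\vec{1}_{N}\}$ is one-dimensional, giving $\mathrm{rank}(\check{L})=N-1$ and the strict inequality $\lambda_{1}<\lambda_{2}$, exactly as in the single-graph statement of Lemma \ref{L} applied to the connected weighted graph $G'$. I expect the main obstacle to be this kernel computation, specifically translating the combinatorial content of Assumption \ref{assm-Gi} into the statement that $G'$ is connected and then converting ``constant along every edge of a connected graph'' into ``globally constant''; the positivity of the $\pi_{i}$ is essential here, since a topology with zero stationary weight could drop edges and disconnect $G'$, which is why the positivity recorded in the proof of Lemma \ref{expectation} must be used rather than merely $\pi_{i}\ge 0$.
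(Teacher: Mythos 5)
Your proof is correct. Note first that the paper itself offers no proof of this lemma: it is quoted from \cite{hu-consensus-SCIS2022}, exactly as Lemma \ref{L} is, so there is no internal argument to compare against; your write-up supplies the standard derivation that the citation stands in for. Your route is sound at every step: symmetry and positive semidefiniteness of $\check{L}$ follow from it being a convex combination of (undirected) graph Laplacians, part ii) is the spectral theorem, $\check{L}\vec{1}_{N}=\vec{0}_{N}$ gives $\lambda_{1}=0$, and the kernel computation via $v^{T}\check{L}v=\sum_{i=1}^{h}\pi_{i}\sum_{(j,k)\in E_{i}}(v_{j}-v_{k})^{2}$ combined with connectivity of the union graph $G'$ (which is exactly what Assumption \ref{assm-Gi} asserts) pins down $\ker\check{L}=\mathrm{span}\{\vec{1}_{N}\}$, hence rank $N-1$ and $\lambda_{2}>0$. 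Your observation that $\check{L}$ is itself the Laplacian of the weighted graph with weights $w_{jk}=\sum_{i:(j,k)\in E_{i}}\pi_{i}$, positive precisely on $E'$, is the cleanest framing and makes the conclusion a direct corollary of the single-graph Lemma \ref{L}. Two remarks. First, you are right that strict positivity $\pi_{i}>0$ is load-bearing and cannot be weakened to $\pi_{i}\ge 0$; this is consistent with the rest of the paper, which tacitly needs $\pi_{\min}>0$ anyway for the contraction coefficient $2\beta f_{M}\pi_{\min}-\gamma\alpha$ in Lemma \ref{Ri} to be useful, and with the positivity of $\pi$ asserted in the proof of Lemma \ref{expectation}. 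Second, the lemma as printed says rank $n-1$; since $\check{L}\in\mathds{R}^{N\times N}$ this is evidently a typo for $N-1$ (also present in Lemma \ref{L}), and your proof correctly establishes rank $N-1$.
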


\begin{remark}\label{E[Lt]=L}
The switching topology case can be converted into a fixed one by taking the expectation,
requiring only minor adjustments in the analysis process to achieve convergence properties consistent with the fixed topology case.
From Lemmas \ref{expectation} and \ref{checkL}, we can see that the switching matrix $L_{m(t)}$ can be treated as the sum of the fixed matrix $\check{L}$ and the exponential term $O(\lambda_{L}^{t})$ when taking a mathematical expectation.
Since $O(\lambda_{L}^{t})$ converges exponentially to zero, it is smaller than $O (\frac{1}{t^{2}})$, which does not affect the analysis of convergence.
\end{remark}

Based on these lemmas, the coupling relationship between the two Lyapunov functions $V(t)$ and $R(t)$ is obtained, which is in the same form as in the fixed graph case and only with a slight change in the definitions of the constant coefficients.

\begin{lemma}\label{Vi}
Under Assumptions \ref{assm-AB}-\ref{assm-d} and \ref{assm-Gi}, $V(t)$ satisfies
\begin{equation}
V(t)\le \Big(1-\frac{\gamma\lambda_{2}}{t}\Big)V(t-1)+\frac{\gamma\lambda_{G}}{t\lambda_{2}}R(t-1)+O\Big(\frac{1}{t^{2}}\Big),
\end{equation}
where $\lambda_{G}=\mathop{\max}\limits_{1\le i \le h}\{\|T_{G}^{-1}J_{N}W_{i}\|^{2}\}$. 
\end{lemma}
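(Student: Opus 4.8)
The plan is to mirror the proof of Lemma \ref{V} given in Appendix \ref{app-V}, replacing the fixed matrices $L$ and $W$ by the switching ones $L_{m(t-1)}$ and $W_{m(t-1)}$ from the update \eqref{Update}, and absorbing the resulting Markov-chain effects through Lemma \ref{expectation}. Writing $\phi(t)=(T_G^{-1}\otimes K_2)\delta(t)=(T_G^{-1}J_N\otimes K_2)x(t)$, I would substitute \eqref{Update} and use the identities $K_2(A+BK_1)=K_2$ and $K_2B=1$ from Remark \ref{K1K2}, together with $J_N L_{m(t-1)}=L_{m(t-1)}$ (since $L_{m(t-1)}\vec{1}_N=\vec{0}_N$), to obtain the one-step recursion
\begin{equation}\nonumber
\phi(t)=\phi(t-1)-\frac{\gamma}{t}\big(T_G^{-1}L_{m(t-1)}\otimes K_2\big)x(t-1)+\frac{\gamma}{t}\big(T_G^{-1}J_N W_{m(t-1)}\otimes 1\big)\hat{\varepsilon}(t).
\end{equation}
Expanding $\|\phi(t)\|^2$, the squared last two terms are $O(1/t^2)$ because, by Lemma \ref{bounded}, the compressed states and their estimates are bounded, so both $x(t-1)$ (through $K_2$) and $\hat{\varepsilon}(t)$ are bounded.

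Taking expectations term by term, the contraction comes from $-\frac{2\gamma}{t}E\big[x(t-1)^T(L_{m(t-1)}\otimes K_2^TK_2)x(t-1)\big]$, where I used orthogonality of $T_G$ and $J_N^2=J_N$ to reduce the inner product. Here I would invoke Lemma \ref{expectation} to replace $L_{m(t-1)}$ by its stationary mean $\check{L}=\sum_i\pi_i L_i$, the error being $O(\lambda_L^{t-1})$, which is $o(1/t^2)$ by Remark \ref{E[Lt]=L}. The spectral properties of $\check{L}$ in Lemma \ref{checkL} give $\check{L}-\lambda_2 J_N\ge 0$, so $(\check{L}-\lambda_2 J_N)\otimes K_2^TK_2\ge 0$ as a Kronecker product of positive semidefinite matrices; since $V(t-1)=E[x(t-1)^T(J_N\otimes K_2^TK_2)x(t-1)]$, this yields a $-\frac{2\gamma\lambda_2}{t}V(t-1)$ contribution up to $O(1/t^2)$.

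The cross term involving $W_{m(t-1)}$ is handled by Cauchy--Schwarz and Young's inequality with weight $\lambda_2$: this returns $+\frac{\gamma\lambda_2}{t}V(t-1)$, which combines with the contraction to give the factor $(1-\frac{\gamma\lambda_2}{t})$, plus a term $\frac{\gamma}{t\lambda_2}\|T_G^{-1}J_N W_{m(t-1)}\|^2\,\|\hat{\varepsilon}(t)\|^2$. The one genuinely new ingredient relative to the fixed case is that $W_{m(t-1)}$ ranges over $G_1,\ldots,G_h$; bounding $\|T_G^{-1}J_N W_{m(t-1)}\|^2\le\max_{1\le i\le h}\|T_G^{-1}J_N W_i\|^2=\lambda_G$ uniformly over the switching realizations produces exactly the stated coefficient $\frac{\gamma\lambda_G}{t\lambda_2}R(t-1)$ (after reconciling $\hat{\varepsilon}(t)$ with $\hat{\varepsilon}(t-1)$, whose discrepancy is $O(1/t)$ and so feeds only the $O(1/t^2)$ remainder).

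The main obstacle is rigorously justifying the replacement of the switching matrix $L_{m(t-1)}$ by $\check{L}$ \emph{inside} the quadratic form $E[x(t-1)^T(L_{m(t-1)}\otimes K_2^TK_2)x(t-1)]$: the state $x(t-1)$ and the index $m(t-1)$ are correlated through the common history $m(t-2),m(t-3),\ldots$, so the expectation does not factor naively, and conditioning on one step back only yields $\sum_v p_{m(t-2),v}L_v$, which need not be connected. Controlling this coupling is precisely the role of Lemma \ref{expectation} (exponential convergence of the marginal law of $m(t)$ to the stationary distribution via the symmetric transition matrix), which confines the residual correlation to the $O(\lambda_L^{t-1})=o(1/t^2)$ term. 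Once this replacement is secured, the remaining computation is identical in form to the fixed case, explaining why the resulting inequality has the same structure with only $\lambda_G$ redefined as the maximum over the $h$ topologies.
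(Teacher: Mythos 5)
Your proposal is correct and follows essentially the same route as the paper's own proof in Appendix \ref{app-Vi}: repeat the fixed-topology computation of Appendix \ref{app-V} with $L_{m(t-1)}$, $W_{m(t-1)}$, replace $L_{m(t-1)}$ by $\check{L}=\sum_{i=1}^{h}\pi_{i}L_{i}$ under expectation via Lemma \ref{expectation} (the exponentially small $O(\lambda_{L}^{t})$ error being absorbed into $O(1/t^{2})$, exactly as in Remark \ref{E[Lt]=L}), and bound the cross term by Cauchy--Schwarz/Young with the uniform-over-realizations bound $\|T_{G}^{-1}J_{N}W_{m(t-1)}\|^{2}\le\lambda_{G}=\max_{1\le i\le h}\|T_{G}^{-1}J_{N}W_{i}\|^{2}$. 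Your only deviations are cosmetic --- using the operator inequality $\check{L}\ge\lambda_{2}J_{N}$ together with $V(t-1)=E[x^{T}(t-1)(J_{N}\otimes K_{2}^{T}K_{2})x(t-1)]$ in place of the paper's explicit diagonalization $T_{G}^{-1}\check{L}T_{G}$ with $\hat{\delta}^{(1)}=0$ --- and the state--chain decoupling you flag is treated at the same conditional-expectation level of rigor as in the paper's own argument.
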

\begin{proof}
See Appendix \ref{app-Vi}.
\end{proof}

\begin{lemma}\label{Ri}
Under Assumptions \ref{assm-AB}-\ref{assm-d} and \ref{assm-Gi}, $R(t)$ satisfies
\begin{align}
R(t)\le  &\Big(1-\frac{2\beta f_{M}\pi_{\min}-\gamma\alpha}{t}\Big)R(t-1)+\frac{\gamma\lambda_{G}}{t\lambda_{2}}V(t-1)
\notag
\\
&+O\Big(\frac{1}{t^{2}}\Big),
\end{align}
where $f_{M}=\min_{i,j\in N_{0}}\{f(|c_{ij}|+M)\}$, $\pi_{\min}=\mathop{\min}\limits_{1\le i \le h}\{\pi_{i}\}$, $\alpha=2\sqrt{\lambda_{QW}}+\lambda_{QL}\lambda_{2}/\lambda_{G}$, $\lambda_{QW}=\mathop{\max}\limits_{1\le i \le h}\{\|QW_{i}\|^{2}\}$, and $\lambda_{QL}=\mathop{\max}\limits_{1\le i \le h}\{\|QL_{i}T_{G}\|^{2}\}$. 
\end{lemma}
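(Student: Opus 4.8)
The plan is to mirror the fixed-topology argument behind Lemma \ref{R} (Appendix \ref{app-R}), replacing the fixed matrices $I_{d}$, $L$, $W$ by their switching counterparts $P_{m(t)}$, $L_{m(t-1)}$, $W_{m(t-1)}$, and then reducing the switching problem to the fixed one by taking expectation over the Markov chain. First I would start from the estimation recursion \eqref{Estimate} and invoke the non-expansiveness of the projection from Remark \ref{projection}. Since Lemma \ref{bounded} guarantees $|K_{2}x_{i}(t)|\le M$, the true compressed state $(Q\otimes K_{2})x(t)$ is a fixed point of $\bm{\Pi_{M}}$, so the projection can be dropped at the cost of an inequality, giving
\[
\|\hat{\varepsilon}(t)\|\le\Big\|\hat{z}(t-1)-(Q\otimes K_{2})x(t)+\tfrac{\beta}{t}P_{m(t)}\big(\mathcal{F}(C-\hat{z}(t-1))-s(t)\big)\Big\|.
\]

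Next I would decompose the right-hand side by inserting $(Q\otimes K_{2})x(t-1)$: the term $\hat{z}(t-1)-(Q\otimes K_{2})x(t-1)$ is exactly $\hat{\varepsilon}(t-1)$, while $(Q\otimes K_{2})\big(x(t)-x(t-1)\big)$ is computed from \eqref{Update} using the identities $K_{2}(A+BK_{1})=K_{2}$ and $K_{2}B=1$ of Remark \ref{K1K2}; this produces the coupling terms $\tfrac{\gamma}{t}(QL_{m(t-1)}\otimes K_{2})x(t-1)$ and $\tfrac{\gamma}{t}QW_{m(t-1)}\hat{\varepsilon}(t)$, the former feeding $V(t-1)$ into the bound. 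The driving increment $\mathcal{F}(C-\hat{z}(t-1))-s(t)$ I would split into a regression part $\mathcal{F}(C-\hat{z}(t-1))-\mathcal{F}(C-(Q\otimes K_{2})x(t))$ and a noise part $\mathcal{F}(C-(Q\otimes K_{2})x(t))-s(t)$. By Assumption \ref{assm-d} the noise part is a martingale difference given the history up to time $t-1$, so its cross term with $\hat{\varepsilon}(t-1)$ vanishes in expectation and its own square contributes only $O(1/t^{2})$; the regression part, via the mean value theorem and monotonicity of $F$, yields the negative (contracting) inner product whose strength is governed by $f_{M}=\min_{i,j\in N_{0}}\{f(|c_{ij}|+M)\}$, since $f(\xi)\ge f_{M}$ on the bounded range.

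The step specific to switching is to take expectation over the Markov chain after expanding $\|\hat{\varepsilon}(t)\|^{2}$. Here I would use Lemma \ref{expectation} to write $E[P_{m(t)}]=\sum_{i=1}^{h}\pi_{i}P_{i}+O(\lambda_{P}^{t})$ and $E[L_{m(t-1)}]=\check{L}+O(\lambda_{L}^{t})$, and by Remark \ref{E[Lt]=L} absorb the exponentially small terms into $O(1/t^{2})$; the spectral properties of $\check{L}$ in Lemma \ref{checkL} then supply $\lambda_{2}$ exactly as in the fixed case. The crucial point is the diagonal of $\sum_{i=1}^{h}\pi_{i}P_{i}$: each edge of $G'$ is active in at least one $G_{i}$ by Assumption \ref{assm-Gi}, so every diagonal entry is at least $\pi_{\min}=\min_{i}\pi_{i}$, which lower-bounds the stationary activation probability of each estimator and converts the contraction coefficient $2\beta f_{M}$ of the fixed case into $2\beta f_{M}\pi_{\min}$. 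Collecting the contracting term, the $V(t-1)$-coupling term with coefficient $\gamma\lambda_{G}/(t\lambda_{2})$ (with $\lambda_{G}$, $\lambda_{QW}$, $\lambda_{QL}$ now taken as maxima over the $h$ topologies so as to dominate every realization), and the $O(1/t^{2})$ remainder then yields the claimed inequality.

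I expect the main obstacle to be the careful separation of the randomness of the topology from the randomness of the measurement noise, together with the implicit appearance of $\hat{\varepsilon}(t)$ on the right-hand side through $\tfrac{\gamma}{t}QW_{m(t-1)}\hat{\varepsilon}(t)$. The delicate part is that the activation matrix $P_{m(t)}$ is correlated with the accumulated error $\hat{\varepsilon}(t-1)$ through the shared chain, so one cannot naively factor $E[P_{m(t)}\hat{\varepsilon}(t-1)]$; the resolution is to lean on the geometric ergodicity of Lemma \ref{expectation}, where $E[P_{m(t)}]$ approaches its stationary value at rate $O(\lambda_{P}^{t})$, so that the stationary diagonal bound $\pi_{\min}$ controls the contraction up to a transient of order $O(\lambda_{P}^{t})\subset O(1/t^{2})$. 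Meanwhile the conditionally mean-zero noise factors cleanly because $m(t)$ and the noise at time $t$ are conditionally independent given the past. Finally the self-referential $O(\gamma/t)$ term must be moved to the left and shown to perturb the coefficients only at order $O(1/t^{2})$, exactly as in the fixed-topology proof, so that the leading-order structure matching Lemma \ref{R} is preserved.
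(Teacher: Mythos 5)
Your proposal follows essentially the same route as the paper's proof in Appendix \ref{app-Ri}: repeat the fixed-topology argument of Lemma \ref{R} with $P_{m(t)}$, $L_{m(t-1)}$, $W_{m(t-1)}$ in place of $I_{d}$, $L$, $W$, bound the $W$- and $L$-dependent terms realization-wise through the worst-case constants $\lambda_{QW}$, $\lambda_{QL}$, $\lambda_{G}$, and extract the contraction $2\beta f_{M}\pi_{\min}$ via Lemma \ref{expectation} together with $\sum_{i=1}^{h}\pi_{i}P_{i}\ge \pi_{\min}I_{d}$, absorbing the $O(\lambda_{P}^{t})$ transient into $O(1/t^{2})$. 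The only cosmetic deviations are that the paper's appendix recursion drives the update with $\hat{\varepsilon}(t-1)$ rather than $\hat{\varepsilon}(t)$ (so the self-referential term you propose moving to the left never actually arises), and that the chain/error correlation you rightly flag is handled by the paper with the same conditional-expectation appeal to Lemma \ref{expectation} recorded in Remark \ref{E[Lt]=L}.
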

\begin{proof}
See Appendix \ref{app-Ri}.
\end{proof}

Since the lemmas \ref{Vi}-\ref{Ri} have the same form as the lemmas \ref{V}-\ref{R}, we can just repeat the analysis process of the fixed topology case to get the results of the switching topology case, as the following theorems. 

\begin{theorem}\label{compressed-switching}
Under Assumptions \ref{assm-AB}-\ref{assm-d} and \ref{assm-Gi}, the compressed states $K_{2}x_{j}(t)$ and their estimates $\hat{z}_{ij}(t)$ satisfy:
\begin{enumerate}[i)]
\item When $\beta>\frac{1}{2f_{M}\pi_{\min}}(\frac{\gamma \lambda_{G}^{2}}{\lambda_{2}^{3}}+\gamma \alpha)$, the compressed states reach consensus and their estimates converge to the real compressed states, i.e., for $i=1,\ldots,N$, $j\in N_{i}$,
$$
\lim_{t\to\infty}E[\|K_{2}x_{i}(t)-K_{2}\bar{x}(t)\|^{2}]=0,
$$
$$
\lim_{t\to\infty}E[\|\hat{z}_{ij}(t)-K_{2}x_{j}(t)\|^{2}]=0;
$$
\item When $\beta>\frac{1}{2f_{M}\pi_{\min}}(\frac{\gamma^{2}\lambda_{G}^{2}}{\lambda_{2}^{2}(\gamma \lambda_{2}-1)}+\gamma\alpha+1)$ and $\gamma>\frac{1}{\lambda_{2}}$, the compressed states reach consensus at the rate of $O(\frac{1}{t})$, and their estimates converge to the real compressed states at the rate of $O(\frac{1}{t})$, i.e., for $i=1,\ldots, N$, $j\in N_{i}$,
$$
E[\|K_{2}x_{i}(t)-K_{2}\bar{x}(t)\|^{2}]=O\Big(\frac{1}{t}\Big),
$$
$$
E[\|\hat{z}_{ij}(t)-K_{2}x_{j}(t)\|^{2}]=O\Big(\frac{1}{t}\Big),
$$
\end{enumerate}
where $\bar{x}(t)=\frac{1}{N}\sum_{i=1}^{N}x_{i}(t)$.
\end{theorem}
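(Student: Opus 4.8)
The plan is to mirror the proof of Theorem~\ref{compressed} almost verbatim, exploiting the fact that Lemmas~\ref{Vi}--\ref{Ri} deliver precisely the same coupled iterative inequalities as Lemmas~\ref{V}--\ref{R}. Concretely, I would again collect the two Lyapunov functions into $Z(t)=(V(t),R(t))^{T}$ and observe that Lemmas~\ref{Vi}--\ref{Ri} give
$$\|Z(t)\|\le\Big\|\Big(I-\frac{1}{t}U\Big)Z(t-1)\Big\|+O\Big(\frac{1}{t^{2}}\Big),$$
where now $U=\begin{bmatrix}u_{1}&u_{2}\\u_{2}&u_{4}\end{bmatrix}$ with $u_{1}=\gamma\lambda_{2}$, $u_{2}=\gamma\lambda_{G}/\lambda_{2}$, and the sole structural change being $u_{4}=2\beta f_{M}\pi_{\min}-\gamma\alpha$; the extra $\pi_{\min}$ factor is exactly what Lemma~\ref{Ri} contributes. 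Since the abstract rate result Lemma~\ref{Z} depends only on $\lambda_{\min}(U)$, the entire argument reduces to controlling this smallest eigenvalue through the hypotheses on $\beta$ and $\gamma$.

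For part~i) I would compute
$$\lambda_{\min}(U)=\frac{1}{2}\Big(u_{1}+u_{4}-\sqrt{(u_{1}+u_{4})^{2}-4(u_{1}u_{4}-u_{2}^{2})}\Big),$$
and verify that the hypothesis $\beta>\frac{1}{2f_{M}\pi_{\min}}\big(\frac{\gamma\lambda_{G}^{2}}{\lambda_{2}^{3}}+\gamma\alpha\big)$ is nothing but the inequality $u_{1}u_{4}>u_{2}^{2}$; combined with $u_{1}>0$ this forces $\lambda_{\min}(U)>0$. Invoking Lemma~\ref{Z} then yields $\|Z(t)\|\to0$, whence $V(t)\to0$ and $R(t)\to0$ by Remark~\ref{vrz}. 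The passage back to $\lim_{t\to\infty}E[\|K_{2}x_{i}(t)-K_{2}\bar{x}(t)\|^{2}]=0$ and $\lim_{t\to\infty}E[\|\hat{z}_{ij}(t)-K_{2}x_{j}(t)\|^{2}]=0$ proceeds through $\delta(t)$ and $\hat{\varepsilon}(t)$ exactly as in Theorem~\ref{compressed}. For part~ii) the strengthened hypothesis $\beta>\frac{1}{2f_{M}\pi_{\min}}\big(\frac{\gamma^{2}\lambda_{G}^{2}}{\lambda_{2}^{2}(\gamma\lambda_{2}-1)}+\gamma\alpha+1\big)$ rewrites as $\frac{u_{2}^{2}}{u_{1}-1}+1<u_{4}$, while $\gamma>\frac{1}{\lambda_{2}}$ guarantees $u_{1}>1$; multiplying by $u_{1}-1>0$ and comparing $(u_{1}+u_{4})^{2}-4(u_{1}u_{4}-u_{2}^{2})$ with $(u_{1}+u_{4}-2)^{2}$ shows $\lambda_{\min}(U)>1$, so Lemma~\ref{Z} delivers $\|Z(t)\|=O(\frac{1}{t})$ and hence both $O(\frac{1}{t})$ rates after the same back-substitution.

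The genuine difficulty lies not in this theorem but in establishing that Lemmas~\ref{Vi}--\ref{Ri} truly match their fixed-topology counterparts, where Lemmas~\ref{expectation} and~\ref{checkL} carry the load: one replaces the random $L_{m(t)}$, $P_{m(t)}$, $W_{m(t)}$ by the averaged $\check{L}=\sum_{i=1}^{h}\pi_{i}L_{i}$ and its companions, and argues that the Markovian transient terms $O(\lambda_{L}^{t})$, $O(\lambda_{P}^{t})$ decay exponentially and are therefore absorbed into the $O(\frac{1}{t^{2}})$ residual, as noted in Remark~\ref{E[Lt]=L}. The delicate point is the $\pi_{\min}$ appearing in $u_{4}$: each edge is active only a stationary-probability fraction of the time, so the effective estimation gain is diluted and the stationary probabilities must be tracked carefully when lower-bounding the contribution of $P_{m(t)}$ in Lemma~\ref{Ri}. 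Granted Lemmas~\ref{Vi}--\ref{Ri}, however, the present theorem is merely a replay of the eigenvalue bookkeeping above.
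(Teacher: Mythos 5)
Your proposal is correct and matches the paper's own treatment exactly: the paper likewise observes that Lemmas~\ref{Vi}--\ref{Ri} have the same form as Lemmas~\ref{V}--\ref{R} (with $f_{M}$ replaced by $f_{M}\pi_{\min}$ in $u_{4}$ and the constants $\lambda_{2}$, $\lambda_{G}$, $\alpha$ redefined via $\check{L}$ and the maxima over topologies) and then repeats verbatim the eigenvalue analysis of $U$ from Theorem~\ref{compressed}, invoking Lemma~\ref{Z} and back-substituting through $V(t)$, $R(t)$. Your checks that the hypotheses on $\beta$ and $\gamma$ translate into $u_{1}u_{4}>u_{2}^{2}$ (part~i) and $u_{2}^{2}+u_{1}-1<u_{4}(u_{1}-1)$ with $u_{1}>1$ (part~ii) are exactly the paper's bookkeeping, so there is nothing to add.
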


\begin{theorem}\label{consensus-switching}
Under Assumptions \ref{assm-AB}-\ref{assm-d} and \ref{assm-K}-\ref{assm-Gi}, the following results are obtained:
\begin{enumerate}[i)]
\item The MAS \eqref{MAS}-\eqref{canonical} reaches consensus, i.e.,
$$
\lim_{t\to\infty}E[\|x_{i}(t)-\bar{x}(t)\|^{2}]=0,
$$
providing with $\beta>\frac{1}{2f_{M}\pi_{\min}}(\frac{\gamma \lambda_{G}^{2}}{\lambda_{2}^{3}}+\gamma \alpha)$;

\item The MAS \eqref{MAS}-\eqref{canonical} reaches consensus at the rate of $O(\frac{1}{t})$, i.e.,
$$
E[\|x_{i}(t)-\bar{x}(t)\|^{2}]=O\left(\frac{1}{t}\right),
$$
if $\beta>\frac{1}{2f_{M}\pi_{\min}}(\frac{\gamma^{2}\lambda_{G}^{2}}{\lambda_{2}^{2}(\gamma \lambda_{2}-1)}+\gamma\alpha+1)$ and $\gamma>\frac{1}{\lambda_{2}}$.
\end{enumerate}
\end{theorem}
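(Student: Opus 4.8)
The plan is to follow the proof of Theorem \ref{consensus} essentially verbatim, replacing every appeal to the fixed-topology compressed-state result (Theorem \ref{compressed}) with its Markovian counterpart (Theorem \ref{compressed-switching}). The decisive observation is that the bridge from compressed states to original states --- Lemmas \ref{kx-x} and \ref{difference-equation} together with Remark \ref{equivalence} --- is a statement about the Brunovsky dynamics \eqref{canonical} and the compression coefficient $K_{2}$ alone; it never references the communication graph. Hence the difference-equation identity of Lemma \ref{kx-x} and the convergence transfer of Lemma \ref{difference-equation} hold unchanged under switching topologies, and the whole argument reduces to feeding the switching-case compressed-state results into that topology-free machinery.

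For part i), I would first invoke Theorem \ref{compressed-switching} i): under $\beta>\frac{1}{2f_{M}\pi_{\min}}(\frac{\gamma \lambda_{G}^{2}}{\lambda_{2}^{3}}+\gamma \alpha)$ the compressed states satisfy $\lim_{t\to\infty}E[\|K_{2}x_{i}(t)-K_{2}\bar{x}(t)\|^{2}]=0$. Because $x_{i}(t)-\bar{x}(t)$ is a linear combination of the agent states, Remark \ref{equivalence} yields that $x_{in}(t)-\bar{x}_{n}(t)$ and $K_{2}(x_{i}(t)-\bar{x}(t))$ obey the stochastic difference equation \eqref{x_in}. Applying Lemma \ref{difference-equation} i) with forcing term $\mathds{D}^{n-1}K_{2}(x_{i}(t)-\bar{x}(t))$, whose mean-square limit is $0$ (so the corresponding $\xi^{*}$ is also $0$), gives $\lim_{t\to\infty}E[(x_{in}(t)-\bar{x}_{n}(t))^{2}]=0$. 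The chain structure of \eqref{canonical} provides the shift relation $x_{ij}(t)=x_{in}(t-n+j)$, so the same limit propagates to every coordinate, whence $\lim_{t\to\infty}E[\|x_{i}(t)-\bar{x}(t)\|^{2}]=0$.

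For part ii), I would instead invoke Theorem \ref{compressed-switching} ii): when $\beta>\frac{1}{2f_{M}\pi_{\min}}(\frac{\gamma^{2}\lambda_{G}^{2}}{\lambda_{2}^{2}(\gamma \lambda_{2}-1)}+\gamma\alpha+1)$ and $\gamma>\frac{1}{\lambda_{2}}$, the compressed consensus error is $O(\frac{1}{t})$, so the forcing term in \eqref{x_in} has mean-square size $O(\frac{1}{t})$. Lemma \ref{difference-equation} ii) then propagates this rate to $E[(x_{ij}(t)-\bar{x}_{j}(t))^{2}]=O(\frac{1}{t})$ for each $j$, giving $E[\|x_{i}(t)-\bar{x}(t)\|^{2}]=O(\frac{1}{t})$. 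The only point that genuinely needs checking --- and the natural place to worry --- is whether the difference-equation identity of Lemma \ref{kx-x} survives the switching of the consensus term inside $u_{i}(t)$; but since that identity is derived solely from the companion structure of $A$, $B$ and the definition of $K_{2}$, and is indifferent to how $u_{i}(t)$ is generated, it transfers immediately. Thus the substantive difficulty of the switching case is entirely absorbed into Theorem \ref{compressed-switching} (equivalently, Lemmas \ref{Vi} and \ref{Ri}), and the present theorem follows with no new estimates.
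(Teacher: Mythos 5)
Your proposal is correct and matches the paper's own route exactly: the paper proves Theorem \ref{consensus-switching} by noting that Lemmas \ref{Vi}--\ref{Ri} have the same form as Lemmas \ref{V}--\ref{R} (with $2\beta f_{M}$ replaced by $2\beta f_{M}\pi_{\min}$) and then repeating the fixed-topology analysis, i.e., feeding Theorem \ref{compressed-switching} into the topology-free transfer machinery of Lemmas \ref{kx-x}--\ref{difference-equation}. Your explicit check that the difference-equation identity of Lemma \ref{kx-x} is independent of how $u_{i}(t)$ is generated is exactly the right justification for why that machinery carries over unchanged.
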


\begin{remark}
The difference between the results of the Markovian switching topologies and the fixed topology is that the step coefficient $\beta$ depends on the switching probability of Markovian switching topologies, i.e., the minimal stationary distribution $\pi_{\min}$ of the Markovian chain $\{m(t):t\in\mathds{N}\}$.
In the case of a single topology, it is straightforward to observe that $\pi_{\min} = 1$. Consequently, Theorems \ref{compressed-switching}-\ref{consensus-switching} reduce to the results for the fixed topology case, as stated in Theorems \ref{compressed}-\ref{consensus}.
\end{remark}

\section{Numerical Simulation}\label{sec5}
In this section, we provide two simulation examples to illustrate the theoretical results. 

\begin{example}\label{example-fixed}
Consider the altitude consensus control of a multi-aircraft system composed of seven aircraft (\cite{wang-consensus-IJSS2015} and \cite{slotine-applied-1991}), whose communication network is shown as Figure \ref{topology}.

\begin{figure}[h]
\centering
\includegraphics[width = 0.2\textwidth]{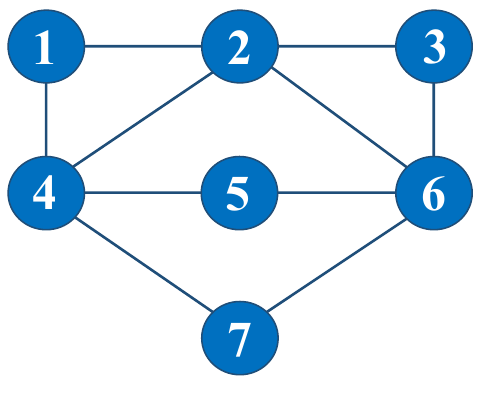}
\caption{The fixed communication topology of the multi-agent system}
\label{topology}
\end{figure}

The schematic diagram of the aircraft is shown in Figure \ref{aircraft}, where $L_{W}$ denotes the lift force applied at the center of lift $C_{L}$; $C_{G}$ is the center of mass; $d$ is the distance between $C_{L}$ and $C_{G}$. The mass of aircraft is denoted by $m$ and its moment of inertia about $C_{G}$ is denoted by $J$. The altitude of the $i$th aircraft is denoted by $h_{i}$, which is controlled by the elevator’s rotation $E_{i}$.

\begin{figure}[h]
\centering
\includegraphics[width = 0.4\textwidth]{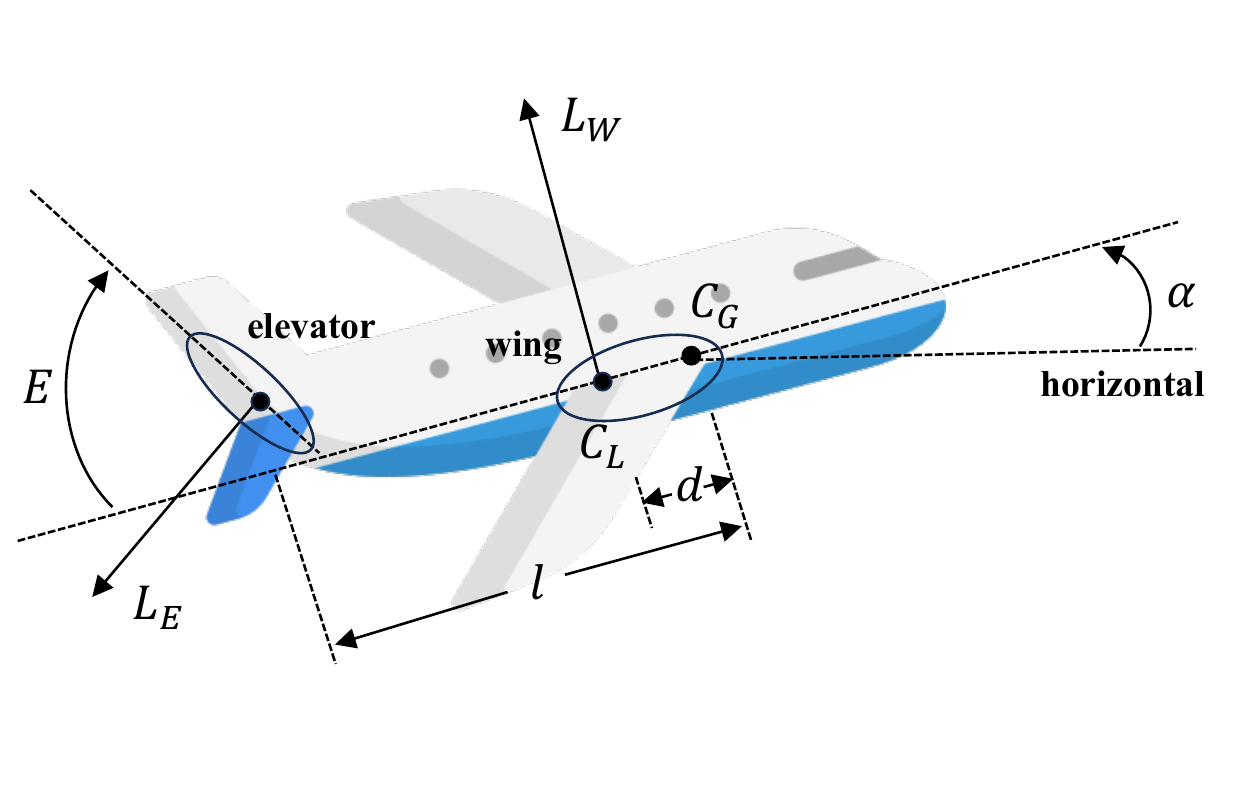}
\caption{A schematic diagram of an aircraft}
\label{aircraft}
\end{figure}

By \cite{slotine-applied-1991}, the dynamics of the $i$th aircraft can be modelled as follows:
\begin{equation}\label{dynamics}
\begin{cases}
J\ddot{\alpha}_{i}+b\dot{\alpha}_{i}+(C_{ZE}l+C_{ZW}d)\dot{\alpha}_{i}=C_{ZE}lE_{i},
\\
m\ddot{h}_{i}=(C_{ZE}+C_{ZW})\alpha_{i}-C_{ZE}E_{i},
\end{cases}
\end{equation}
where $\alpha_{i}$ is the rotation angle of the $i$th aircraft about $C_{G}$; $b$ is the friction coefficient; $L_{E}=C_{ZE}(E_{i}-\alpha_{i})$ is the aerodynamic force on the elevator.

In this numerical simulation, similar to \cite{wang-consensus-IJSS2015}, we set $J=1, m=1, b=4, C_{ZE}=1, C_{ZW}=5, l=3$ and $d=0.2$.
Let $x_{i}=(\alpha_{i},\dot{\alpha}_{i},h_{i},\dot{h}_{i})^{T}$. Then, the dynamics of the $i$th aircraft can be rewritten as
$$
\dot{x}_{i}=A_{c}x_{i}+B_{c}E_{i},
$$
where $$A_{c}=\begin{bmatrix}
0 & 1 & 0 & 0\\
-4&-4&  0& 0\\
0 & 0 & 0 &1\\
6 & 0 & 0&0
\end{bmatrix}, B_{c}=
\begin{bmatrix}
0\\
3\\
0\\
1
\end{bmatrix}.$$

Due to the wide application of digital networks, the data is sampled in practice. In the simulation, the sampling period is set to be $T=0.5$.
Then, by adopting the zero-order holder strategy, the discretization of \eqref{dynamics} is obtained as follows:
$$
x_{i}(t+1)=A_{d}x_{i}(t)+B_{d}E_{i}(t),
$$
where 
\begin{align}\label{AB}
A_{d}&=e^{A_{c}T}=\begin{bmatrix}
0.7358 & 0.1839 &0 &0\\
-0.7358 &0 &0&0\\
0.7073&0.0777&1&0.5\\
2.6891&0.3964&0&1
\end{bmatrix}, \notag
\\
B_{d}&=\int_{0}^{T}e^{A_{c}t}dtB_{c}=
\begin{bmatrix}
0.1982\\
0.5518\\
-0.093\\
-0.2668
\end{bmatrix}.
\end{align}
It is easily verified that the linear system described by \eqref{AB} is controllable. Therefore, it can be transformed into the Brunovsky canonical form defined by \eqref{canonical}.

In this simulation example, we set the communication noises $d_{ij}\sim N(0, 16)$ for $i\in N_{0}$, $j\in N_{i}$.
Then, assume that the initial altitudes $h_{i}$ of these seven aircraft are 
$h_{1}=5, h_{2}=2, h_{3}=4, h_{4}=3, h_{5}=1.5, h_{6}= 2.5, h_{7}=1$ and the initial values of $\alpha_{i},\dot{\alpha}_{i},\dot{h}_{i}$ are $0$.
The unit of $h_{i}$ can be selected as kilometers, meters, etc., according to the actual situation. Denote the initial estimates as $\vec{0}_{20}$.
Moreover, select $\beta=1500$ and $\gamma=1$. 
Then, we apply the consensus algorithm in Algorithm \ref{algorithm1}, with the thresholds $c_{ij}=-2$, the controller gain $K_{1}=[-0.9224, -0.1825, -0.0000, -0.1788]$ and the compression coefficient $K_{2}=[3.8734, 0.9054, 0.3575, 0.8772]$.
Given the initial states $x_{i}^{0}$ and compression coefficient $K_{2}$, choose $M=2$ as appropriate.

As shown in Figure \ref{height}, the altitudes of all agents reach consensus. 
Besides, Figure \ref{rate} shows a linear relationship between the logarithm of the mean square errors (MSEs) and the logarithm of the index $t$, which illustrates each agent can reach consensus at the rate of $O(\frac{1}{t})$. 

\begin{figure}[h]
\centering
\setcounter{figure}{2}
\includegraphics[width = 0.45\textwidth]{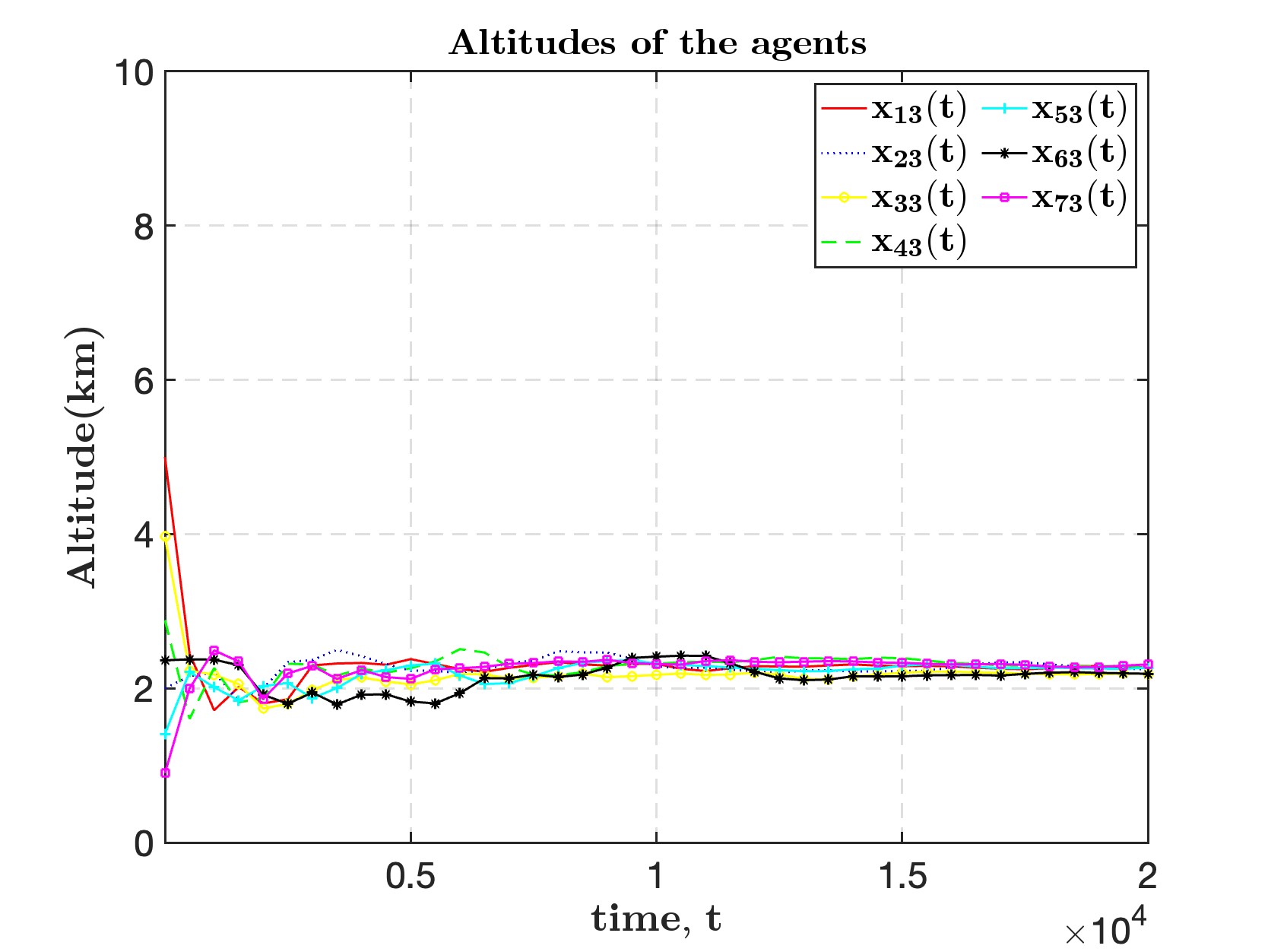}
\caption{The altitude trajectory of each aircraft under fixed topology}
\label{height}
\end{figure}

\begin{figure}[h]
\centering
\setcounter{figure}{3}
\includegraphics[width = 0.45\textwidth]{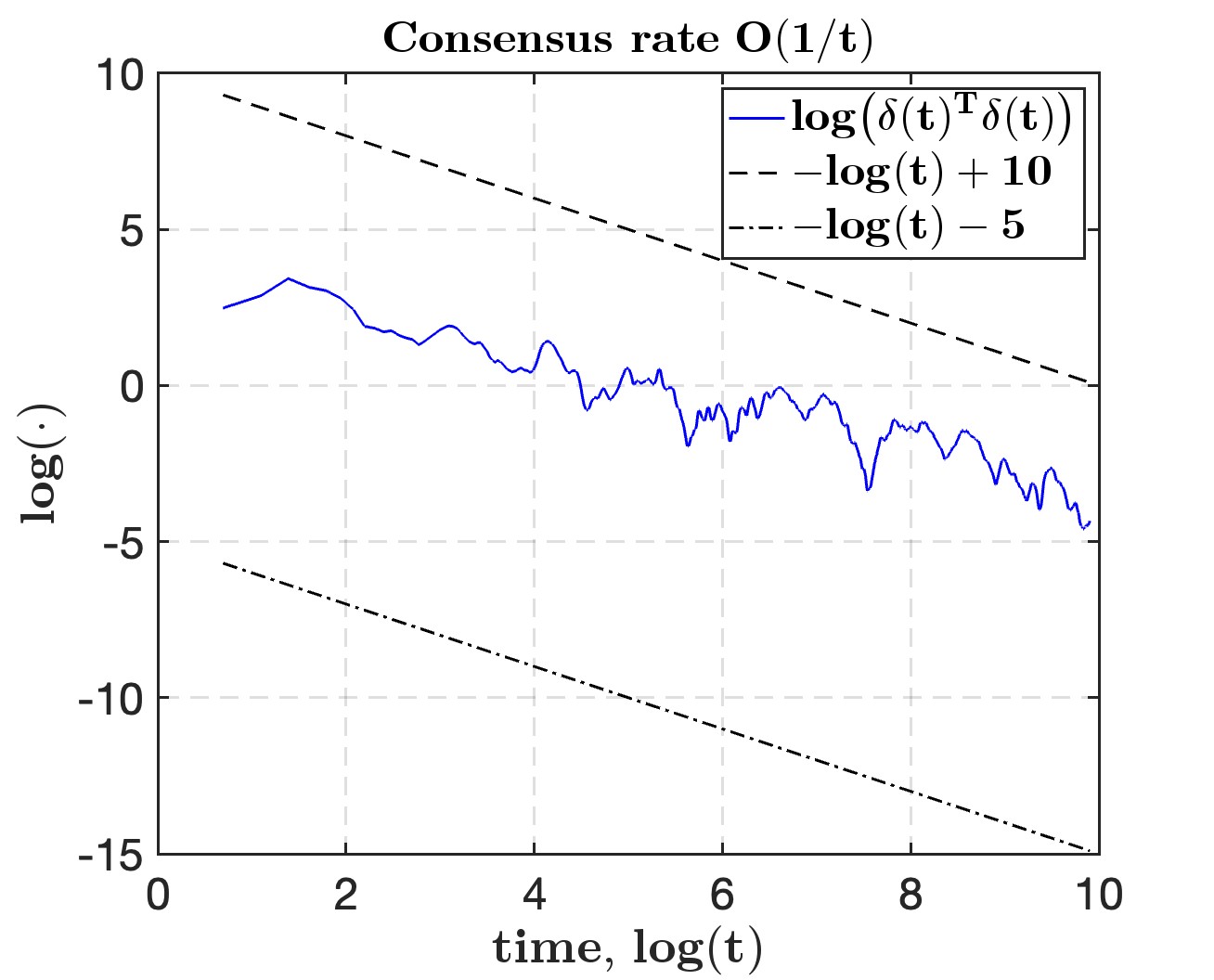}
\caption{The trajectory of the log(MSEs)  under fixed topology}
\label{rate}
\end{figure}

\end{example}

We next construct another simulation example to demonstrate the consensus of the MAS under Markovian switching communication networks.

\begin{figure}[h]
\centering
\includegraphics[width = 0.45\textwidth]{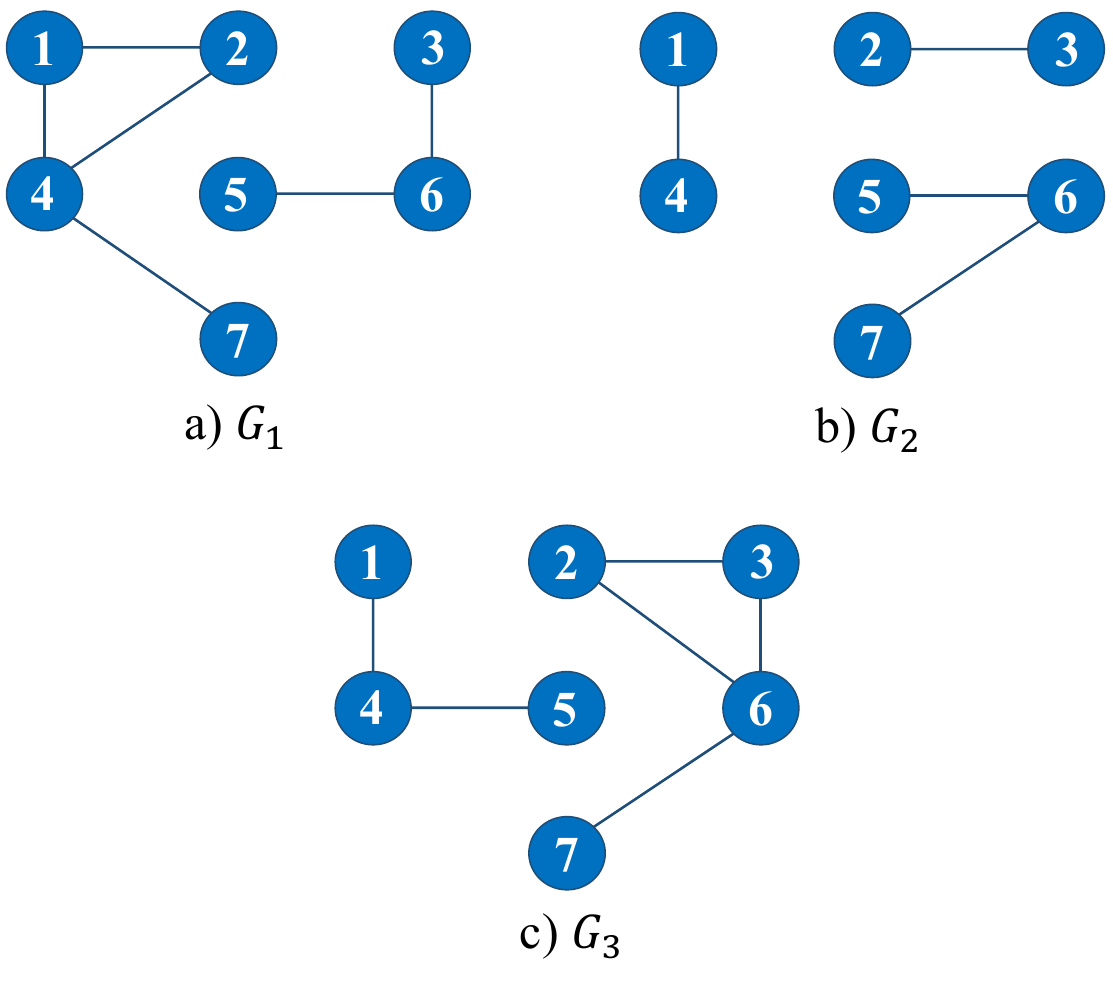}
\caption{The Markovian switching communication topologies}
\label{markov}
\end{figure}

\begin{example}\label{example-markov}
Take the same system settings, including the system model, noise distribution, and initial states, as those in Example \ref{example-fixed}.
The Markovian switching topologies $G_{m(t)}$ with $m(t)\in\{1,2,3\}$ are depicted in Figure \ref{markov}.
The transition probability matrix of the Markov chain $m(t)$ is chosen to be $P=
\begin{bmatrix}
0.5 & 0.3 & 0.2\\
0.2&0.5&0.3\\
0.3&0.2&0.5
\end{bmatrix}
$ with its stationary distribution is $\pi=[1/3, 1/3, 1/3]$.
It is clear that Assumption \ref{assm-Gi} is satisfied.

By Theorem \ref{consensus-switching}, we select $\beta=10000$, and $\gamma=2.4$.
Then, apply the consensus algorithm with the same thresholds $c_{ij}$, the controller gain $K_{1}$, and the compression coefficient $K_{2}$ as Example \ref{example-fixed}.
Figure \ref{markov-consensus} shows the consensus of the MAS under Markovian switching topologies.
Moreover, Figure \ref{markov-rate} illustrates the consensus rate $O(\frac{1}{t})$ of the MAS, which is the same as Example \ref{example-fixed}.

\begin{figure}[h]
\centering
\includegraphics[width = 0.45\textwidth]{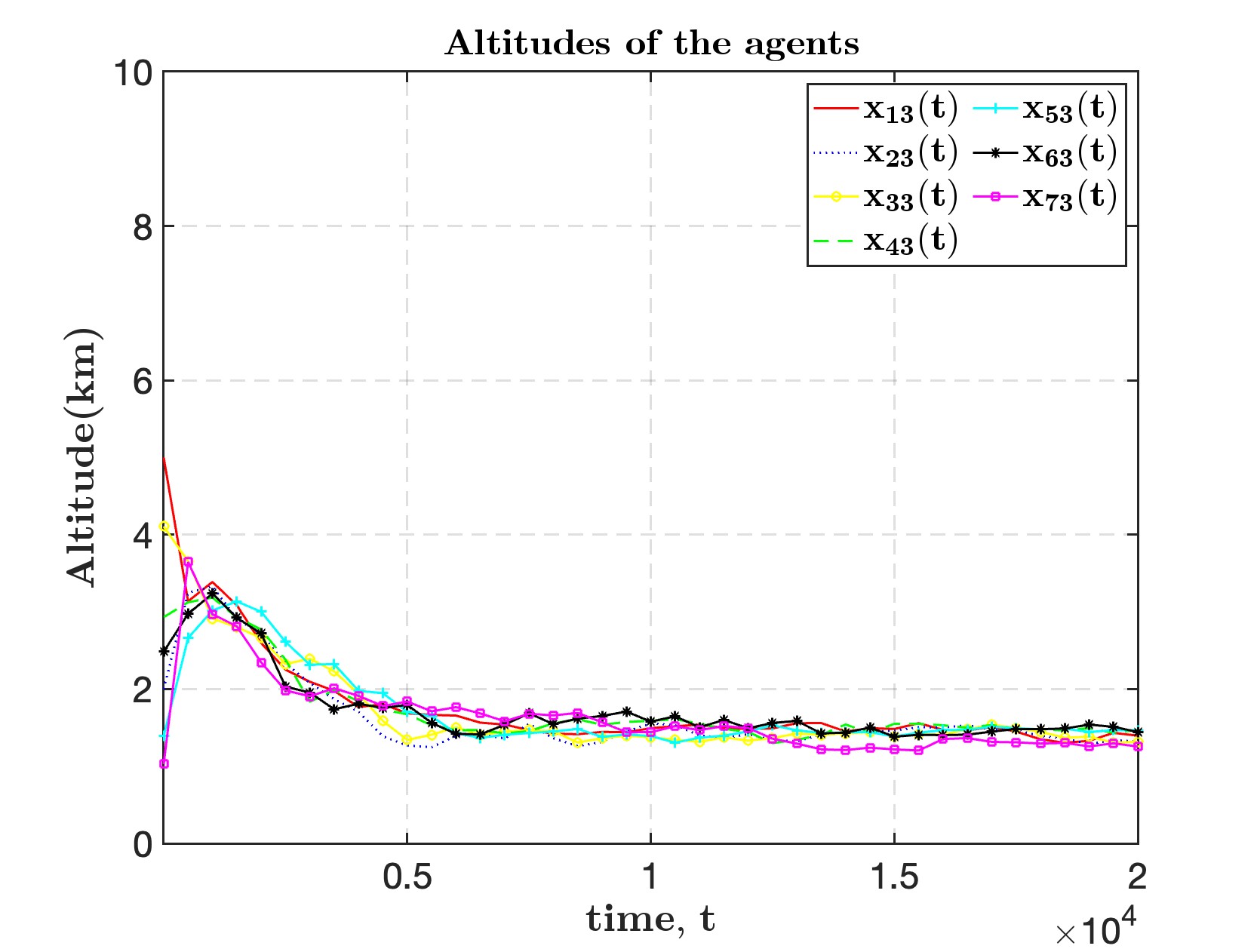}
\caption{The altitude trajectory of each aircraft under Markovian switching topologies}
\label{markov-consensus}
\end{figure}

\begin{figure}[h]
\centering
\includegraphics[width = 0.45\textwidth]{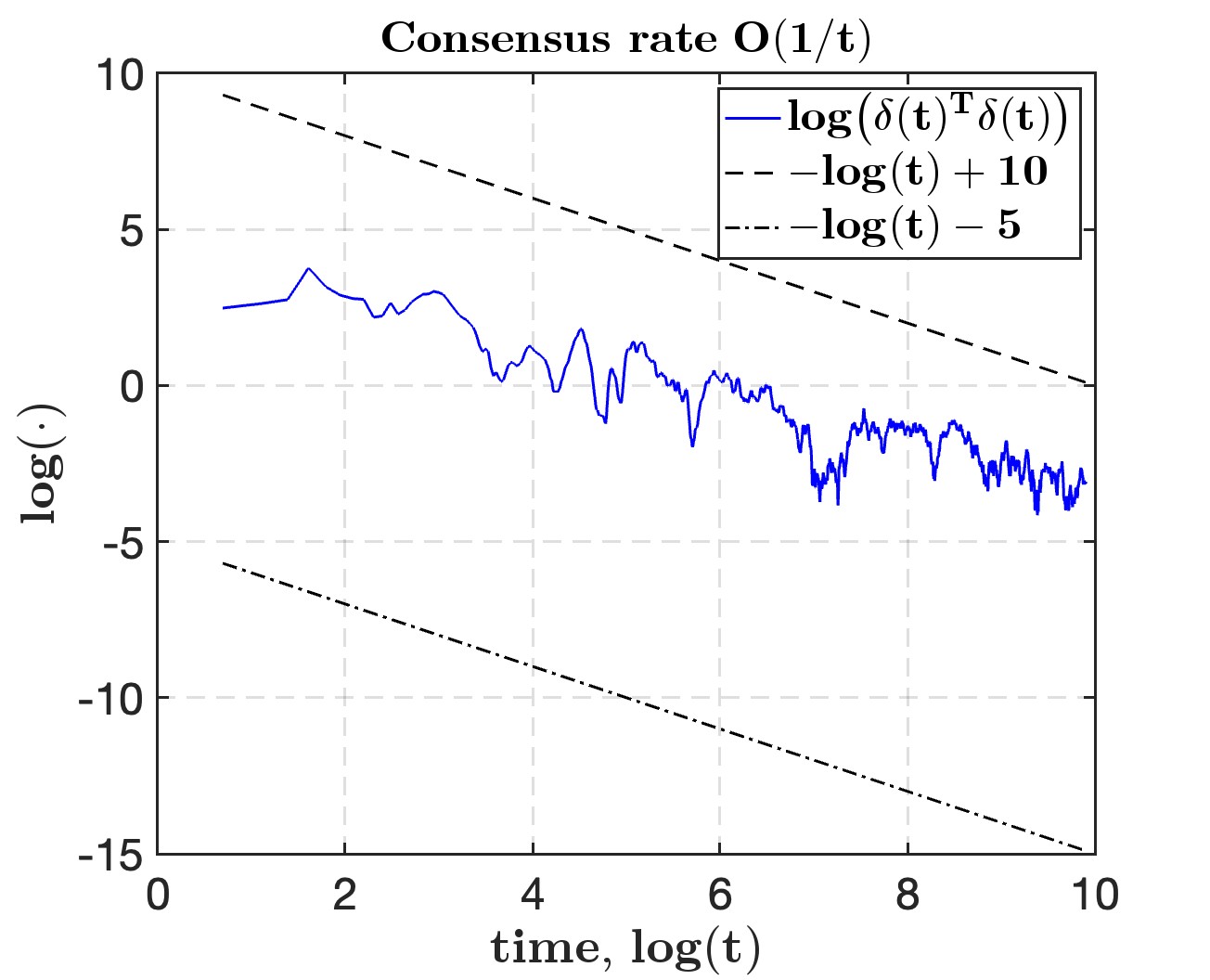}
\caption{The trajectory of the log(MSEs) under Markovian switching topologies}
\label{markov-rate}
\end{figure}

\begin{remark}
The connectivity and switching probability of communication topologies affect the selection of the step coefficients $\beta$ and $\gamma$. 
Specifically, the values of $\beta$ and $\gamma$ are affected by the algebraic connectivity $\lambda_{2}$ of the topology graph and minimum stationary distribution probability $\pi_{\min}$ related to the switching probability.
To achieve consensus, the smaller values of $\lambda_{2}$ and $\pi_{\min}$, the larger the step coefficients required, as shown in Examples \ref{example-fixed}-\ref{example-markov}.
To be specific, in the Markovian switching topology case, both the decrease of $\lambda_{2}$ and $\pi_{\min}$ increase the lower bound on the required estimation step coefficient $\beta$ and control step coefficient $\gamma$, necessitating a larger estimation step coefficient $\beta$.
This explains why the estimation step coefficient $\beta$ is selected as $\beta=1500$ in Example \ref{example-fixed} and $\beta=10000$ in Example \ref{example-markov}.
\end{remark}

\end{example}

\section{Conclusion}\label{sec-conclusion}
This paper investigates the one-bit consensus of controllable linear MASs with communication noises. 
A consensus algorithm consists of a communication protocol and a consensus controller is designed.
The communication protocol introduces a linear compression encoding function to achieve one-bit communication, which significantly saves communication costs.
A consensus controller with both a stabilization term and a consensus term is proposed to ensure consensus of an unstable MAS.
Two Lyapunov functions for the consensus error and estimation error of the compressed states are constructed.
By jointly analyzing the convergence property of them, it is shown that the compressed states can achieve consensus, and the estimates can converge to the real values.
Moreover, by establishing the consensus equivalence between the compressed and original states, it is proven that all agents can achieve consensus at the rate of the reciprocal of the iteration times, both in the case with fixed and switching communication networks.

In the future, there will be many interesting problems in one-bit consensus control.
For example, the one-bit consensus with event-triggered communication mechanisms has attracted wide attention, yet there are still many open questions and challenges in this area that warrant further investigation.

\appendices

\section{The proof about compressed states and estimates}
\subsection{The proof of Lemma \ref{V}}\label{app-V}
From $\delta(t)=(J_{N}\otimes I_{n})x(t)$, $L\vec{1}_{N}=\vec{1}_{N}^{T}L=0$, and $J_{N}L=LJ_{N}$, we have
\begin{align*}\label{delta(t)}
\delta(t)=&(J_{N}\otimes I_{n})x(t)
\\
=&(J_{N}\otimes I_{n})(I_{n}\otimes \tilde{A}-\frac{\gamma}{t}L\otimes BK_{2})x(t-1)
\\
&+\frac{\gamma}{t}(J_{N}W\otimes B)\hat{\varepsilon}(t-1)
\\
=&(I_{N}\otimes \tilde{A}-\frac{\gamma}{t}L\otimes BK_{2})\delta(t-1)
\\
&+\frac{\gamma}{t}(J_{N}W\otimes B)\hat{\varepsilon}(t-1),
\tag{A1}
\end{align*}
where $\tilde{A}=A+BK_{1}$, $\delta(t)=[\delta_{1}^{T}(t),\ldots,\delta_{N}^{T}(t)]^{T}\in\mathds{R}^{nN}$, and $\delta_{i}(t)=x_{i}(t)-\bar{x}(t)$ for $i=1,\ldots,N$.

Denote $\tilde{\delta}(t)=(T_{G}^{-1}\otimes I_{n})\delta(t)$. 
Denote the first $n$ elements of $\tilde{\delta}(t)$ by $\tilde{\delta}^{(1)}(t)$, and the others by $\tilde{\delta}^{(2)}(t)$. Since the first row of $T_{G}^{-1}$ is $\frac{1}{\sqrt{N}}\vec{1}_{N}^{T}$, we know $\tilde{\delta}^{(1)}(t)=\frac{1}{\sqrt{N}}\sum_{i=1}^{N}\delta_{i}(t)=\vec{0}_{n}$.

Denote $\hat{\delta}(t)=(I_{N}\otimes K_{2})\tilde{\delta}(t)$.
Denote the first element of $\hat{\delta}(t)$ by $\hat{\delta}^{(1)}(t)$, and the others by $\hat{\delta}^{(2)}(t)$.
Since $\hat{\delta}^{(1)}(t)=K_{2}\tilde{\delta}^{(1)}(t)$ and $\tilde{\delta}^{(1)}(t)=\vec{0}_{n}$, we know $\hat{\delta}^{(1)}(t)=0.$
Besides, it can be seen that $V(t)=E[\|\hat{\delta}(t)\|^{2}]=E[\hat{\delta}^{T}(t)\hat{\delta}(t)]=E[\hat{\delta}^{(2)T}(t)\hat{\delta}^{(2)}(t)]$.

Then, by Remark \ref{K1K2} and \eqref{delta(t)}, we have
\begin{align*}\label{V-v1v2}
V(t)=&E[\|(T_{G}^{-1}\otimes K_{2})\delta(t)\|^{2}]
\\
=&E\big[\big\|(T_{G}^{-1}\otimes K_{2})\big((I_{N}\otimes \tilde{A}-\frac{\gamma}{t}L\otimes BK_{2})\delta(t-1)
\\
&+\frac{\gamma}{t}(J_{N}W\otimes B)\hat{\varepsilon}(t-1)\big)\big\|^{2}\big]
\\
=&E\big[\big\|(I_{N}\otimes K_{2}-\frac{\gamma}{t}T_{G}^{-1}LT_{G}\otimes K_{2})\tilde{\delta}(t-1)
\\
&+\frac{\gamma}{t}(T_{G}^{-1}J_{N}W)\hat{\varepsilon}(t-1)\big\|^{2}\big]
\\
=&E\big[\big\|(I_{N}-\frac{\gamma}{t}T_{G}^{-1}LT_{G})\hat{\delta}(t-1)
\\
&+\frac{\gamma}{t}(T_{G}^{-1}J_{N}W)\hat{\varepsilon}(t-1)\big\|^{2}\big]
\\
=&E[\hat{\delta}^{T}(t-1)(I_{N}-\frac{\gamma}{t}T_{G}^{-1}LT_{G})^{2}\hat{\delta}(t-1)]
\\
&+\frac{2\gamma}{t}E[\hat{\delta}^{T}(t-1)(I_{N}-\frac{\gamma}{t}T_{G}^{-1}LT_{G})
\\
&\cdot(T_{G}^{-1}J_{N}W)\hat{\varepsilon}(t-1)]+O\Big(\frac{1}{t^{2}}\Big).
\tag{A2}
\end{align*}
Denote the first and second items of formula \eqref{V-v1v2} as $V_{1}(t)$ and $V_{2}(t)$, respectively, i.e., $V(t)\triangleq V_{1}(t)+V_{2}(t)+O\big(\frac{1}{t^{2}}\big)$. 

By $\hat{\delta}^{(1)}(t)=0$, we have $\hat{\delta}^{T}(t)(I_{N}-\frac{\gamma}{t}\text{diag}(0,\lambda_{2},\ldots,\lambda_{N}))$
$\cdot\hat{\delta}(t)=\hat{\delta}^{(2)T}(t)(I_{N-1}-\frac{\gamma}{t}\text{diag}(\lambda_{2},\ldots,\lambda_{N}))\hat{\delta}^{(2)}(t)$.
Therefore, by Lemma \ref{L}, one can get
\begin{align*}\label{v1}
V_{1}(t)=&E[\hat{\delta}^{T}(t-1)(I_{N}-\frac{\gamma}{t}T_{G}^{-1}LT_{G})^{2}\hat{\delta}(t-1)]
\\
=&E\Big[\hat{\delta}^{T}(t-1)\text{diag}^{2}(1,1-\frac{\gamma}{t}\lambda_{2}\ldots,1-\frac{\gamma}{t}\lambda_{N})\hat{\delta}^{}(t-1)\Big]
\\
=&E\Big[\hat{\delta}^{(2)T}(t-1)\text{diag}^{2}(1-\frac{\gamma}{t}\lambda_{2}\ldots,1-\frac{\gamma}{t}\lambda_{N})\hat{\delta}^{(2)}(t-1)\Big]
\\
\le &\big(1-\frac{\gamma\lambda_{2}}{t}\big)^{2}V(t-1).
\tag{A3}
\end{align*}

Using the Cauchy-Schwarz inequality, we have
\begin{align*}\label{v2}
V_{2}(t)=&\frac{2\gamma}{t}E[\hat{\delta}^{T}(t-1)(I_{N}-\frac{\gamma}{t}T_{G}^{-1}LT_{G})
\\
&\cdot(T_{G}^{-1}J_{N}W)\hat{\varepsilon}(t-1)]
\\
=&\frac{2\gamma}{t}E[\hat{\delta}^{T}(t-1)\text{diag}^{}(1,1-\frac{\gamma}{t}\lambda_{2}\ldots,1-\frac{\gamma}{t}\lambda_{N})
\\
&\cdot(T_{G}^{-1}J_{N}W)\hat{\varepsilon}(t-1)]
\\
\le &\frac{2\gamma}{t} \Big(E\big[\hat{\delta}^{T}(t-1)\text{diag}^{2}(1,1-\frac{\gamma}{t}\lambda_{2},\ldots,1-\frac{\gamma}{t}\lambda_{N})
\\
&\cdot\hat{\delta}^{}(t-1)\big]E\big[\hat{\varepsilon}^{T}(t-1)W^{T}J_{N}^{T}J_{N}W\hat{\varepsilon}(t-1)\big]\Big)^{\frac{1}{2}}
\\
= &\frac{2\gamma}{t} \Big(E\big[\hat{\delta}^{(2)T}(t-1)\text{diag}^{2}(1-\frac{\gamma}{t}\lambda_{2},\ldots,1-\frac{\gamma}{t}\lambda_{N})
\\
&\cdot\hat{\delta}^{(2)}(t-1)\big]E\big[\hat{\varepsilon}^{T}(t-1)W^{T}J_{N}^{T}J_{N}W\hat{\varepsilon}(t-1)\big]\Big)^{\frac{1}{2}}
\\
\le & \frac{2\gamma}{t}\Big(\big(1-\frac{\gamma\lambda_{2}}{t}\big)^{2}V(t-1)\Big)^{\frac{1}{2}}\Big(\lambda_{G}R(t-1)\Big)^{\frac{1}{2}}
\\
\le & \frac{\gamma}{t}\Big(\lambda_{2}\big(1-\frac{\gamma\lambda_{2}}{t}\big)^{2}V(t-1)+\frac{\lambda_{G}}{\lambda_{2}}R(t-1)\Big).
\tag{A4}
\end{align*}

Combining \eqref{V-v1v2}-\eqref{v2}, we have
\begin{align*}
V(t)\le& \big(1+\frac{\gamma\lambda_{2}}{t}\big)\big(1-\frac{\gamma\lambda_{2}}{t}\big)^{2}V(t-1)+\frac{\gamma\lambda_{G}}{\lambda_{2} t}R(t-1)
\\
&+O\Big(\frac{1}{t^{2}}\Big)
\\
=& \big(1-\frac{\gamma\lambda_{2}}{t}-\frac{\gamma^{2}\lambda_{2}^{2}}{t^{2}}+\frac{\gamma^{3}\lambda_{2}^{3}}{t^{3}}\big)V(t-1)+\frac{\gamma\lambda_{G}}{\lambda_{2} t}R(t-1)
\\
&+O\Big(\frac{1}{t^{2}}\Big)
\\
\le & \big(1-\frac{\gamma\lambda_{2}}{t}\big)V(t-1)+\frac{\gamma\lambda_{G}/\lambda_{2}}{t}R(t-1)+O\Big(\frac{1}{t^{2}}\Big).
\end{align*}

\subsection{The proof of Lemma \ref{R}}\label{app-R}

By the definition of $\hat{\varepsilon}(t)$, Lemma \ref{bounded} and Remark \ref{projection}, we have 
\begin{align*}\label{R-r1r2r3}
R(t)=& E[\|\hat{\varepsilon}(t)\|^{2}]
\\
=& E[\|\hat{z}(t)-(Q\otimes K_{2})x(t)\|^{2}]
\\
=& E\Big[\Big\|\bm{\Pi_{M}}\Big\{\hat{z}(t-1)+\frac{\beta}{t}\big(\mathcal{F}(C-\hat{z}(t-1))-s(t)\big)\Big\}
\\
&-(Q\otimes K_{2})x(t)\Big\|^{2}\Big]
\\
\le & E\Big[\Big\|\hat{z}(t-1)+\frac{\beta}{t}\big(\mathcal{F}(C-\hat{z}(t-1))-s(t)\big)
\\
&-(Q\otimes K_{2})x(t)\Big\|^{2}\Big]
\\
=& E\Big[\Big\|\big(I_{d}-\frac{\gamma}{t}QW\big)\hat{\varepsilon}(t-1)+\frac{\gamma}{t}(QL\otimes K_{2})\delta(t-1)
\\
&+\frac{\beta}{t}\big(\mathcal{F}(C-\hat{z}(t-1))-s(t)\big)\Big\|^{2}\Big]
\\
=& E\Big[\Big\|\big(I_{d}-\frac{\gamma}{t}QW\big)\hat{\varepsilon}(t-1)+\frac{\gamma}{t}QLT_{G}\hat{\delta}(t-1)
\\
&+\frac{\beta}{t}\big(\mathcal{F}(C-\hat{z}(t-1))-s(t)\big)\Big\|^{2}\Big]
\\
=&E\Big[\hat{\varepsilon}^{T}(t-1)\big(I_{d}-\frac{\gamma}{t}QW\big)^{T}\big(I_{d}-\frac{\gamma}{t}QW\big)\hat{\varepsilon}(t-1)\Big]
\\
&+\frac{2\gamma}{t}E\Big[\hat{\varepsilon}^{T}(t-1)\big(I_{d}-\frac{\gamma}{t}QW\big)^{T}QLT_{G}\hat{\delta}(t-1)\Big]
\\
&+\frac{2\beta}{t}E\Big[\hat{\varepsilon}^{T}(t-1)\big(I_{d}-\frac{\gamma}{t}QW\big)^{T}\Big(\mathcal{F}\big(C-\hat{z}(t-1)\big)
\\
&-s(t)\big)\Big)\Big]+O\Big(\frac{1}{t^{2}}\Big).
\tag{A5}
\end{align*}
 Denote the first, second, and third items of \eqref{R-r1r2r3} as $R_{1}(t)$, $R_{2}(t)$, and $R_{3}(t)$, respectively, i.e., $R(t)\le R_{1}(t)+R_{2}(t)+R_{3}(t)+O\big(\frac{1}{t^{2}}\big)$. Then, we have
\begin{align*}\label{r1}
R_{1}(t)=&E\Big[\hat{\varepsilon}^{T}(t-1)\big(I_{d}-\frac{\gamma}{t}QW\big)^{T}\big(I_{d}-\frac{\gamma}{t}QW\big)\hat{\varepsilon}(t-1)\Big]
\\
&\le \big(1+\frac{\gamma\sqrt{\lambda_{QW}}}{t}\big)^{2}R(t-1).
\tag{A6}
\end{align*}

Similarly to \eqref{v2}, using the Cauchy-Schwarz inequality, gives
\begin{align*}\label{r2}
R_{2}(t)=&\frac{2\gamma}{t}E\Big[\hat{\varepsilon}^{T}(t-1)\big(I_{d}-\frac{\gamma}{t}QW\big)^{T}QLT_{G}\hat{\delta}(t-1)\Big]
\\
\le &\frac{2\gamma}{t}\Big(E\Big[\hat{\varepsilon}^{T}(t-1)\big(I_{d}-\frac{\gamma}{t}QW\big)^{T}\big(I_{d}-\frac{\gamma}{t}QW\big)
\\
&\cdot \hat{\varepsilon}(t-1)\Big]E[\hat{\delta}^{T}(t-1)T_{G}^{T}L^{T}Q^{T}QLT_{G}\hat{\delta}(t-1)]\Big)^{\frac{1}{2}}
\\
\le& \frac{2\gamma}{t}\Big(\big(1+\frac{\gamma\sqrt{\lambda_{QW}}}{t}\big)^{2}R(t-1)\cdot\lambda_{QL}V(t-1)\Big)^{\frac{1}{2}}
\\
\le & \frac{\gamma}{t}\Big(\frac{\lambda_{QL}\lambda_{2}}{\lambda_{G}}\big(1+\frac{\gamma\sqrt{\lambda_{QW}}}{t}\big)^{2}R(t-1)+\frac{\lambda_{G}}{\lambda_{2}}V(t-1)\Big).
\tag{A7}
\end{align*}

Moreover, by the communication protocol designed in Algorithm \ref{algorithm1}, it can be seen that $E[s(t)]=\mathcal{F}\big(C-(Q\otimes K_{2})x(t)\big)$ under Assumption \ref{assm-d}. Then, we get
\begin{align*}
R_{3}(t)=&\frac{2\beta}{t}E\Big[\hat{\varepsilon}^{T}(t-1)\big(I_{d}-\frac{\gamma}{t}QW\big)^{T}\Big(\mathcal{F}\big(C-\hat{z}(t-1)\big)
\\
&-s(t)\big)\Big]
\\
=&\frac{2\beta}{t}E\Big[\hat{\varepsilon}^{T}(t-1)\big(I_{d}-\frac{\gamma}{t}QW\big)^{T}\Big(\mathcal{F}\big(C-\hat{z}(t-1)\big)
\\
&-\mathcal{F}\big(C-(Q\otimes K_{2})x(t)\big)\Big)\Big]
\end{align*}
And, by Lagrange's Mean Value Theorem, we have 
\begin{align*}
&F\big(c_{ij}-\hat{z}_{ij}(t-1)\big)-F\big(c_{ij}-K_{2}x_{j}(t)\big)
\\
=&-f(\zeta_{ij}(t))\big(\hat{z}_{ij}(t-1)-K_{2}x_{j}(t)\big),
\end{align*}
 where $\zeta_{ij}(t)$ is between $c_{ij}-\hat{z}_{ij}(t-1)$ and $c_{ij}-K_{2}x_j(t).$

Let $\zeta(t)=[\zeta_{1r_{1}}(t),\ldots,\zeta_{ir_{s}}(t),\ldots,\zeta_{Nr_{d_{1}+\ldots+d_{N}}}(t)]^T,$ with $r_{s}$ representing the neighbor $j$ of agent $i$, i.e., $\zeta_{ir_{s}}(t)=\zeta_{ij}(t)$. Denote $\text{diag} (\vec{f}(\zeta(t)))=\text{diag}\{f(\zeta_{1r_{1}}(t)),\ldots,f(\zeta_{Nr_{d}}(t))\}\in\mathds{R}^{d\times d}$ as a diagonal matrix generated by each element of the vector $\vec{f}(\zeta(t))=[f(\zeta_{1r_{1}}(t),\ldots,f(\zeta_{Nr_{d}}(t))]^{T}\in \mathds{R}^{d}$. By Lemma \ref{bounded}, $\zeta_{ij}(t)$ is bounded. Since the function $f(\cdot)$ is  continuous, we have $\text{diag}\big(\vec{f}(\zeta(t))\big)\ge f_{M}\cdot I_{d}$ and 

\begin{align*}\label{r3}
R_{3}(t)=&-\frac{2\beta}{t}E\Big[\hat{\varepsilon}^{T}(t-1)\big(I_{d}-\frac{\gamma}{t}QW\big)^{T}
\text{diag}\big(\vec{f}(\zeta(t))\big)
\\
&\cdot\big(\hat{z}(t-1)-(Q\otimes K_{2})x(t)\big)
\\
=& -\frac{2\beta}{t}E\Big[\hat{\varepsilon}^{T}(t-1)\big(I_{d}-\frac{\gamma}{t}QW\big)^{T}\text{diag}\big(\vec{f}(\zeta(t))\big)
\\
&\cdot \Big(\hat{\varepsilon}(t-1)-\frac{\gamma}{t}\big(QW\hat{\varepsilon}(t-1)-QLT_{G}\hat{\delta}(t-1)\big)\Big)\Big]
\\
=& -\frac{2\beta}{t}E\big[\hat{\varepsilon}^{T}(t-1)\text{diag}\big(\vec{f}(\zeta(t))\big)\hat{\varepsilon}(t-1)\big]+O\Big(\frac{1}{t^{2}}\Big)
\\
\le &-\frac{2\beta f_{M}}{t}R(t-1)+O\Big(\frac{1}{t^{2}}\Big).
\tag{A8}
\end{align*}

Considering \eqref{R-r1r2r3} with \eqref{r1}-\eqref{r3}, we can obtain that
\begin{align*}
R(t)\le  &\Big(1-\frac{2\beta f_{M}-\gamma\alpha}{t}\Big)R(t-1)+\frac{\gamma\lambda_{G}/\lambda_{2}}{t}V(t-1)
\\
&+O\Big(\frac{1}{t^{2}}\Big).
\end{align*}

\section{The proof about original states}

\subsection{The proof of Lemma \ref{kx-x}}\label{app-kx}
By the Brunovsky canonical form \eqref{canonical} and \eqref{update}, we have
\begin{align*}
x_{i}(t+1)=&(A+BK_{1})x_{i}(t)+\frac{\gamma}{t+1}B\sum_{j\in N_{i}}(\hat{z}_{ij}(t)-K_{2}x_{j}(t))
\\
=&\begin{bmatrix}
0         &1         &\cdots        &0\\
\vdots & \vdots &\ddots &\vdots\\
0         & 0         & \cdots & 1      \\
b_{1} & b_{2}-b_{1}&\cdots&1-b_{n-1}
\end{bmatrix}x_{i}(t)
\\
+&\frac{\gamma}{t+1}
\begin{bmatrix}
0\\
\vdots\\
0\\
1
\end{bmatrix}
\sum_{j\in N_{i}}(\hat{z}_{ij}(t)-K_{2}x_{j}(t)).
\end{align*}
Then, there is $x_{i1}(t+1)=x_{i2}(t),\ldots,x_{i(n-1)}(t+1)=x_{in}(t)$, i.e.,
$$
\mathds{D}x_{ij}(t)=x_{i(j+1)}(t),\quad j=1,\ldots,n-1.
$$

Since $K_{2}x_{i}(t)=b_{1}x_{i1}(t)+\ldots,+b_{n-1}x_{i(n-1)}(t)+x_{in}(t)$, we can obatin that
$$
\mathds{D}^{n-1}K_{2}x_{i}(t)=b_{1}x_{in}(t)+\ldots+b_{n-1}\mathds{D}^{n-2}x_{in}(t)+\mathds{D}^{n-1}x_{in}(t).
$$

\subsection{The proof of Lemma \ref{difference-equation}}\label{app-de}
i) Let $\xi_{1}(t)\triangleq\prod_{i=2}^{n-1}(\mathds{D}-r_{i})\xi(t)$. Then, $\mathds{D}\xi_{1}(t)=r_{1}\xi(t)+\eta(t)$, i.e., $\xi_{1}(t+1)=r_{1}\xi_{1}(t)+\eta(t)$, and thus,
\begin{align*}\label{xi1}
&\xi_{1}(t)=r_{1}^{t}\xi_{1}(0)+\sum_{i=0}^{t-1}r_{1}^{i}\eta(t-1-i)
\\
=& r_{1}^{t}\xi_{1}(0)+\sum_{i=0}^{t-1}r_{1}^{i}\big(\eta(t-1-i)-\eta^{*}\big)+\sum_{i=0}^{t-1}r_{1}^{i}\eta^{*},
\tag{B1}
\end{align*}
where
\begin{align*}\label{eta-eta*}
&E\Big[\Big(\sum_{i=0}^{t-1}r_{1}^{i}\big(\eta(t-1-i)-\eta^{*}\big)\Big)^{2}\Big]
\\
=&E\Big[ \sum_{i=0}^{t-1}r_{1}^{2i}\big(\eta(t-1-i)-\eta^{*}\big)^{2}+\sum_{i=0}^{t-1}\sum_{j\neq i}r_{1}^{i+j}\big(\eta(t-1-i)
\\
&-\eta^{*}\big)\big(\eta(t-1-j)-\eta^{*}\big)\Big]
\\
=&\sum_{i=0}^{t-1}r_{1}^{2i}E\big[\big(\eta(t-1-i)-\eta^{*}\big)^{2}\big]+\sum_{i=0}^{t-1}\sum_{j\neq i}r_{1}^{i+j}E\big[\big(\eta(t-1
\\
&-i)-\eta^{*}\big)\big(\eta(t-1-j)-\eta^{*}\big)\big]
\\
\le &\sum_{i=0}^{t-1}r_{1}^{2i}E\big[\big(\eta(t-1-i)-\eta^{*}\big)^{2}\big]+\sum_{i=0}^{t-1}\sum_{j\neq i}r_{1}^{i+j}\Big(E\big[\big(\eta(t-1
\\
&-i)-\eta^{*}\big)^{2}\big]E\big[\big(\eta(t-1-j)-\eta^{*}\big)^{2}\big]\Big)^{\frac{1}{2}}
\\
=&\Big(\sum_{i=0}^{t-1}r_{1}^{i}\Big(E\big[\big(\eta(t-1-i)-\eta^{*}\big)^{2}\big]\Big)^{\frac{1}{2}}\Big)^{2}.
\tag{B2}
\end{align*}

To calculate the above formula \eqref{eta-eta*}, without loss of generality, assume that $r_{1}\ge0$. Then,
\begin{align*}\label{t-i}
&\lim_{t\to\infty}\Big|\sum_{i=0}^{t-1}r_{1}^{i}\Big(E\big[\big(\eta(t-1-i)-\eta^{*}\big)^{2}\big]\Big)^{\frac{1}{2}}\Big|
\\
=&\lim_{t\to\infty}\frac{\sum_{i=0}^{t-1}r_{1}^{-i}\Big(E\big[\big(\eta(i)-\eta^{*}\big)^{2}\big]\Big)^{\frac{1}{2}}}{r_{1}^{1-t}}.
\tag{B3}
\end{align*}

Since $0\le r_{1}<1$, we know that  $r_{1}^{1-t}$ is a strictly monotone and divergent sequence. 
Then, by Stolz-Ces\`aro theorem,
\begin{align*}
&\lim_{t\to\infty}\frac{\sum_{i=0}^{t-1}r_{1}^{-i}\Big(E\big[\big(\eta(i)-\eta^{*}\big)^{2}\big]\Big)^{\frac{1}{2}}}{r_{1}^{1-t}}
\\
=&\lim_{t\to\infty}\frac{r_{1}^{-t}\Big(E\big[\big(\eta(t)-\eta^{*}\big)^{2}\big]\Big)^{\frac{1}{2}}}{r_{1}^{-t}-r_{1}^{1-t}}
\\
=& \lim_{t\to\infty}\frac{\Big(E\big[\big(\eta(t)-\eta^{*}\big)^{2}\big]\Big)^{\frac{1}{2}}}{1-r_{1}}
\\
=&0.
\end{align*}

Therefore, by \eqref{eta-eta*} we have $$\lim_{t\to\infty}E\Big[\Big(\sum_{i=0}^{t-1}r_{1}^{i}\big(\eta(t-1-i)-\eta^{*}\big)\Big)^{2}\Big]=0.$$

Since $|r_{1}|<1$ and \eqref{xi1}, we have
\begin{align*}
&\lim_{t\to\infty}E\big[\big(\xi_{1}(t)-\frac{1}{1-r_{1}}\eta^{*}\big)^{2}\big]
\\
=&\lim_{t\to\infty}E\Big[\Big(r_{1}^{t}\xi_{1}(0)+\sum_{i=0}^{t-1}r_{1}^{i}\big(\eta(t-1-i)-\eta^{*}\big)
\\
&+\sum_{i=0}^{t-1}r_{1}^{i}\eta^{*}-\frac{1}{1-r_{1}}\eta^{*}\Big)^{2}\Big]
\\
=&\lim_{t\to\infty}E\Big[\Big(r_{1}^{t}\xi_{1}(0)+\sum_{i=0}^{t-1}r_{1}^{i}\big(\eta(t-1-i)-\eta^{*}\big)
-\frac{r_{1}^{t}\eta^{*}}{1-r_{1}}\Big)^{2}\Big]
\\
=&\lim_{t\to\infty}r_{1}^{2t}\xi_{1}^{2}(0)+2\lim_{t\to\infty}r_{1}^{t}\xi_{1}(0)E\big[\sum_{i=0}^{t-1}r_{1}^{i}\big(\eta(t-1-i)-\eta^{*}\big)\big]
\\
&+\lim_{t\to\infty}E\Big[\Big(\sum_{i=0}^{t-1}r_{1}^{i}\big(\eta(t-1-i)-\eta^{*}\big)\Big)^{2}\Big]
\\
&-2\lim_{t\to\infty}\frac{r_{1}^{t}\eta^{*}}{1-r_{1}}E\big[\sum_{i=0}^{t-1}r_{1}^{i}\big(\eta(t-1-i)-\eta^{*}\big)\big]
\\
&-2\lim_{t\to\infty}\frac{r_{1}^{2t}\eta^{*}\xi_{1}(0)}{1-r_{1}}+\lim_{t\to\infty}(\frac{r_{1}^{t}\eta^{*}}{1-r_{1}})^{2}
\\
=& 0.
\end{align*}

Thus, $\lim_{t\to\infty}E\big[\big(\xi_{1}(t)-\frac{1}{1-r_{1}}\eta^{*}\big)^{2}\big]=0$. Similarly, denoting $\xi_{i}(t)\triangleq\prod_{j=i+1}^{n-1}(\mathds{D}-r_{j})\xi(t)$, repeating the procedure, we have
$\lim_{t\to\infty}E\big[\big(\xi_{i}(t)-\frac{1}{\prod_{j=1}^{i}1-r_{j}}\eta^{*}\big)^{2}\big]=0$, for $i=1,\ldots,n-1$. By the definition of $\xi_{i}(t)$, we know that $\xi(t)=\xi_{n-1}(t)$. Thus,
$$
\lim_{t\to\infty}E\big[\big(\xi(t)-\xi^{*}\big)^{2}\big]=0,
$$
where $\xi^{*}=\frac{1}{\prod_{j=1}^{n-1}1-r_{j}}\eta^{*}$.

ii) By Part i), we know that $\xi_{i}(t)$ converges to $\frac{1}{\prod_{j=1}^{i}1-r_{j}}\eta^{*}$ in the mean square when $\eta(t)$ converges to $\eta^{*}$ in the mean square. Now we calculate the convergence rate of $\xi_{i}(t)$ under the condition that $\eta(t)$ converges to $\eta^{*}$ at the rate of $O(\frac{1}{t})$.

Firstly, we calculate $\big|\sum_{i=0}^{t-1}r_{1}^{i}\big(E\big[\big(\eta(t-1-i)-\eta^{*}\big)^{2}\big]\big)^{\frac{1}{2}}\big|$. Without loss of generality, assume that $r_{1}\ge 0$. Then, \eqref{t-i} is obtained.

Since $E[(\eta(t)-\eta^{*})^{2}]=O(\frac{1}{t})$, there exists $M_{\eta}>0$ such that $E[(\eta(t)-\eta^{*})^{2}]\le \frac{M_{\eta}}{t}$. Then, 
\begin{align*}\label{int-M}
&\frac{\sum_{i=0}^{t-1}r_{1}^{-i}\Big(E\big[\big(\eta(i)-\eta^{*}\big)^{2}\big]\Big)^{\frac{1}{2}}}{r_{1}^{1-t}}
\\
\le &\frac{\sum_{i=0}^{t-1}r_{1}^{-i}M_{\eta}\sqrt{i}}{r_{1}^{1-t}}.
\tag{B4}
\end{align*}

For sequence $r_{1}^{1-t}/\sqrt{t}$, it can be seen that $r_{1}^{1-t}/\sqrt{t}$ is strictly monotone and divergent when $t\ge -1/\ln r_{1}$.
Then, by Stolz-Ces\`aro theorem,
\begin{align*}
&\lim_{t\to\infty}\frac{\sum_{i=0}^{t-1}r_{1}^{-i}M_{\eta}\sqrt{i}}{r_{1}^{1-t}/\sqrt{t}}
\\
=&\lim_{t\to\infty}\frac{r_{1}^{-t}M_{\eta}/\sqrt{t}}{r_{1}^{-t}/\sqrt{t+1}-r_{1}^{1-t}/\sqrt{t}}
\\
=&\frac{\sqrt{M_{\eta}}}{1- r_{1}},
\end{align*}
which implies $\frac{\sum_{i=0}^{t-1}r_{1}^{-i}M_{\eta}\sqrt{i}}{r_{1}^{1-t}/\sqrt{t}}=O(1)$, or equivalently, $\frac{\sum_{i=0}^{t-1}r_{1}^{-i}M_{\eta}\sqrt{i}}{r_{1}^{1-t}}=O(\frac{1}{\sqrt{t}})$.  This together with \eqref{eta-eta*}-\eqref{int-M} gives
$$
E\Big[\Big(\sum_{i=0}^{t-1}r_{1}^{i}\big(\eta(t-1-i)-\eta^{*}\big)\Big)^{2}\Big]=O\Big(\frac{1}{t}\Big).
$$

Thus, $E\big[\big|\sum_{i=0}^{t-1}r_{1}^{i}\big(\eta(t-1-i)-\eta^{*}\big)\big|\big]=O(1/\sqrt{t})$ can also be obtained.
By \eqref{xi1}, we can get that
\begin{align*}
&E\big[\big(\xi_{1}(t)-\frac{1}{1-r_{1}}\eta^{*}\big)^{2}\big]
\\
=&r_{1}^{2t}\xi_{1}^{2}(0)+2r_{1}^{t}\xi_{1}(0)E\big[\sum_{i=0}^{t-1}r_{1}^{i}\big(\eta(t-1-i)-\eta^{*}\big)\big]
\\
&+E\Big[\Big(\sum_{i=0}^{t-1}r_{1}^{i}\big(\eta(t-1-i)-\eta^{*}\big)\Big)^{2}\Big]-2\frac{r_{1}^{2t}\eta^{*}\xi_{1}(0)}{1-r_{1}}
\\
&-2\frac{r_{1}^{t}\eta^{*}}{1-r_{1}}E\big[\sum_{i=0}^{t-1}r_{1}^{i}\big(\eta(t-1-i)-\eta^{*}\big)\big]+(\frac{r_{1}^{t}\eta^{*}}{1-r_{1}})^{2}
\\
=&O(r_{1}^{2t})+O\Big(\frac{r_{1}^{t}}{\sqrt{t}}\Big)+O\Big(\frac{1}{t}\Big)
\\
=&O\Big(\frac{1}{t}\Big).
\end{align*}

Similarly to the proof of Part i), repeating the procedure, we have
$$E\big[\big(\xi(t)-\xi^{*}\big)^{2}\big]=O\Big(\frac{1}{t}\Big).$$

\section{The proof in the switching topology case}

\subsection{The proof of Lemma \ref{Vi}}\label{app-Vi}
Repeating the analysis process in Appendix \ref{app-V}, we can conclude that 

\begin{align*}\label{delta(t)-t}
\delta(t)=&(I_{N}\otimes \tilde{A}-\frac{\gamma}{t}L_{m(t-1)}\otimes BK_{2})\delta(t-1)
\\
&+\frac{\gamma}{t}(J_{N}W_{m(t-1)}\otimes B)\hat{\varepsilon}(t-1),
\tag{C1}
\end{align*}
and
\begin{align*}\label{Vi-v1v2}
V(t)=&E[\hat{\delta}^{T}(t-1)(I_{N}-\frac{\gamma}{t}T_{G}^{-1}L_{m(t-1)}T_{G})^{2}\hat{\delta}(t-1)]
\\
&+\frac{2\gamma}{t}E[\hat{\delta}^{T}(t-1)(I_{N}-\frac{\gamma}{t}T_{G}^{-1}L_{m(t-1)}T_{G})
\\
&\cdot(T_{G}^{-1}J_{N}W_{m(t-1)})\hat{\varepsilon}(t-1)]+O\Big(\frac{1}{t^{2}}\Big).
\tag{C2}
\end{align*}
It can be seen that the only differences between \eqref{delta(t)} and \eqref{delta(t)-t}, \eqref{V-v1v2} and \eqref{Vi-v1v2}
are that the fixed matrices $L$ and $W$ have been modified into switching matrices $L_{m(t)}$ and $W_{m(t)}$.

Then, as Remark \ref{E[Lt]=L} says, by the property of conditional expectation and Lemma \ref{expectation}, we can get that $E[\hat{\delta}^{T}(t)L_{m(t)}\hat{\delta}(t)]=E[E[\hat{\delta}^{T}(t)L_{m(t)}\hat{\delta}(t)|\hat{\delta}(t)]]=E[\hat{\delta}^{T}(t)\check{L}\hat{\delta}(t)]+O(\lambda_{L}^{t})$, thus dealing with the switching matrix $L_{m(t)}$ in the following. To be specific, we have
\begin{align*}\label{v1-t}
V_{1}(t)=&E[\hat{\delta}^{T}(t-1)(I_{N}-\frac{\gamma}{t}T_{G}^{-1}L_{m(t-1)}T_{G})^{2}\hat{\delta}(t-1)]
\\
=&E[\hat{\delta}^{T}(t-1)(I_{N}-\frac{2\gamma}{t}T_{G}^{-1}L_{m(t-1)}T_{G})\hat{\delta}(t-1)]
\\
&+O\Big(\frac{1}{t^{2}}\Big)
\\
=&E\big[\hat{\delta}^{T}(t-1)\big(I_{N}-\frac{2\gamma}{t}T_{G}^{-1}\big(\check{L}+O(\lambda_{L}^{t})\big)T_{G}\big)\hat{\delta}(t-1)\big]
\\
&+O\Big(\frac{1}{t^{2}}\Big)
\\
=&E\big[\hat{\delta}^{T}(t-1)\big(I_{N}-\frac{2\gamma}{t}T_{G}^{-1}\check{L}T_{G}\big)\hat{\delta}(t-1)\big]+O\Big(\frac{1}{t^{2}}\Big),
\end{align*}
which is similar to the form of $V_{1}(t)$ in Appendix \ref{app-V}. 

At this point, the switching matrix $L_{m(t)}$ has been transformed into the fixed matrix $\check{L}$. 
By repeating the proof procedure from Lemma \ref{V} in Appendix \ref{app-V}, we can derive the following results.
\begin{align*}
V_{1}(t)
\le &\big(1-\frac{2\gamma\lambda_{2}}{t}\big)V(t-1)+O\Big(\frac{1}{t^{2}}\Big),
\tag{C3}
\end{align*}

\begin{align*}\label{v2-t}
V_{2}(t)
\le & \frac{\gamma}{t}\Big(\lambda_{2}\big(1-\frac{2\gamma\lambda_{2}}{t}\big)V(t-1)+\frac{\lambda_{G}}{\lambda_{2}}R(t-1)\Big)+O(\frac{1}{t^{2}}),
\tag{C4}
\end{align*}
where  the definition of $\lambda_{G}=\mathop{\max}\limits_{1\le i \le h}\{\|T_{G}^{-1}J_{N}W_{i}\|^{2}\}$ is different with the fixed topology case and $\lambda_{2}$ is the minimum non-negative eigenvalue of the union topology $G'$.

Combining \eqref{Vi-v1v2}-\eqref{v2-t}, we have
\begin{align*}
V(t)\le  \big(1-\frac{\gamma\lambda_{2}}{t}\big)V(t-1)+\frac{\gamma\lambda_{G}/\lambda_{2}}{t}R(t-1)+O\Big(\frac{1}{t^{2}}\Big),
\end{align*}
which is consistent with the fixed topology case.

\subsection{The proof of Lemma \ref{Ri}}\label{app-Ri}
Changing the fixed matrices $L, W$ into $L_{m(t)}, W_{m(t)}$ and $P_{m(t)}$ and repeating the analysis process in Appendix \ref{app-R}, we can conclude that
\begin{align*}\label{Ri-r1r2r3}
R(t)=& E[\|\hat{\varepsilon}(t)\|^{2}]
\\
=& E\Big[\Big\|\bm{\Pi_{M}}\Big\{\hat{z}(t-1)+\frac{\beta}{t}P_{m(t)}\big(\mathcal{F}(C-\hat{z}(t-1))
\\
&-s(t)\big)\Big\}-(Q\otimes K_{2})x(t)\Big\|^{2}\Big]
\\
=&E\Big[\hat{\varepsilon}^{T}(t-1)\big(I_{d}-\frac{\gamma}{t}QW_{m(t-1)}\big)^{T}\big(I_{d}-\frac{\gamma}{t}QW_{m(t-1)}\big)
\\
&\cdot\hat{\varepsilon}(t-1)\Big]+\frac{2\gamma}{t}E\Big[\hat{\varepsilon}^{T}(t-1)\big(I_{d}-\frac{\gamma}{t}QW_{m(t-1)}\big)^{T}Q
\\
&\cdot L_{m(t-1)} T_{G}\hat{\delta}(t-1)\Big]+\frac{2\beta}{t}E\Big[\hat{\varepsilon}^{T}(t-1)\big(I_{d}-\frac{\gamma}{t}Q
\\
&\cdot W_{m(t-1)}\big)^{T}P_{m(t)}\Big(\mathcal{F}\big(C-\hat{z}(t-1)\big)-s(t)\big)\Big)\Big]
\\
&+O\Big(\frac{1}{t^{2}}\Big).
\tag{C5}
\end{align*}
 Denote the first, second, and third items of \eqref{Ri-r1r2r3} as $R_{1}(t)$, $R_{2}(t)$, and $R_{3}(t)$, respectively, i.e., $R(t)\le R_{1}(t)+R_{2}(t)+R_{3}(t)+O\big(\frac{1}{t^{2}}\big)$. Then, we have
\begin{align*}\label{r1-t}
R_{1}(t)
&\le \big(1+\frac{\gamma\sqrt{\lambda_{QW}}}{t}\big)^{2}R(t-1),
\tag{C6}
\end{align*}
where $\lambda_{QW}=\mathop{\max}\limits_{1\le i \le h}\{\|QW_{i}\|^{2}\}$ is different with the fixed topology case.

Similarly to \eqref{r2}, we can get that
\begin{align*}\label{r2-t}
R_{2}(t)\le & \frac{\gamma}{t}\Big(\frac{\lambda_{QL}\lambda_{2}}{\lambda_{G}}\big(1+\frac{\gamma\sqrt{\lambda_{QW}}}{t}\big)^{2}R(t-1)+\frac{\lambda_{G}}{\lambda_{2}}V(t-1)\Big),
\tag{C7}
\end{align*}
where $\lambda_{QL}=\mathop{\max}\limits_{1\le i \le h}\{\|QL_{i}T_{G}\|^{2}\}$ and $\lambda_{2}$ are different with  \eqref{r2}.

Subsequently, using the conclusion in Appendix \ref{app-R}, since $\sum_{i=1}^{h}\pi_{i}P_{i}\ge \pi_{\min}I_{N}$, it can be seen that
\begin{align*}\label{r3-t}
R_{3}(t)=& -\frac{2\beta}{t}E\big[\hat{\varepsilon}^{T}(t-1)P_{m(t)}\text{diag}\big(\vec{f}(\zeta(t))\big)\hat{\varepsilon}(t-1)\big]
\\
&+O\Big(\frac{1}{t^{2}}\Big)
\\
=& -\frac{2\beta}{t}E\big[\hat{\varepsilon}^{T}(t-1)(\sum_{i=1}^{h}\pi_{i}P_{i})\text{diag}\big(\vec{f}(\zeta(t))\big)\hat{\varepsilon}(t-1)\big]
\\
&+O\Big(\frac{1}{t^{2}}\Big)
\\
\le &-\frac{2\beta f_{M}\pi_{\min}}{t}R(t-1)+O\Big(\frac{1}{t^{2}}\Big).
\tag{C8}
\end{align*}

Considering \eqref{Ri-r1r2r3} with \eqref{r1-t}-\eqref{r3-t}, we can obtain that
\begin{align*}
R(t)\le  &\Big(1-\frac{2\beta f_{M}\pi_{\min}-\gamma\alpha}{t}\Big)R(t-1)+\frac{\gamma\lambda_{G}/\lambda_{2}}{t}V(t-1)
\\
&+O\Big(\frac{1}{t^{2}}\Big),
\end{align*}
which has a new constant $\pi_{\min}$ that corresponding with the switching topologies.

\section*{References}
\bibliographystyle{ieeetr}
\bibliography{ref}

\begin{IEEEbiography}[{\includegraphics[width=1in,height=1.25in,clip,keepaspectratio]{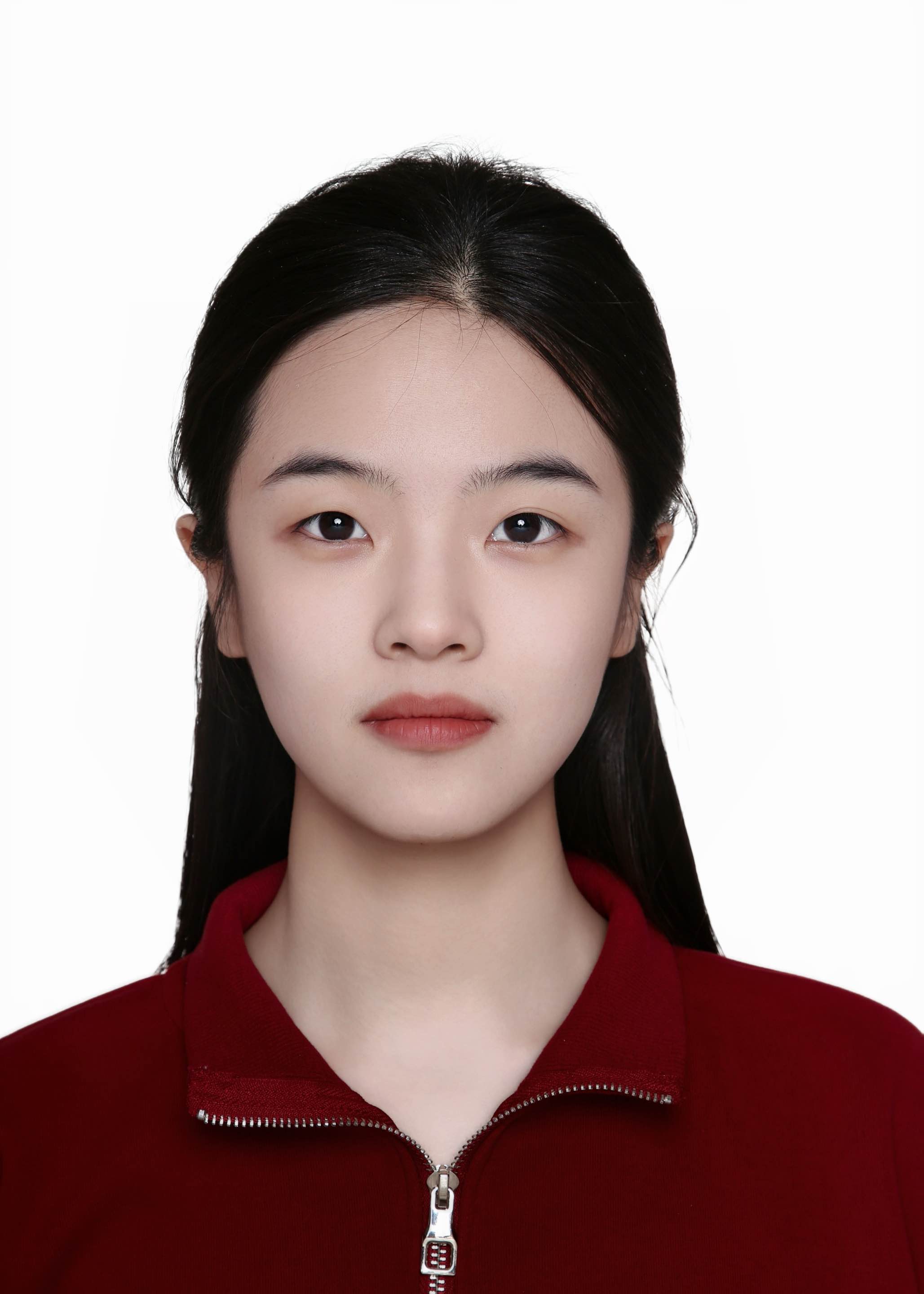}}]{Ru An}
received the B.S. degree in mathematics from Shandong University, Jinan, in 2022. She is currently working toward the Ph.D. degree majoring in system theory at the Academy of Mathematics and Systems Science (AMSS), Chinese Academy of Science (CAS), Beijing, China. 

Her research interests include quantized systems and multi-agent systems.
\end{IEEEbiography}

\begin{IEEEbiography}[{\includegraphics[width=1in,height=1.25in,clip,keepaspectratio]{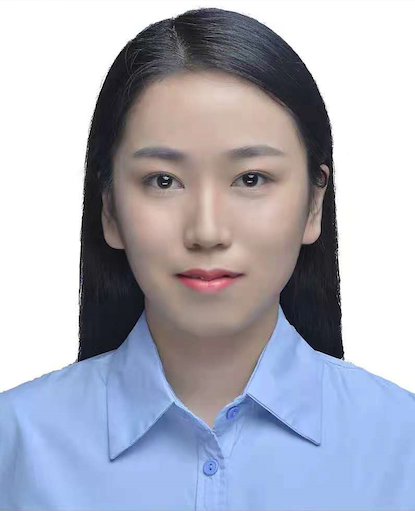}}]{Ying Wang}
received the B.S. degree in Mathematics from Wuhan University, Wuhan, China, in 2017, and the Ph.D. degree in system theory from the Academy of Mathematics and Systems Science (AMSS), Chinese Academy of Science (CAS), Beijing, China, in 2022. Now she is a postdoctoral fellow at the AMSS, CAS, and KTH Royal Institute of Technology, Stockholm, Sweden.

 Her research interests include identification and control of quantized systems, and multi-agent systems. \end{IEEEbiography}

\begin{IEEEbiography}[{\includegraphics[width=1in,height=1.25in,clip,keepaspectratio]{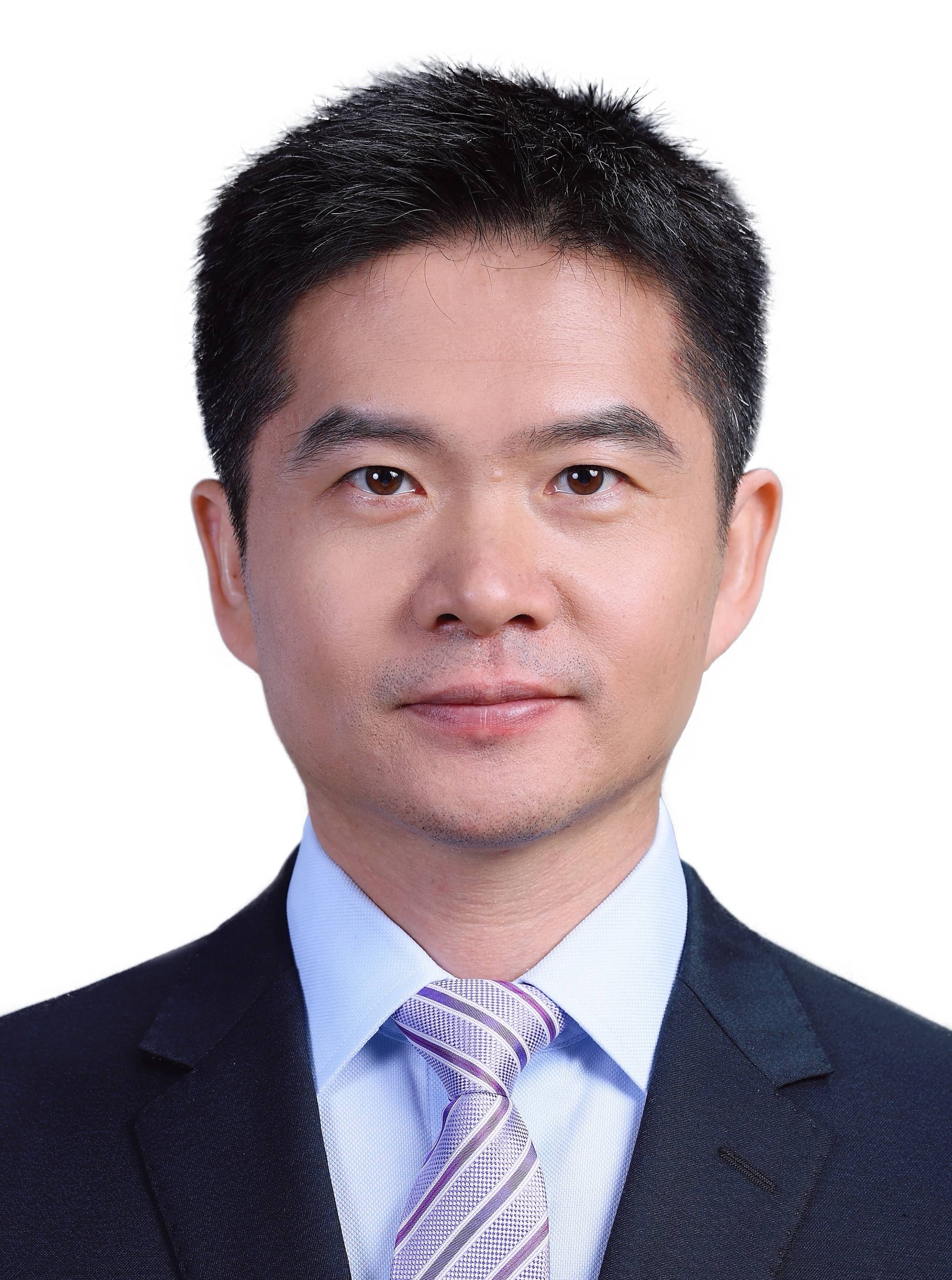}}]{Yanlong Zhao}
 received the B.S. degree in mathematics from Shandong University, Jinan, China, in 2002, and the Ph.D. degree in systems theory from the Academy of Mathematics and Systems Science (AMSS), Chinese Academy of Sciences (CAS), Beijing, China, in 2007. Since 2007, he has been with the AMSS, CAS, where he is currently a full Professor. 
 
His research interests include identification and control of quantized systems, information theory and modeling of financial systems.
 
He has been a Deputy Editor-in-Chief {\em Journal of Systems and Science and Complexity}, an Associate Editor of {\em Automatica}, {\em SIAM Journal on Control and Optimization}, and {\em IEEE Transactions on Systems, Man and Cybernetics: Systems}. He served as a Vice-President of Asian Control Association, and is now a Vice General Secretary of Chinese Association of Automation (CAA), a Vice-Chair of Technical Committee on Control Theory (TCCT), CAA, and a Vice-President of IEEE CSS Beijing Chapter.
\end{IEEEbiography}

\begin{IEEEbiography}[{\includegraphics[width=1in,height=1.25in,clip,keepaspectratio]{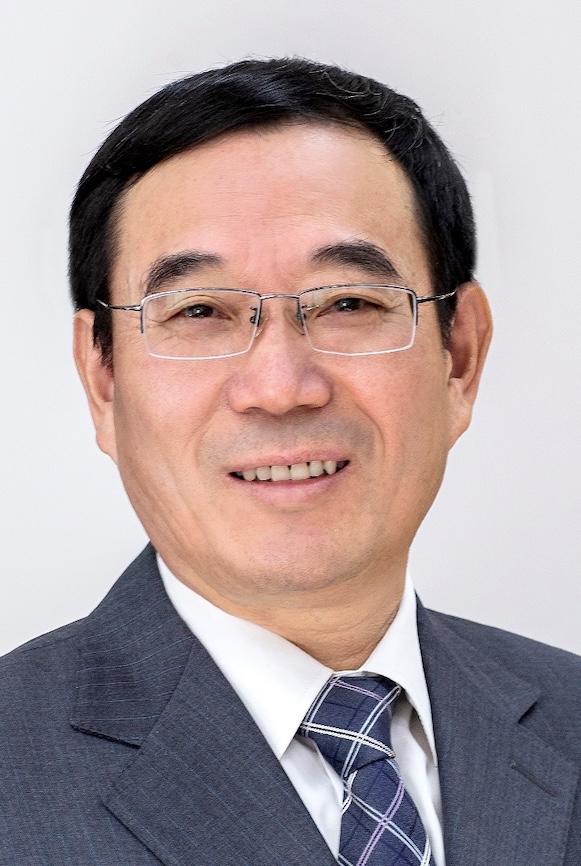}}]{Ji-Feng Zhang} 
received the B.S. degree in mathematics from Shandong University, China, in 1985, and the Ph.D. degree from the Institute of Systems Science, Chinese Academy of Sciences (CAS), China, in 1991. 
He is now with the Zhongyuan University of Technology and the Academy of Mathematics and Systems Science, CAS.

His current research interests include system modeling, adaptive control, stochastic systems, and multi-agent systems.

He is an IEEE Fellow, IFAC Fellow, CAA Fellow, SIAM Fellow, member of the European Academy of Sciences and Arts,  and Academician of the International Academy for Systems and Cybernetic Sciences. He received the Second Prize of the State Natural Science Award of China in 2010 and 2015, respectively. He was a Vice-Chair of the IFAC Technical Board, member of the Board of Governors, IEEE Control Systems Society; Convenor of Systems Science Discipline, Academic Degree Committee of the State Council of China; Vice-President of the Chinese Association of Automation, the Systems Engineering Society of China, and the Chinese Mathematical Society. He served as Editor-in-Chief, Deputy Editor-in-Chief or Associate Editor for more than 10 journals, including {\em Science China Information Sciences}, {\em IEEE Transactions on Automatic Control} and {\em SIAM Journal on Control and Optimization} etc.
\end{IEEEbiography}

\end{document}